\theoremstyle{plain}
\newtheorem{mylemma}{Lemma}
\newtheorem{introthm}{Theorem}
\newcommand{\Patrascu}{P\v{a}tra\c{s}cu}
\newcommand{\modulo}{\operatorname{mod}{}}
\newcommand{\remove}[1]{}
\newcommand{\CLB}{CLB}
\newcommand{\ThreeSUM}{\textsf{3SUM}}
\newcommand{\ConvolutionThreeSUM}{\textsf{Convolution\ThreeSUM}}
\newcommand{\DiffConvolutionThreeSUM}{\textsf{DiffConv\ThreeSUM}}
\newtheorem{observation}{Observation}
\begin{document}

\title{How Hard is it to Find (Honest) Witnesses?}
\date{}

\author[1]{Isaac Goldstein\thanks{This research is supported by the Adams Foundation of the Israel Academy of Sciences and Humanities}}
\author[2]{Tsvi Kopelowitz\thanks{This research is supported by NSF grants CCF-1217338, CNS-1318294, and CCF-1514383}}
\author[1]{Moshe Lewenstein\thanks{This research is supported by a BSF grant 2010437 and a GIF grant 1147/2011.}}
\author[1]{Ely Porat}
\affil[1]{Bar-Ilan University \\  \texttt{\{goldshi,moshe,porately\}@cs.biu.ac.il}}
\affil[2]{University of Michigan \\  \texttt{kopelot@gmail.com}}
\authorrunning{I. Goldstein, T. Kopelowitz, M. Lewenstein and E. Porat}
\renewcommand\Authands{ and }

\maketitle

\thispagestyle{empty}

\begin{abstract}
In recent years much effort was put into developing polynomial-time conditional lower bounds for algorithms and data structures in both static and dynamic settings. Along these lines we suggest a framework for proving conditional lower bounds based on the well-known 3SUM conjecture. Our framework creates a \emph{compact representation} of an instance of the 3SUM problem using hashing and domain specific encoding. This compact representation admits false solutions to the original 3SUM problem instance which we reveal and eliminate until we find a true solution. In other words, from all \emph{witnesses} (candidate solutions) we figure out if an \emph{honest} one (a true solution) exists. This enumeration of witnesses is used to prove conditional lower bound on \emph{reporting} problems that generate all witnesses. In turn, these reporting problems are reduced to various decision problems. These help to enumerate the witnesses by constructing appropriate search data structures. Hence, 3SUM-hardness of the decision problems is deduced.

We utilize this framework to show conditional lower bounds for several variants of convolutions, matrix multiplication and string problems. Our framework uses a strong connection between all of these problems and the ability to find \emph{witnesses}.

Specifically, we prove conditional lower bounds for computing partial outputs of convolutions and matrix multiplication for sparse inputs. These problems are inspired by the open question raised by Muthukrishnan 20 years ago \cite{Muthukrishnan95}. The lower bounds we show rule out the possibility (unless the 3SUM conjecture is false) that almost linear time solutions to sparse input-output convolutions or matrix multiplications exist. This is in contrast to standard convolutions and matrix multiplications that have, or assumed to have, almost linear solutions.

Moreover, we improve upon the conditional lower bounds of Amir et al.~\cite{ACLL14} for histogram indexing, a problem that has been of much interest recently. The conditional lower bounds we show apply for both reporting and decision variants. For the well-studied decision variant, we show a full tradeoff between preprocessing and query time for every alphabet size > 2. At an extreme, this implies that no solution to this problem exists with subquadratic preprocessing time and $\tilde{O}(1)$ query time for every alphabet size > 2, unless the 3SUM conjecture is false. This is in contrast to a recent result by Chan and Lewenstein~\cite{CL15} for a binary alphabet.

While these specific applications are used to demonstrate the techniques of our framework, we believe that this novel framework is useful for many other problems as well.

\end{abstract}

\section{Introduction}

In recent years much effort has been invested towards developing polynomial time lower bounds for algorithms and data structures in both static and dynamic settings. This effort is directed towards obtaining a better understanding of the complexity class $P$ for well-studied problems which seem hard in the polynomial sense. The seminal paper by Gajentaan and Overmars \cite{GO95} set the stage for this approach by proving lower bounds for many problems in computational geometry conditioned on the \ThreeSUM{} conjecture. In the \ThreeSUM{} problem we are given a set $A$ of $n$ integers and we need to establish if there are $a,b,c \in A$ such that $a+b+c=0$. This problem has a simple $O(n^2)$ algorithm (and some poly-logarithmic improvements in ~\cite{BDP05,GronlundP14}) but no truly subquadratic algorithm is known, where truly subquadratic means $O(n^{2-\epsilon})$ for some $\epsilon>0$. The \ThreeSUM{} conjecture states that no truly subquadratic algorithm exists for the \ThreeSUM{} problem.
Based on this conjecture, there has been a recent extensive line of work establishing conditional lower bounds (\CLB{}s) for many problems in a variety of fields other than computational geometry, including many interesting dynamic problems, see e.g. \cite{AL13,AW14,AWW14,AWY15, KPP16,Patrascu10}.

\subsection{Decision and Reporting Problems}
Algorithmic problems come in many flavors. The classic one is the \emph{decision} variant. In this variant, we are given an instance of a problem and we are required to decide if it has some property or not. Some examples include: (1) given a 3-CNF formula we may be interested in deciding if it satisfiable by some truth assignment; (2) given a bipartite graph we may be interested in deciding if the graph has a perfect matching; (3) given a text $T$ and a pattern $P$ we may be interested in deciding if $P$ occurs in $T$. It is well-known that the first example is NP-complete while the two others are in P.
An instance that has the property in question has at least one \emph{witness} that proves the existence of the property. In the examples above a witness is: (1) a satisfying assignment; (2)  a perfect matching in the graph; (3) a position of an occurrence of $P$ in $T$.
Sometimes, we are not only interested in understanding if a witness \emph{exists}, but rather we wish to \emph{enumerate} all of the witnesses. This is the \emph{reporting} variant of the problem. In the examples mentioned above the goal of the reporting variant is to: (1) enumerate all satisfying assignments; (2) enumerate all perfect matchings; (3) enumerate all occurrences of $P$ in $T$.
For the first two examples it is known from complexity theory that it is most likely hard to count the number of witnesses (not to mention reporting them) (these are \#P-complete problems), while the third example can be solved by classic linear time algorithms.

In this paper we investigate the interplay between the decision and reporting variants of algorithmic problems and present a systematic framework that is used for proving \CLB{}s for these variants. We expect this framework to be useful for proving \CLB{}s on other problems not considered here.

\subsection{Our Framework}

We introduce and follow a framework that shows \ThreeSUM{}-hardness of decision problems via their reporting versions. The high-level idea is to reduce an instance of \ThreeSUM{} to an instance of a reporting problem, and then reduce the instance of a reporting problem to several instances of its decision version using a sophisticated search structure. The outline of this framework is described next.

\begin{itemize}
\item{\textbf{Compact Representation}. One of the difficulties in proving \CLB{}s based on the \ThreeSUM{} conjecture is that the input universe for \ThreeSUM{} could be too large for accommodating a reduction to a certain problem.
    To tackle this, we \emph{embed} the universe using special hashing techniques. This is sometimes coupled with a secondary problem-specific encoding scheme in order to match the problem at hand. }
\item{\textbf{Reporting}. The embedding in the first step may introduce false-positives. To tackle this, we report {\em all} the candidate solutions (witnesses) for the embedded \ThreeSUM{} instance, in order to verify if a true solution (an honest witness) to \ThreeSUM{} really exists.
    This is where we are able to say something about the difficulty of solving reporting problems. This is done by reducing the embedded \ThreeSUM{} instance to an instance of such a reporting problem, if it provides an efficient way to find all the false-positives. In some cases, such reductions reveal tradeoff relationships between the preprocessing time and reporting/query time.}
\item{\textbf{Reporting via Decision}. In this step the goal is to establish \ThreeSUM-hardness of a decision problem. To do so we reduce an instance of the reporting version of the problem to instances of the decision version by creating a data structure on top of the many instances of the decision version. This data structure allows us to efficiently report all of the elements in the output of the instance of the reporting version.
    By constructing the data structure in different ways we obtain varying \CLB{s} for the decision variants depending on the specific structure that we use.}
\end{itemize}

By following this route we introduce new \CLB{s} for some important problems which are discussed in detail in Section~\ref{sec:applications}. We point out that the embedding in the first step follows along the lines of \cite{Patrascu10} and~\cite{KPP16}. However, in some cases we also add an additional encoding scheme to fit the needs of the specific problem at hand.

\medskip

\noindent\textbf{Implications.} In Section~\ref{sec:applications} we discuss three applications from two different domains which utilize our framework for proving \CLB{s}, thereby demonstrating the usefulness of our framework. Table~\ref{table:summary_of_applications} in
Appendix~\ref{sec:summary_of_applications} summarizes these results. Of particular interest are new results on Histogram Indexing (defined in Section~\ref{sec:applications}) which, together with the algorithm of~\cite{CL15}, demonstrate a sharp separation when allowing truly subquadratic preprocessing time between binary and trinary alphabet settings. Moreover, our framework is the first to obtain a \CLB{} for the reporting version, which, as opposed to the decision variant, also holds for the binary alphabet case. See Table~\ref{table:histogram_indexing_comparison} in Appendix \ref{sec:summary_of_applications}.

\section{Applications}\label{sec:applications}

\noindent
 {\bf Convolution Problems\ }

 \noindent
 The \emph{convolution} of two vectors $u,v\in \{\mathbb{R}^+\cup\{0\}\}^n$ is a vector $w$, such that $w[k]=\sum_{i=0}^{k}{u[i]v[k-i]}$ for $0 \leq k \leq 2n-2$. Computing the convolution of $u$ and $v$ takes  $O(n\log{n})$ time using the celebrated FFT algorithm. Convolutions are used extensively in many areas including signal processing, communications, image compression, pattern matching, etc.
  A \emph{convolution witness} for the $k$th entry in $w$ is a pair $(a,b)$ such that $a+b=k$ and $u[a]\cdot v[b] > 0$. In other words, the witnesses of entry $k$ in $w$ are all values $i$ that contribute a non-zero value to the summation $w[k] = \sum_{i=0}^{k}{u[i]v[k-i]}$.
The first convolution problem we consider is the {\em convolution witnesses problem} which is defined as follows.

\begin{definition}
In the {\bf convolution witnesses problem} we preprocess two vectors $u,v\in \{\mathbb{R}^+\cup\{0\}\}^n$ and their convolution vector $w$, so that given a query integer $0\leq k \leq 2n-2$, we list {\em all} convolution witnesses of index $k$ in $w$.
\end{definition}

We prove the following \CLB{} for the convolution witnesses problem that holds even if $u$ and $v$ are binary vectors and all numbers in $w$ are non-negative integers.

\begin{introthm}~\label{thm:convolution-witnesses}
Assume the \ThreeSUM{} conjecture is true.
Then for any constant $0<\alpha <1$, there is no algorithm solving the convolution witnesses problem with $O(n^{2-\alpha})$ expected preprocessing time and $O(n^{\alpha/2 - \Omega(1)})$ expected amortized query time {\bf per witness}.
\end{introthm}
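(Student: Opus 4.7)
The plan is to reduce \ThreeSUM{} to the convolution witnesses problem by instantiating the ``compact representation + reporting'' half of the framework described in the introduction. Given a \ThreeSUM{} instance $A = \{a_1,\ldots,a_n\}$, I would pick $m = n^\beta$ for a parameter $\beta$ with $1 < \beta < 2/(2-\alpha)$ to be fixed later, and apply an almost-linear hash $h: U \to [m]$ of the Dietzfelbinger--\Patrascu{}--KPP style. Such an $h$ distributes $A$ roughly uniformly across $[m]$ and, crucially, satisfies $h(a)+h(b)+h(c) \bmod m \in \{0,1,\ldots,C\}$ for some absolute constant $C$ whenever $a+b+c=0$. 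I would then build binary vectors $u,v \in \{0,1\}^{2m}$ with $u[i] = v[i] = 1$ iff some $a_j$ lies in bucket $h^{-1}(i)$, compute $w = u * v$, and alongside maintain the bucket lists $L_i = \{a_j : h(a_j) = i\}$. Note that $u,v$ are binary and $w$ has non-negative integer entries, matching the theorem's hypotheses.

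With $(u,v,w)$ in hand, I would invoke the hypothesized preprocessing on these vectors of length $M = 2m$, at expected cost $O(M^{2-\alpha}) = O(n^{\beta(2-\alpha)})$. Then for each $c \in A$ I would issue $O(1)$ queries, one per shift $k \equiv -h(c) + j \pmod{m}$ for $0 \le j \le C$ (lifted into $[0,2M-2]$), obtaining every hash-space witness $(i,j)$ with $i+j = k$ and $u[i]v[j]=1$; for each such pair I would scan $L_i \times L_j$ to test whether a genuine triple with $a+b+c=0$ has been produced, and answer YES iff one is found.

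The analysis reduces to bounding the expected total number of reported witnesses and the expected total verification cost across all $n$ queries. Both are pinned down by the hash's near-uniformity: the probability that a fixed triple $(a,b,c) \in A^3$ contributes (i.e., $h(a)+h(b)+h(c) \bmod m \le C$) is $O(1/m)$, so both quantities are $O(n^3/m)$ in expectation. The total expected running time is therefore
\[
O\bigl(n^{\beta(2-\alpha)}\bigr) + O\bigl(n^{3-\beta}\bigr) + O\bigl((n^3/m)\cdot M^{\alpha/2-\delta}\bigr) = O\bigl(n^{\beta(2-\alpha)} + n^{3-\beta} + n^{3-\beta(1-\alpha/2+\delta)}\bigr),
\]
where $\delta > 0$ is the constant absorbed in the $\Omega(1)$ of the assumed query bound. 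A short check shows that for every such $\delta$ one can choose $\beta$ strictly between $\max\{1,\, 1/(1-\alpha/2+\delta)\}$ and $2/(2-\alpha)$, pushing all three exponents strictly below $2$ and contradicting the \ThreeSUM{} conjecture.

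The main obstacle I expect is verifying that the almost-linear hash retains both its near-linearity and its near-uniform dispersion when $m = n^\beta$ with $\beta > 1$, a regime in which most buckets are empty and second-moment arguments need extra care. Reassuringly this is essentially a black-box invocation of the hash families already used by \Patrascu{} and KPP, so the task is to check parameter ranges rather than engineer a new hash. Two routine points remain: a possible per-query base cost is absorbed into an additive $n \cdot M^{\alpha/2-\delta}$ term, which stays subquadratic because $\beta(\alpha/2-\delta) < \alpha/(2-\alpha) < 1$; and since $u,v$ are binary, $w[k]$ is automatically a non-negative integer, so even the stronger form of the theorem's hypothesis is met.
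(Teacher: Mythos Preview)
Your argument is correct, but it follows a genuinely different route from the paper's. The paper first passes through \ConvolutionThreeSUM{} and then, via its Lemma~1, hashes the \emph{values} $x_i$ into a \emph{small} range $[R]$ with $R\approx n^{\alpha/2}$; this produces $O(R^2)$ separate instances of the convolution witnesses problem, each on vectors of length exactly $n$ whose entries are indexed by \emph{positions} in the \ConvolutionThreeSUM{} array. The balance $R^2\cdot P(n)+\tfrac{n^2}{R}\cdot Q(n)$ is then tightened at $R=n^{\alpha/2-\Omega(1)}$. By contrast, you stay with plain \ThreeSUM{}, hash into a \emph{large} range $[m]$ with $m=n^\beta>n$, and build a \emph{single} instance on vectors of length $\Theta(m)$ whose entries are indexed by \emph{hash buckets}; the three-term balance in $\beta$ plays the role that the choice of $R$ plays in the paper.

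What each approach buys: yours is more self-contained for this particular theorem---no detour through \ConvolutionThreeSUM{}, only one instance to preprocess, and the analysis boils down to the single triple-count $O(n^3/m)$. The paper's route, however, manufactures instances that are $(n,R)$-sparse in a specific sense (length $n$ with $O(n/R)$ ones), and this sparsity is exactly what the later sections exploit to build the search trees for the partial convolution and partial matrix multiplication lower bounds. So the paper's Lemma~1 is doing double duty, whereas your reduction, while clean here, would not feed directly into those subsequent results. Your caveat about the hash in the regime $m>n$ is not a real obstacle: near-linearity and the $O(1/m)$ collision bound for the Dietzfelbinger/KPP families are properties of the function, independent of how the bucket count compares to $|A|$; the almost-balanced property is simply not needed in your regime.
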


Theorem~\ref{thm:convolution-witnesses} implies that when using only truly subquadratic preprocessing time one is required to spend a significant polynomial amount of time on every single witness. In particular, this means that, assuming the \ThreeSUM{} conjecture, one cannot expect to find witnesses much faster than following the naive algorithm for computing convolution na\"{\i}vely according to the convolution definition. This is in contrast to the decision version of the problem, where we only ask if a witness exists. This variant is easily solved using constant query time after a near linear time preprocessing procedure (computing the convolution itself).

Another variation of the convolution problem which we consider is the \emph{sparse convolution problem}. There are two different problems named sparse convolution, both appearing as open questions in a paper by Muthukrishnan \cite{Muthukrishnan95}. In the first, which is now well understood, we are given Boolean vectors $u$ and $v$ of lengths $N$ and $M$, where $M<N$. There are $n$ ones in $u$, $m$ ones in $v$ and $z$ ones in $w$, where $w$ is the Boolean convolution vector of $u$ and $v$. The goal is to report the non-zero elements in $w$ in $\tilde{O}(z)$ time. This problem has been extensively studied, and the goal has been achieved; see for example \cite{CL15,CH02,HIKP12}.
The second variant which we call {\em partial convolutions} is as follows.

\begin{definition} The {\bf partial convolution problem} on two vectors $u$ and $v$ of real numbers (of length $N$ and $M$ respectively, where $M<N$) and a set $S$ of indices is to compute, for each $i \in S$, the value of the $i$-th element in the convolution of $u$ and $v$.
\end{definition}

Muthukrishnan in~\cite{Muthukrishnan95} asked if it is possible to compute a partial convolution significantly faster than the time needed to compute a (classic) convolution.  We prove a \CLB{} based on the \ThreeSUM{} conjecture, that holds also for the special case of Boolean vectors, and, therefore, also for the special case in which we only want to know if the output values at indices in $S$ are zero or more. Moreover, we focus on the important
variant of this problem that deals with the case where the two input vectors have only $n=O(N^{1-\Omega(1)})$ ones and are both given implicitly (specifying only the indices of the ones). Our results also extend to the indexing version of the partial convolution problem, which we call the {\em partial convolution indexing problem}, and is defined as follows.

\begin{definition}  The \emph{\bf partial convolution indexing problem} is to preprocess an $N$-length vector $u$ of real numbers and a set of indices $S$ to support the following queries: given an $M$-length vector $v$ ($M<N$) of real numbers, for each $i \in S$ compute the value of the $i$-th element of the convolution of $u$ and $v$.
\end{definition}

Once again this variant already relevant
when the input is Boolean and sparse, i.e. $u$ and $v$ have $n=O(N^{1-\Omega(1)})$ ones and are represented implicitly by specifying their indices.

\medskip
\noindent
We prove the following \CLB{s} for these problems with the help of our framework.

\begin{introthm}
Assume the \ThreeSUM{} conjecture is true.
Then there is no algorithm for the partial convolution problem with $O(N^{1-\Omega(1)})$ time, even if $|S|$ and the number of ones in both input vectors are less than $N^{1-\Omega(1)}$.
\end{introthm}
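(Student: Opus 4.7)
The plan is to reduce $\ThreeSUM$ on an integer universe of size $O(n^2)$ --- which is $\ThreeSUM$-hard via \Patrascu-style almost-linear hashing --- directly to the partial convolution problem, using the standard ``shift-and-indicator'' encoding of a 3SUM instance as a convolution.

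Given a 3SUM instance $A = \{a_1, \ldots, a_n\}$ with all $a_i \in [-R, R]$ for $R = \Theta(n^2)$, set $N := 2R + 1 = \Theta(n^2)$ and define Boolean vectors $u, v \in \{0,1\}^N$ by
\[
u[a + R] \;=\; v[a + R] \;=\; 1 \quad \text{for every } a \in A,
\]
with $0$ elsewhere, and take the query set $S := \{\, 2R - c : c \in A \,\}$, so $|S| = n$. For any query $p = 2R - c \in S$, the value
\[
(u * v)[p] \;=\; \bigl|\{(a, b) \in A \times A : (a + R) + (b + R) = 2R - c\}\bigr|
\]
equals the number of ordered pairs $(a, b) \in A \times A$ with $a + b + c = 0$. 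Hence $(u*v)[p] > 0$ iff $c$ participates in a 3SUM triple, so the partial convolution values on $S$ decide this $\ThreeSUM$ instance.

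The sparsity hypotheses are met because $|S| = n$ and each of $u, v$ has exactly $n$ ones, while $N = \Theta(n^2)$, so both are $\Theta(\sqrt{N}) \le N^{1 - \Omega(1)}$. Consequently, any $O(N^{1-\epsilon})$-time partial convolution algorithm (for fixed $\epsilon > 0$) would decide 3SUM on $[-R, R]$ --- and hence general $\ThreeSUM$ --- in time $O(n^{2(1-\epsilon)}) = O(n^{2-\Omega(1)})$, contradicting the $\ThreeSUM$ conjecture. The main obstacle is the Compact Representation step supporting the first sentence: the $\ThreeSUM$-hardness of $\ThreeSUM$ over an $O(n^2)$ universe requires an almost-linear hash together with bounding hash-induced false positives (only $O(n)$ in expectation when $R = \Theta(n^2)$, so they can be verified in $O(n)$ extra time). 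Once this is in place, the remaining encoding and parameter check are routine.
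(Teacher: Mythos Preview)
Your route is much more direct than the paper's --- you bypass the convolution-witnesses framework and the search-tree construction of Section~\ref{search-tree-construction} entirely, encoding a small-universe \ThreeSUM{} instance straight into a single partial convolution call. With one correction it works, but as written there is a real gap precisely where you flag ``the main obstacle.''

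The issue is the claim that hashing to $R=\Theta(n^2)$ leaves false positives that ``can be verified in $O(n)$ extra time.'' Hashing to $R=\Theta(n^2)$ does give $O(n)$ expected false-positive \emph{triples}, but partial convolution returns only the values $(u*v)[p]$, not witnesses, so you cannot enumerate those triples to check them individually. The only thing you can do is, for each $c$ with $(u*v)[p_c]>0$, brute-force that $c$ in $O(n)$ time. But with $R=\Theta(n^2)$ a fixed $c$ is falsely flagged with \emph{constant} probability (there are $\Theta(n^2)$ pairs, each hitting $-h(c)$ with probability $\Theta(1/R)$), so in a NO instance $\Theta(n)$ values of $c$ get flagged and verification costs $\Theta(n^2)$. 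Thus \ThreeSUM{} over a $\Theta(n^2)$ universe is not established as \ThreeSUM-hard by this argument.

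The fix is simple: take $R=n^{2+\delta}$ for a small $\delta=\delta(\epsilon)>0$. Then each $c$ is falsely flagged with probability at most $n^2/R=n^{-\delta}$, so only $O(n^{1-\delta})$ values of $c$ need brute-force verification, costing $O(n^{2-\delta})$; the partial convolution call costs $O(N^{1-\epsilon})=O(n^{(2+\delta)(1-\epsilon)})$, which is truly subquadratic once $\delta<2\epsilon/(1-\epsilon)$ (e.g.\ $\delta=\epsilon$ works). Sparsity is preserved since $n=N^{1/(2+\delta)}\le N^{1/2}$.

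For comparison, the paper does not reduce the universe below $n$ at all. It goes through Lemma~\ref{lem-convolution} to get $O(R^2)$ sparse convolution-witness instances, builds a binary search tree over each whose \emph{leaves} are partial convolution subproblems, and argues that a truly sublinear partial-convolution primitive would speed up the leaf layer enough to make the whole witness-enumeration subquadratic. That machinery is heavier but is reused verbatim for the indexing variant (Theorem~\ref{thm:partial-convolution-indexing}) and adapted for the matrix results; your argument is one-shot but, once patched, self-contained and considerably shorter.
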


\begin{introthm}
Assume the \ThreeSUM{} conjecture is true.
Then there is no algorithm for the partial convolution indexing problem with $O(N^{2-\Omega(1)})$ preprocessing time and $O(N^{1-\Omega(1)})$ query time, even if both $|S|$ and the number of ones of the input vectors are $O(N^{1-\Omega(1)})$.
\end{introthm}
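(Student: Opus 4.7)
The plan is to follow the paper's three-step framework. Given a \ThreeSUM{} instance $A$ of $n$ integers, split it into $A_1,A_2,A_3$ of size $n/3$ each by the standard reduction. Pick a small constant $\delta>0$ to be tuned to the assumed $\alpha,\beta>0$, and draw a random \Patrascu{}-style almost-linear hash $h(x)=(sx\bmod p)\bmod R$ with $R=n^{1+\delta}$. The compact-representation step builds the Boolean vector $u$ of length $N=R$ with $u[h(a)]=1$ for $a\in A_1$, and the index set $S=\{(-h(c)+r)\bmod R:c\in A_3,\,r\in\{0,1,2\}\}$ to absorb the $O(1)$ residue classes induced by the almost-linearity of $h$. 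Both $|S|$ and the number of ones of $u$ are $O(n)=O(N^{1/(1+\delta)})=O(N^{1-\Omega(1)})$, so the theorem's sparsity conditions are met. Auxiliary inverse-hash tables on $A_1,A_2,A_3$ give $O(1)$-time recovery of an element from its hash value.

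Next I would preprocess the partial convolution indexing data structure on $(u,S)$ in time $T_p=O(N^{2-\alpha})$. Then, for each $b\in A_2$ I would issue a query with the indicator vector $v^{(b)}$ having a single $1$ at position $h(b)$; this query costs $T_q=O(N^{1-\beta})$ and returns $(u*v^{(b)})[s]=u[(s-h(b))\bmod R]$ for every $s\in S$. A nonzero answer at some $s$ reveals an $a\in A_1$ hashing to $(s-h(b))\bmod R$; the inverse-hash table yields $a$ in $O(1)$, the map $S\to A_3$ provides the associated $c$, and a direct equality test decides $a+b+c=0$, eliminating false positives. By a standard \Patrascu{}-style argument the total number of hash-aligned candidate triples is $O(n^3/R)=O(n^{2-\delta})$ in expectation, bounding both the verification work and the aggregate number of positive entries across all queries by $O(n^{2-\delta})$.

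The overall reduction runs in time
\[
\tilde O\!\bigl(T_p + |A_2|\cdot T_q + n^{2-\delta}\bigr)=\tilde O\!\bigl(n^{(1+\delta)(2-\alpha)}+n^{1+(1+\delta)(1-\beta)}+n^{2-\delta}\bigr).
\]
For any fixed $\alpha,\beta>0$, picking $\delta$ sufficiently small so that $(1+\delta)(2-\alpha)<2$ and $1+(1+\delta)(1-\beta)<2$ by a constant margin makes every term $O(n^{2-\Omega(1)})$, yielding a subquadratic \ThreeSUM{} algorithm and contradicting the \ThreeSUM{} conjecture.

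The main obstacle will be handling the per-query output size. A query must report a value at every index of $S$, so its cost is at least $|S|$, forcing $|S|\le T_q=O(N^{1-\beta})$; when $|A_3|$ exceeds this bound I would partition $A_3$ into $O(|A_3|/N^{1-\beta})$ blocks and preprocess an independent indexing instance per block. With $\delta$ tuned small in terms of both $\alpha$ and $\beta$, the cumulative preprocessing across blocks remains $O(n^{2-\Omega(1)})$, and the candidate enumeration distributes over the blocks without asymptotic overhead. Combining this balancing argument with the \Patrascu{}-style hashing bound on candidate triples completes the reduction.
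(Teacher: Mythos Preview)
There is a genuine gap in the query accounting that the blocking fix does not close. Your reduction issues one query per $b\in A_2$ against each preprocessed block, and each query must write out $|S_{\text{block}}|$ values. Summing over blocks, the total output you must read for a single $b$ is exactly $|A_3|=\Theta(n)$, so over all $b$ you read $\Theta(n^2)$ values regardless of how you slice $A_3$. Concretely: with $B$ blocks of size $|S_{\text{block}}|=N^{1-\beta}$ you have $B\ge n/N^{1-\beta}$, and hence
\[
|A_2|\cdot B\cdot T_q \;\ge\; n\cdot\frac{n}{N^{1-\beta}}\cdot N^{1-\beta}\;=\;n^2,
\]
so the aggregate query cost is never truly subquadratic. (Without blocking the same thing happens: $|S|=n$ forces $T_q\ge |S|=n$, and $|A_2|\cdot T_q\ge n^2$.) The sentence ``the candidate enumeration distributes over the blocks without asymptotic overhead'' is about verification, not about the raw query time, and it is the latter that blows up.

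The underlying reason is that a singleton query vector $v^{(b)}$ makes the oracle trivial: $(u*v^{(b)})[s]=u[s-h(b)]$ is just a shifted table lookup, so your reduction is effectively asking the oracle to check every pair $(b,c)\in A_2\times A_3$, which is $\Theta(n^2)$ checks no matter how they are batched. The paper avoids this by first passing through \ConvolutionThreeSUM{} and the $(n,R)$-sparse convolution-witnesses instances of Lemma~\ref{lem-convolution}, and then building the binary search tree of Section~\ref{search-tree-construction}: the oracle is preprocessed on the long vector $v$ with the $O(n/R)$ target indices as $S$, and queried with the (many-one) sub-vectors of $u$ at the leaves. This yields $R^2$ preprocessings and $R^2\cdot n/X$ queries, and with $R=X=n^c$ for small $c$ all terms are $O(n^{2-\Omega(1)})$. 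The key difference is that the search tree lets one drill down to each of the $O(n^2/R)$ candidate witnesses directly, rather than enumerating all $\Theta(n^2)$ pairs $(b,c)$.
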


As mentioned above, the convolution of vectors of length $N$ can be computed in $\tilde{O}(N)$ time with the FFT algorithm. However, in the partial convolution problem and partial convolution indexing problem,
despite the input vectors being sparse and represented sparsely (specifying only the $O(N^{1-\Omega(1)})$ indices of the ones in each vector), and despite the portion of the output we need to compute being sparse ($|S| = O(N^{1-\Omega(1)})$), no linear time algorithm (in $n=O(N^{1-\Omega(1)})$) exists, unless the \ThreeSUM{} conjecture is false.

Notice that the partial convolution problem and its indexing variant are {\em decision} problems, since they require a {\em decision} for each location $i\in S$, whether $w[i]>0$ or not. This is in contrast to the convolution witnesses problem, which is a reporting problem, as it requires the reporting of \emph{all} of the witnesses for $w[i]$.

To prove \CLB{s} for the convolution problems we follow our framework. That is, we first use a hash function to embed a \ThreeSUM{} instance to a \emph{smaller} universe. This mapping introduces false-positives, which we enumerate by utilizing the reporting problem of convolution witnesses. To solve the reporting version we reduce it to several instances of a decision problem, partial convolution or its indexing variant, by constructing a suitable data structure. Tying it all together leads to \CLB{s} for both the reporting and decision problems.

\vspace{5pt}
\noindent
 {\bf Matrix Problems\ }

  \noindent
We also present some similar \CLB{s} for matrices.

\begin{definition} The {\bf partial matrix multiplication  problem} on two $N \times N$ matrices $A$ and $B$ of real numbers
and a set of entries $S \subseteq N \times N$ is to compute, for each $(i,j)\in S$, the value $(A\times B)[i,j]$.
\end{definition}

The indexing variant of this problem is defined as follows.

\begin{definition}
The {\bf partial matrix multiplication indexing problem} is to preprocess an $N \times N$ matrix $A$ of real numbers and a collection $\mathcal{S}=\{S_1,S_2,...,S_k\}$ of sets of entries, where $S_i \subseteq N \times N$, so that given a sequence $B_1,\ldots,B_k$ of $N \times N$ matrices of real numbers, we enumerate the entries of $A\times B_i$ that correspond to $S_i$.
\end{definition}

For $\mathcal{S}=\{S_1,S_2,...,S_k\}$ let $SIZE(S)=\sum_{i=1}^{k} |S_i|$.
We prove the following \CLB{s}, which hold also for the special case of Boolean multiplication assuming that the input is given implicitly by specifying only the indices of the ones.

\begin{introthm}\label{thm:partial-matrix-mult}
Assume the \ThreeSUM{} conjecture is true.
Then there is no algorithm for the partial matrix multiplication problem running in $O(N^{2-\Omega(1)})$ expected time, even if $|S|$ and the number of ones in the input matrices is $O(N^{2-\Omega(1)})$.
\end{introthm}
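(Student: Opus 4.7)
The plan is to follow the three-step framework established earlier in the paper, using the partial matrix multiplication oracle as the decision subroutine analogous to the way partial convolution plays that role for the convolution witnesses theorem.

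First, I apply a \Patrascu{}-style almost-linear hash $h$ to a \ThreeSUM{} instance of size $n$, mapping the integer universe into $[R]$, where $R$ is chosen as a polynomial in $n$ to be fixed later. The hash guarantees two properties we rely on throughout: bucket sizes concentrate tightly around $n/R$, and any true solution $a+b+c=0$ satisfies $h(a)+h(b)+h(c)\in\{0,1\}\pmod R$. This yields a compact representation in which the \ThreeSUM{} problem reduces to, for each of the $O(R^2)$ bucket-triples compatible with the hash constraint, checking whether there is an honest 3SUM witness living in those buckets.

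Second, I introduce an intermediate matrix-multiplication witnesses reporting problem: given $N\times N$ matrices $A,B$ and a set $T$ of target entries, list all $k$ such that $A[i,k]\cdot B[k,j]\ne 0$ for each $(i,j)\in T$. I encode the hashed \ThreeSUM{} instance into this reporting problem by viewing each bucket index as a pair of coordinates (so one factor of the matrix product ranges over "first-coordinate" hash values, the middle index over "shared" hash coordinates, and the other factor over "third-coordinate" hash values), and by placing ones in $A$ and $B$ according to which elements of the original set hash to which coordinate pairs. The query set $T$ is then the collection of index pairs compatible with the linear-hash constraint on the third element. In this way, every witness $k$ returned by the reporting oracle corresponds to a candidate \ThreeSUM{} triple, which can be verified in $O(1)$ time.

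Third, I reduce this reporting problem to several instances of partial matrix multiplication via a recursive search data structure, mirroring the construction used for convolution witnesses: repeatedly partition $A$ and $B$ into submatrices and invoke the partial matrix multiplication oracle on each block together with the relevant subset of $T$, spending $O(\log N)$ oracle calls to localize each witness. Tuning $R$ so that both the total number of ones in $A$ and $B$ and $|T|$ are $O(N^{2-\Omega(1)})$, and using the expected bucket-size concentration to bound the total number of enumerated candidates (honest and false-positive alike), I conclude that a hypothetical $O(N^{2-\Omega(1)})$ expected-time algorithm for partial matrix multiplication would yield an $O(n^{2-\Omega(1)})$ expected-time algorithm for \ThreeSUM{}, contradicting the conjecture.

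The main obstacle is the encoding in step two: matrix multiplication is inherently a two-factor operation, while \ThreeSUM{} is symmetric in three variables. The encoding must therefore use the row index, column index, and the position in $T$ to carry the three 3SUM roles simultaneously, while keeping both the matrices and $T$ sparse enough to meet the $O(N^{2-\Omega(1)})$ sparsity bound required by the theorem. Balancing $R$ against the bucket size so that false positives do not dominate the verification cost, and exploiting the linear-hash constraint to ensure that only $O(N^{2-\Omega(1)})$ entries of $T$ actually need to be queried, are the key parameter choices that make the whole pipeline tight.
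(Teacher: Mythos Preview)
Your three-step framework is the right scaffolding, but the proof as written has a real gap in step two, and the route you sketch diverges from the paper's in a way that matters.

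The paper does \emph{not} introduce a direct matrix-multiplication witnesses problem or encode the hashed \ThreeSUM{} instance straight into matrices. Instead it stays with the \emph{convolution witnesses} reporting problem from Lemma~\ref{lem-convolution} (so the intermediate object is still the $O(R^2)$ many $(n,R)$-sparse instances $(u,v,w)$), builds a quad-tree search structure over each instance, and only at the leaf level invokes matrix multiplication. The leaves correspond to pairs of short sub-vectors $u_i, v_j$ of length $X$, and the key construction is a \emph{shift matrix}: the $X\times X$ matrix $U_i$ whose $k$th row is $u_i$ shifted by $k$. Multiplying $U_i$ by a matrix $V_j$ whose columns are $X$ consecutive sub-vectors of $v$ computes $X$ leaf convolutions in one shot. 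Because each sub-vector has $O(X/R)$ ones, both $U_i$ and $V_j$ have $O(X^2/R)$ ones, and because only $O(n/R)$ output positions are ever needed, each product needs only $O(X^2/R)$ entries of $S$. This is exactly what produces the sparse-input, sparse-output instances of partial matrix multiplication on $N=X$, and choosing $X=n^\epsilon$, $R=X^{1/2+1/4\epsilon}$ closes the argument.

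Your step two, by contrast, says only that bucket indices become ``pairs of coordinates'' with one factor carrying each \ThreeSUM{} role; you acknowledge this is the main obstacle but do not resolve it. Without a concrete encoding you cannot verify that (i) the matrices have $O(N^{2-\Omega(1)})$ ones, (ii) the target set $T$ has $O(N^{2-\Omega(1)})$ entries, and (iii) the total number of witnesses to enumerate stays within the $O(n^2/R)$ false-positive budget. The shift-matrix trick is precisely the missing idea that delivers all three simultaneously; a generic ``bucket index as coordinate pair'' encoding does not obviously give you the required sparsity on both sides. So either supply an explicit encoding with a parameter calculation, or follow the paper and route the reduction through convolution witnesses and shift matrices.
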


\begin{introthm}\label{thm:partial-matrix-mult-indexing}
Assume the \ThreeSUM{} conjecture is true.
Then there is no algorithm for the partial matrix multiplication indexing problem with $O(SIZE(S))$ preprocessing time and $O(N^{2-\Omega(1)})$ query time.
\end{introthm}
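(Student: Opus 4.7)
The plan is to derive Theorem~\ref{thm:partial-matrix-mult-indexing} as a direct consequence of Theorem~\ref{thm:partial-matrix-mult} by specializing the indexing model to a single query. The key observation is that any algorithm for the indexing variant immediately yields a batch algorithm for the (non-indexing) partial matrix multiplication problem with the same combined resource bound: one simply treats a batch instance as a one-query indexing instance and runs the preprocessing followed by the single query back-to-back.

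Concretely, suppose toward a contradiction that an algorithm $\mathcal{A}$ for partial matrix multiplication indexing achieves preprocessing time $O(SIZE(S))$ and query time $O(N^{2-\Omega(1)})$. Given any instance $(A,B,S)$ of partial matrix multiplication in which $|S|$ and the number of ones in $A$ and $B$ are each $O(N^{2-\Omega(1)})$, I would invoke $\mathcal{A}$ with the same matrix $A$ and the singleton collection $\mathcal{S} = \{S\}$, so that $SIZE(S) = |S| = O(N^{2-\Omega(1)})$. The preprocessing runs in $O(N^{2-\Omega(1)})$ time, the single query on $B_1 := B$ runs in an additional $O(N^{2-\Omega(1)})$ time, and the output consists of exactly the desired entries $(A\times B)[i,j]$ for $(i,j)\in S$. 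The combined $O(N^{2-\Omega(1)})$ running time then contradicts Theorem~\ref{thm:partial-matrix-mult}.

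The main point to verify is that the sparse implicit representation of the input matrices assumed in Theorem~\ref{thm:partial-matrix-mult} is compatible with what the preprocessing of an indexing algorithm meeting the $O(SIZE(S))$ budget expects on its tape; otherwise $A$ could not even be read. Since Theorem~\ref{thm:partial-matrix-mult} is already stated in the sparse regime where both $A$ and the query matrix are presented by listing the indices of their nonzero entries, no representation conversion is needed and the $O(|S|)$ preprocessing budget suffices to absorb the description of $A$. I do not anticipate any further conceptual obstacle: all of the genuine \ThreeSUM{}-hardness work has already been carried out in the proof of Theorem~\ref{thm:partial-matrix-mult} via the compact-representation-plus-reporting-via-decision framework, and Theorem~\ref{thm:partial-matrix-mult-indexing} only repackages that lower bound in the preprocessing/query terminology natural for indexing problems.
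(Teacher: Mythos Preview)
Your argument is correct, but it takes a different route from the paper. The paper proves Theorem~\ref{thm:partial-matrix-mult-indexing} by a direct reduction from Lemma~\ref{lem-convolution}: for each $(n,R)$-sparse convolution-witness instance it builds the quad tree, preprocesses each $V_j$ together with a \emph{collection} $\mathcal{S}=\{S_1,\ldots,S_{O(n/X)}\}$ (one set per shift matrix in $U$), and then issues the $U$-matrices as queries. The parameters are balanced so that $O(SIZE(S))$ preprocessing plus $O(N^{2-\Omega(1)})$ per-query time would make the whole quad-tree construction truly subquadratic in $n$. You, by contrast, collapse everything to the $k=1$ case and piggyback on the already-established Theorem~\ref{thm:partial-matrix-mult}: a purported indexing algorithm with these bounds would solve the non-indexing partial matrix multiplication problem in $O(|S|)+O(N^{2-\Omega(1)})=O(N^{2-\Omega(1)})$ time on the hard instances of that theorem. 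Your route is shorter and more modular; the paper's route exercises the indexing model with a genuinely nontrivial collection $\mathcal{S}$, but at the cost of redoing the quad-tree bookkeeping.

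One point to tighten: for the $O(|S|)$ preprocessing budget to ``absorb the description of $A$'' you need the number of ones in $A$ to be $O(|S|)$, not merely that both are $O(N^{2-\Omega(1)})$ (the \emph{statement} of Theorem~\ref{thm:partial-matrix-mult} only asserts the latter, which would allow, say, $|S|=N^{1.2}$ while $A$ has $N^{1.8}$ ones). This is true in the actual hard instances produced in the proof of Theorem~\ref{thm:partial-matrix-mult}---both quantities are $\Theta(X^2/R)$---so you should cite the construction there rather than the bare statement.
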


Matrix multiplication, and in particular Boolean matrix multiplication, can be solved in $\tilde{O}(n^\omega)$ time, where $\omega \approx 2.373$~\cite{Gall14, Williams12}. Many researchers believe that the true value of $\omega $ is $2$. This belief implies that the running time for computing the product of two Boolean matrices is proportional to the size of the input matrices and the resulting output. However, our results demonstrate that such a result is unlikely to exist for sparse versions of the problem, where the number of ones in the matrices is $O(N^{2-\Omega(1)})$ and we are interested in only a partial output matrix (only $O(N^{2-\Omega(1)})$ entries of the matrix product).

To prove Theorem~\ref{thm:partial-matrix-mult} and~\ref{thm:partial-matrix-mult-indexing} we follow our framework. The process is very similar to the path for proving \CLB{s} for convolution problems. In fact, instead of considering a reporting version of the partial matrix multiplication problem for proving these \CLB{s}, we once again utilize the reporting problem of convolution witnesses. However, this time we transform the convolution witnesses to the matrix multiplication problems using a more elaborate data structure. The main difficulty in this transformation is to guarantee the sparsity of both the input and the required output. This transformation illustrates how a reporting version of a problem can be used to prove \CLB{s} for decision versions of other problems, by changing the way we look for honest witnesses.

\vspace{5pt}

\noindent
 {\bf String Problems\ }

  \noindent
Another application of our framework, which is seemingly unrelated to the previous two, is the problem of histogram indexing. A {\em histogram}, also called a {\em Parikh vector}, of a string $T$ over alphabet $\Sigma$ is a $|\Sigma|$-length vector containing the character count of $T$. For example, for $T=abbbacab$ the histogram is $\psi(T)=(3,4,1)$.

\begin{definition}
In the {\bf histogram indexing problem} we preprocess a string $T$ to support the following queries: given a query Parikh vector $\psi$, return whether there is a substring $T'$ of $T$ such that $\psi(T')=\psi$.
\end{definition}

\begin{definition}
In the {\bf histogram indexing reporting problem} we preprocess a string $T$ to support the following queries: given a query Parikh vector $\psi$, report indices of $T$ at which a substring $T'$ of $T$ begins such that $\psi(T')=\psi$.
\end{definition}

The problem of histogram indexing (not the reporting version) is sometimes called {\em jumbled indexing}. It has received much attention in recent years. For example, for binary alphabets - that is histograms of length 2 - there is a straightforward algorithm with $O(n^2)$ preprocessing time and constant query time, see \cite{CFL09}. Burcsi et al. \cite{BCFL12} and  Moosa and Rahman \cite{MR10} improved the preprocessing time to $O(n^2/\log{n})$. Using the four-Russian trick a further improvement was achieved by Moosa and Rahman \cite{MR12}. Then, using a connection to the recent improvement of all-pairs-shortest path by Williams~\cite{Williams14}, as observed by Bremner et al.~\cite{BCDEHILPT14} and by Hermelin et al. \cite{HLRW14}, the preprocessing time was further reduced to $O(\frac{n^2}{2^{\Omega(\log{n})^{0.5}}})$ . Finally, Chan and Lewenstein~\cite{CL15} presented an $O(n^{1.859})$ preprocessing time algorithm for the problem with constant query time. For non-binary alphabets some progress was achieved in the work by Kociumaka et al. \cite{KRR13} and even further achievement was shown in~\cite{CL15}. On the negative side, some \CLB{s} were  recently shown by Amir et al. \cite{ACLL14}.

We follow our framework and first obtain \CLB{s} for the reporting version of histogram indexing. This is the first time \CLB{s} are shown for the reporting version. Moreover, these \CLB{s} apply to binary alphabets, as opposed to the decision version in which there currently is no \CLB{} known for binary alphabets. The \CLB{s} for the reporting version admit a full tradeoff between preprocessing and query time. For the decision variant, we improve upon the \CLB{} by Amir et al.~\cite{ACLL14} by presenting full-tradeoffs between preprocessing and query time based on the standard \ThreeSUM{} conjecture. Specifically, our new \CLB{} implies that no solution to the histogram indexing problem exists with subquadratic preprocessing time and $\tilde{O}(1)$ query time for every alphabet size bigger than 2, unless the 3SUM conjecture is false. This demonstrates a sharp separation between binary and trinary alphabets, since Chan and Lewenstein~\cite{CL15} introduced an algorithm for histogram indexing on binary alphabets with $\tilde{O}(n^{1.859})$ preprocessing time and constant query time.
A complete comparison of our results and the results by Amir et al. \cite{ACLL14} appears in Table~\ref{table:histogram_indexing_comparison} in Appendix~\ref{sec:summary_of_applications}.

The \CLB{s} are summarized by the following theorems.

\begin{introthm}
Assume the \ThreeSUM{} conjecture is true.
Then the histogram reporting problem for an $N$-length string and constant alphabet size $\ell \geq 2$ cannot be solved using $O(N^{2-\frac{2\gamma}{\ell+\gamma}-\Omega(1)})$ preprocessing time, $O(N^{1-\frac{\gamma}{\ell+\gamma}-\Omega(1)})$ query time and $O(N^{\frac{\gamma \ell}{\ell+\gamma} - \frac{2\gamma}{\ell+\gamma}-\Omega(1)})$ reporting time per item, for any $0 < \gamma < \ell$.
\end{introthm}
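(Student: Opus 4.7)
The plan follows the three-stage framework of Section~\ref{sec:applications} (compact representation, reporting, reporting via decision), with histogram reporting playing the role of the primitive that enumerates candidate \ThreeSUM{} witnesses. Given a \ThreeSUM{} instance $A$ with $|A|=n$, I would first apply the \Patrascu-style almost-linear hash $h(a) = a \bmod p$ for a random prime $p$ (as in~\cite{Patrascu10,KPP16}): every honest witness of the original \ThreeSUM{} instance survives, and the expected number of false-positive triples $(a,b,c)$ with $h(a)+h(b)+h(c) \equiv 0 \pmod p$ is $\Theta(n^3/p)$.

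Next, I would encode the hashed instance as a string $T$ of length $N$ over an alphabet $\Sigma$ of size $\ell$. Each residue class becomes a gadget built from $\ell-1$ ``content'' characters together with one separator, arranged so that every substring of $T$ aligned with a candidate triple's gadgets has Parikh vector determined solely by the triple of residues. The encoding is designed so that (i) distinct candidate triples are recovered in $O(1)$ time from substring positions; (ii) only $\mathrm{poly}(p)$ distinct Parikh vectors arise across all candidates; and (iii) each candidate matches exactly one Parikh vector. Issuing one histogram reporting query per relevant Parikh vector enumerates all candidate triples, each of which is verified against the original unhashed values in $O(1)$ time; an honest witness exists iff at least one such verification succeeds.

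Plugging in the hypothesized preprocessing, query, and per-witness reporting bounds yields a \ThreeSUM{} algorithm whose total time splits as
$$N^{2-\frac{2\gamma}{\ell+\gamma}-\Omega(1)} \;+\; (\text{\# queries}) \cdot N^{1-\frac{\gamma}{\ell+\gamma}-\Omega(1)} \;+\; \frac{n^3}{p} \cdot N^{\frac{\gamma(\ell-2)}{\ell+\gamma}-\Omega(1)}.$$
Choosing $N$ as a polynomial in $n$ prescribed by $\gamma$ and $\ell$, and $p$ slightly larger than $n^{1+\gamma(\ell-2)/\ell}$, forces each of the three terms below $n^{2-\Omega(1)}$, contradicting the \ThreeSUM{} conjecture. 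The parameter $\gamma$ sweeps out the trade-off between the three terms, which is precisely why the theorem asserts the lower bound for every $0<\gamma<\ell$.

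The principal obstacle is the gadget construction that simultaneously realises the intended string length $N$, keeps the number of query Parikh vectors polynomial in $p$, and maintains the exact bijection between candidate triples and aligned substrings, uniformly in the alphabet size $\ell$. The binary case $\ell=2$ is special because the per-witness term degenerates to $N^{O(1)}$-free reporting and the bottleneck shifts entirely to preprocessing and query; for larger $\ell$ the additional Parikh dimensions must be spent separating residue classes without inflating the query count. Verifying that the three time terms simultaneously fall below $n^{2-\Omega(1)}$ for the claimed family of $\gamma$'s requires the same careful bucketing and collision-probability analysis used in the \Patrascu-KPP reductions~\cite{Patrascu10,KPP16}, adapted to the multidimensional Parikh setting.
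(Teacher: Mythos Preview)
Your plan has a structural gap: you start from plain \ThreeSUM{}, but a histogram query returns \emph{substrings}, and a substring is determined by two endpoints, not three. There is no way to make an ``aligned substring'' correspond to a triple $(a,b,c)$ of arbitrary positions; your sentence ``every substring of $T$ aligned with a candidate triple's gadgets'' is exactly the step that cannot be made precise. The paper avoids this by reducing from \DiffConvolutionThreeSUM{} (equivalently \ConvolutionThreeSUM{}), where a solution $x_k=x_j-x_i$ carries the index constraint $k=j-i$. Then the substring $S_{i,j}$ running from the encoding of $x_i$ to that of $x_j$ is the natural object, and the third element $x_k$ is encoded in the \emph{query Parikh vector} (both its hash value and, implicitly, the length $k$ via the count of full encodings contained in the substring). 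So there are exactly $n$ queries (one per $x_k$, up to a constant carry-set factor), not $\mathrm{poly}(p)$.

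This also changes the arithmetic. With the convolution version and hash range $R=n^\gamma$, the string length is $N=\Theta(R^{1/\ell}n)$, and the expected number of false positives over all $n$ queries is $\Theta(nN/R^{1-1/\ell})$ (Lemma~\ref{lem:histogram_fp}), not $n^3/p$. The three terms that must each be $o(n^2)$ are then $N^{\alpha}$, $n\cdot N^{\beta}$, and $(nN/R^{1-1/\ell})\cdot N^{\delta}$; substituting $N=n^{(\ell+\gamma)/\ell}$ gives exactly the exponents in the statement. The technical heart you are missing is the base-$R^{1/\ell}$ ``regular plus complement'' encoding, which is what makes the Parikh vector of $S_{i,j}$ equal (up to carries) to the vector built from $h(x_j)-h(x_i)$; without that device the reduction does not go through.
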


\begin{introthm}
Assume the \ThreeSUM{} conjecture holds.
Then the histogram indexing problem for a string of length $N$ and constant alphabet size $\ell \geq 3$ cannot be solved with $O(N^{2-\frac{2(1-\alpha)}{\ell-1-\alpha}-\Omega(1)})$ preprocessing time and $O(N^{1-\frac{1+\alpha(\ell-3)}{\ell-1-\alpha}-\Omega(1)})$ query time.
\end{introthm}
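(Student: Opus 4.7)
The plan is to derive this statement as a corollary of the previous theorem (the reporting lower bound for histogram indexing), following the \emph{Reporting via Decision} step of the framework. I will show that any algorithm for the decision version of histogram indexing meeting the stated preprocessing/query bounds can be bootstrapped, using a hierarchical block structure on top of many decision instances, into an algorithm for the reporting version whose three parameters violate the preceding theorem, and hence (given the \ThreeSUM{} conjecture) no such decision algorithm can exist.

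The reduction is built around a hierarchical block decomposition of the input string $T$. Partition $T$ into $N/B$ contiguous blocks of length $B=N^{\beta}$, for a parameter $\beta\in(0,1)$ to be chosen, together with an additional layer of $N/B$ shifted blocks of length $2B$ so that every substring of length at most $B$ is contained in at least one indexed block. On each of the $O(N/B)$ indexed substrings, build the assumed decision data structure. To answer a reporting query for a Parikh vector $\psi$: query the decision structure of each block, paying $Q_D(B)$ per block to locate the candidate blocks that contain a witness; for each candidate block, recover the actual starting positions by recursing with the same construction at a finer block size, bottoming out at blocks of constant size where positions are read off directly. Queries whose target length exceeds $B$ are reduced to this case by the standard trick of swapping $\psi$ for its complement relative to coarser windows.

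Let $p = 2 - \frac{2(1-\alpha)}{\ell-1-\alpha}$ and $q = 1 - \frac{1+\alpha(\ell-3)}{\ell-1-\alpha}$ be the decision exponents assumed for contradiction. Summed over all $O(\log N)$ recursion levels, the total preprocessing is dominated by the top level and is $O(N^{1-\beta+\beta p-\Omega(1)})$, the initial query cost (before any witness is produced) is $O(N^{1-\beta+\beta q-\Omega(1)})$, and the amortized cost per reported witness, charged against the $O(\log N)$ decision queries traversed on the descent, is $O(N^{\beta-\Omega(1)})$. The central algebraic step is to show that for every admissible reporting parameter $\gamma\in(0,\ell)$ in the previous theorem there exist $\beta$ and $\alpha$ satisfying simultaneously $1-\beta+\beta p = 2-\frac{2\gamma}{\ell+\gamma}$, $1-\beta+\beta q = 1-\frac{\gamma}{\ell+\gamma}$, and $\beta = \frac{\gamma(\ell-2)}{\ell+\gamma}$. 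Eliminating $\beta$ in favor of $\gamma$ from the third equation and substituting into the first two recovers exactly the stated values of the exponents in $\alpha$, so the three reporting bounds of the previous theorem are all met, yielding the desired contradiction.

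The main obstacle is the per-witness accounting in the recursion. A single decision oracle only certifies \emph{existence} of a witness in a block; turning this into an amortized cost of $N^{\beta-\Omega(1)}$ rather than the naive $\Theta(B)$ scan requires descending the block hierarchy through $\Theta(\log N)$ levels and checking (i) that the total preprocessing summed across all levels forms a geometric series dominated by the top level, so that the preprocessing budget is preserved up to polylogarithmic factors, and (ii) that the descent to each reported witness traverses at most $O(\log N)$ nodes, each charged $Q_D$ on a strictly smaller block. A secondary technical point is occurrences that straddle block boundaries, which the shifted-block layer absorbs at a constant-factor overhead. Once these two points are handled cleanly, the exponent arithmetic above closes the argument and forces $\alpha$ to take the claimed value.
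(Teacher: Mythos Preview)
Your approach---bootstrap a reporting algorithm from the decision oracle via a hierarchical block decomposition, then invoke the reporting lower bound---is \emph{not} the paper's route, and as written it has a real gap. The algebra collapses: you claim that for every $\gamma$ there exist $\alpha,\beta$ solving the three displayed equations, but these are three equations in two unknowns. Combining the second and third, $\beta(1-q)=\tfrac{\gamma}{\ell+\gamma}$ and $\beta=\tfrac{\gamma(\ell-2)}{\ell+\gamma}$, forces $1-q=\tfrac{1}{\ell-2}$ independently of $\gamma$, which pins $\alpha$ to the single value $\alpha=\tfrac{1}{(\ell-2)(\ell-3)+1}$. Relaxing to inequalities does not save you: compatibility of (2) and (3) still requires $q\le\tfrac{\ell-3}{\ell-2}$, equivalently $\alpha\ge\tfrac{1}{(\ell-2)(\ell-3)+1}$; for $\ell=3$ this means only $\alpha=1$ survives, and for every $\ell$ the crucial endpoint $\alpha=0$ (which already subsumes Lemma~\ref{lem:histogram_indexing}) is unreachable. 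A second, independent gap is the handling of queries whose length exceeds $B$: the ``complement relative to coarser windows'' trick is not standard for Parikh vectors and it is unclear what decision question on a small block could certify anything about a substring longer than the block.

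The paper proceeds quite differently and does \emph{not} pass through the reporting theorem. It reduces \textsf{DiffConv3SUM} directly to histogram decision: starting from the encoded string $S$, it spends one alphabet symbol $\sigma^\ast$ as a \emph{splitter}, planting $\Theta(n^{\alpha})$ copies at carefully chosen positions (in a multi-level structure covering all target lengths). A single query vector $v$ is then replaced by $\Theta(n^{\alpha})$ decision queries, each constrained by its $\sigma^\ast$-count to a window of length $O(N/n^{\alpha})$. Choosing $R$ so that the expected number of false positives \emph{per window} is below $\tfrac12$, and amplifying with $O(\log n)$ independent hash functions, makes the decision answers reliable without ever enumerating witnesses. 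The parameter $\alpha$ in the statement is exactly the splitter-count exponent, and the full tradeoff comes from balancing $R$ against $n^{\alpha}$. One further subtlety: the splitter would seem to consume an extra alphabet symbol (pushing the bound to $\ell+1$); the paper recovers the stated $\ell$ by reusing the separator/padding character already present in the alternative encodings of Appendix~\ref{sec:app}.
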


The main structure of these proofs follows our framework. We first embed a \ThreeSUM{} instance and encode it in a string with limited length. We then report the false-positives using the reporting variant of the histogram indexing problem, which implies \CLB{s} for this variant. Finally, we reduce the reporting version to the decision version thereby obtaining \CLB{s} for the decision version. The reduction utilizes a sophisticated data structure for reporting witnesses using many instances of the decision version.

\section{Preliminaries}\label{sec:preliminaries}
In the basic \ThreeSUM{} problem we are given a set $A$ of $n$ integers and we need to answer whether there are $a,b,c \in A$ such that $a+b+c=0$. In a common variant of the classic problem, which we also denote by \ThreeSUM{}, three arrays $A, B$ and $C$ are given and we need to answer whether there are $a \in A,b \in B ,c \in C$ such that $a+b+c=0$. Both versions have the same computational cost (see~\cite{GO95}).
There are some other variants of the \ThreeSUM{} problem shown to be as hard as \ThreeSUM{} up to poly-logarithmic factors.
One such variant is \ConvolutionThreeSUM{}, shown to be hard by \Patrascu{} \cite{Patrascu10}, see also~\cite{KPP16}.
In \ConvolutionThreeSUM{} $A$ is an ordered set and we need to answer whether there exist indices $0 \leq i,j \leq n-1$ such that $A[i]+A[j]=A[i+j]$. We also define \DiffConvolutionThreeSUM{}, in which we are given an ordered set $A$ and we need to verify whether there exists $0 \leq i,k \leq n-1$ such that $A[k]-A[i]=A[k-i]$. It is easy to see that this is equivalent to \ConvolutionThreeSUM{}.

\remove{Let $\mathcal{H}$ be a family of hash functions from $[u]$ to $[m]$.
$\mathcal{H}$ is called {\em linear} if for any $h\in\mathcal{H}$ and any $x,x' \in [u]$, we have $h(x) + h(x') \equiv h(x+x') \; (\modulo m)$.
$\mathcal{H}$ is called {\em almost-linear} if for any $h\in\mathcal{H}$ and any $x,x' \in [u]$, we have
either $h(x) + h(x') \equiv h(x+x') \; (\modulo m)$, or $h(x) + h(x') \equiv h(x+x') +c_h \; (\modulo m)$, where $c_h$ is an integer that depends only on the choice of $h$.
For a function $h:[u] \rightarrow [m]$ and a set $S\subset [u]$ where $|S|=n$, we say that $i\in [m]$ is an {\em overflowed value} of $h$ if $|\{x\in S : h(x) = i\}| > 3n/m$.
$\mathcal{H}$ is called {\em almost-balanced} if for a random $h\in \mathcal{H}$ and any set $S\subset [u]$ where $|S|=n$, the expected number of elements from $S$ that are mapped to overflowed values is $O(m)$.
Baran et al. \cite{BDP05} showed that the family of hash functions defined by Dietzfelbinger \cite{Dietzfelbinger96} is almost-linear and almost-balanced.
}

Let $\mathcal{H}$ be a family of hash functions from $[u] \rightarrow [m]$.
$\mathcal{H}$ is called {\em linear} if for any $h\in\mathcal{H}$ and any $x,x' \in [u]$, we have $h(x) + h(x') \equiv h(x+x') \; (\modulo m)$.
$\mathcal{H}$ is called {\em almost-linear} if for any $h\in\mathcal{H}$ and any $x,x' \in [u]$, we have
either $h(x) + h(x') \equiv h(x+x') +c_h \; (\modulo m)$, or $h(x) + h(x') \equiv h(x+x') + c_h +1 \; (\modulo m)$, where $c_h$ is an integer that depends only on the choice of $h$.
For a function $h:[u] \rightarrow [m]$ and a set $S\subset [u]$ where $|S|=n$, we say that $i\in [m]$ is an overflowed value of $h$ if $|\{x\in S : h(x) = i\}| > 3n/m$.
$\mathcal{H}$ is called {\em almost-balanced} if for a random $h\in \mathcal{H}$ and any set $S\subset [u]$ where $|S|=n$, the expected number of elements from $S$ that are mapped to overflowed values is $O(m)$.
See~\cite{KPP16} for constructions of families that are almost-linear and almost-balanced (see also \cite{BDP05,Dietzfelbinger96}).

For simplicity of presentation, and following the footsteps of previous papers that have used such families of functions~\cite{BDP05,Patrascu10}, we assume for the rest of the paper that almost linearity implies that for any $h\in\mathcal{H}$ and any $x,x' \in [u]$ we have
$h(x) + h(x') \equiv h(x+x') \; (\modulo m)$. There are actually two assumptions taking place here. The first is that there is only one option of so-called linearity. Overcoming this assumption imposes only a constant factor overhead. The second assumption is that $c_h=0$. However, the constant $c_h$ only affects offsets in our algorithm in a straightforward and not meaningful way, so we drop it in order to avoid clutter in our presentation.

\setcounter{theorem}{0}

\section{Convolution Witnesses}
We first prove a \CLB{} for the convolution witnesses problem. We begin with a lemma which has elements from the proof of \Patrascu's reduction~\cite{Patrascu10} and from~\cite{BDP05}. However, the lemma diverges from~\cite{Patrascu10} by treating the hashed subsets differently. Specifically, many special \ThreeSUM{} subproblems are created and then reduced to convolution witnesses.

We say that a binary vector of length $n$ is {\em $r$-sparse} if it contains at most $r$ $1$'s.
An instance of convolution witnesses problem $(u,v,w)$ is $(n,R)$\emph{-sparse} if $u$ and $v$ are both of length $n$ and $n/R$-sparse.

\begin{mylemma}\label{lem-convolution}
Let sequence $A=\langle x_1,\cdots, x_n \rangle$ be an instance of \ConvolutionThreeSUM{}. Let $R = O(n^{\delta})$, where $0<\delta<0.5$ is a constant. There exists a truly subquadratic reduction from the instance $A$ to $O(R^2)$ $(n,R)$-sparse instances of convolution witnesses problem 
for which we need to report $O(n^2/R)$ witnesses (over all instances).
\end{mylemma}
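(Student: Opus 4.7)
The plan is to run a P\v{a}tra\c{s}cu-style hashing of the \ConvolutionThreeSUM{} instance and then distribute the work across $O(R^2)$ separate convolution witnesses instances, one per ordered pair of hash buckets. I would first sample an almost-linear, almost-balanced hash function $h \in \mathcal{H}$ from the universe to $[m]$ with $m = \Theta(R)$, and bucket the indices by $B_c = \{i : h(A[i]) = c\}$. By the almost-balanced property, the expected number of indices that lie in \emph{overflowed} buckets (those of size greater than $3n/m$) is $O(m) = O(R)$, and I dispose of these indices by brute force: for each removed index $k$ I scan all $i$ to check whether $A[i]+A[k-i]=A[k]$, and symmetrically for each removed $i$ or $j$, costing $O(nR) = O(n^{1+\delta})$ time. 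After this cleanup every remaining bucket has size $O(n/R)$.

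For each ordered pair $(a,b)\in[m]^2$, I build the binary $(n,R)$-sparse vectors $u^{a,b}[i] = \mathbf{1}[i\in B_a]$ and $v^{a,b}[j] = \mathbf{1}[j\in B_b]$, and treat them as an instance of the convolution witnesses problem. Almost-linearity forces any honest \ConvolutionThreeSUM{} triple $(i,j,k=i+j)$ with $A[i]+A[j]=A[k]$ to satisfy $h(A[i])+h(A[j]) \equiv h(A[k]) \pmod{m}$, so inside instance $(a,b)$ the only indices that can host such a solution are the $k\in B_{(a+b)\bmod m}$. I therefore query the convolution witnesses oracle only at those indices and verify each reported pair $(i,j)$ in $O(1)$ time against $A[i]+A[j]\stackrel{?}{=}A[i+j]$. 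Every honest triple is still caught (it lands in the instance $(h(A[i]),h(A[j]))$), and the explicit check eliminates false positives.

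The core quantitative step is bounding the total number of reported witnesses. A pair $(i,j)$ with $k=i+j$ is reported in exactly one instance, namely the one indexed by $(h(A[i]),h(A[j]))$, and only if $h(A[i]+A[j]) \equiv h(A[k]) \pmod{m}$. Honest triples always satisfy this but can be absorbed by the real solutions we are happy to expose. For a dishonest triple, where $A[i]+A[j]\neq A[k]$, the near-uniformity of $\mathcal{H}$ on pairs of distinct inputs makes this collision occur with probability $O(1/m)$, so summing over the at most $n^2$ ordered pairs $(i,j)$ gives an expected total of $O(n^2/m) = O(n^2/R)$ witnesses across all $O(R^2)$ instances.

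The reduction itself takes $O(nR^2) = O(n^{1+2\delta})$ time, dominated by writing out the $O(R^2)$ binary vectors of length $n$, plus the $O(nR)$ overflow cleanup; both terms are truly subquadratic since $\delta<1/2$. The step I expect to require the most care is justifying the $O(1/m)$ collision probability for dishonest triples: this does not follow purely from almost-linearity and almost-balancedness but relies on the (near) uniformity built into the specific hash-family constructions imported in the preliminaries from~\cite{Dietzfelbinger96,BDP05,KPP16}. Once that probabilistic bound is granted, the remaining accounting is routine.
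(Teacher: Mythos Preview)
Your proposal is correct and follows essentially the same approach as the paper: hash the values with an almost-linear almost-balanced $h$ into $[R]$, brute-force overflowed buckets in $O(nR)$ time, build one $(n,R)$-sparse convolution witnesses instance per ordered bucket pair $(a,b)$, query only at indices in $B_{(a+b)\bmod R}$, and bound the expected total number of reported witnesses by $O(n^2/R)$ via the $O(1/R)$ collision probability. The paper simply cites~\cite{BDP05} for this last bound, whereas you spell out the probabilistic argument and correctly flag that it leans on the near-uniformity of the specific hash constructions rather than on almost-linearity alone.
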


\begin{proof}
We use an almost-linear, almost-balanced, hash function $h:U\rightarrow [R]$ and create $R$ buckets $B_0,\cdots, B_{R-1}$ where each $B_a$ contains the indices of all elements $x_i \in A$ for which $h(x_i)=a$.
Since $h$ is almost-balanced the expected overall number of elements in buckets with more than $3n/R$ elements is $O(R)$. For each index $i$ in an overflowed bucket, we verify whether $x_i + x_j = x_{i+j}$ for every other $j$ in $O(n)$ time. Hence, we verify whether any index in an overflowed bucket is part of a \ConvolutionThreeSUM{} solution in $O(nR)$ expected time. Since $R = O(n^{1-\Omega(1)})$ the expected time is truly subquadratic time. 

We now assume that every bucket contains at most $3n/R$ elements. From the properties of almost-linear hashing, if $x_i+x_j=x_{i+j}$ then $h(x_i)+h(x_j) \modulo R =  h(x_{i+j }) \modulo R$. Hence, if $x_i+x_j=x_{i+j}$  then $i\in B_a, j\in B_b$ implies that $i+j\in B_{a+b \modulo R}$.

Every three buckets form an instance of \ThreeSUM{} and are uniquely defined by $a$ and $b$. Hence, there are $R(R-1)/2=O(R^2)$  \ThreeSUM{} subproblems each on $O(n/R)$ elements from the small universe $[n]$. 
However, $h$ may generate false positives. So, we must be able to verify that any \ThreeSUM{} solution (a witness) for any instance is indeed a solution (an honest witness) for the problem on $A$.
The number of false positives is expected to be $O(n^2/R)$ over all $O(R^2)$ instances, see~\cite{BDP05}. So, we need an efficient tool to report each such witness in order to be able to solve \ConvolutionThreeSUM{}.

To obtain such a tool, we reduce the problem to the convolution setting in the following way. We generate a characteristic vector $v_a$ of length $n$ for every set $B_a$ ($v_a[i]=1$ if $i \in B_a$ and $v_a[i]=0$ otherwise, for $0 \leq i < n$). This vector will be $3n/R$-sparse, since $|B_a|\leq 3n/R$. Note that: $i\in B_a, j\in B_b\rm{\ and\ } i+j  \in B_{a+b\modulo R} \iff v_a[i]=1, v_b[j]=1 \rm{\ and\ } v_{a+b\modulo R}[i+j]=1$.
Now, for each pair of vectors, $v_a$ and $v_b$, we generate their convolution. Let $v=v_a * v_b$ be the convolution of $v_a$ and $v_b$, and let $\ell=v[i+j]$. If $v_{a+b\modulo R}[i+j]=1$, then we need to extract the $\ell$ witnesses of $v[i+j]$. For each witness $(i,j)$ we check whether $x_i+x_j=x_{i+j}$. We note that if, while verifying, we discover that the overall number of the false-positives exceeds expectation ($cn^2/R$, for some constant $c$) by more than twice we rehash.

Thus, we see that \ConvolutionThreeSUM{} can be solved by generating $O(R^2)$ $(n,R)$-sparse instances of convolution witnesses problem. These instances are computed in $O(nR^2)$ time, which is truly subquadratic as $R = O(n^{\delta})$ for $\delta<1/2$.
\end{proof}

\noindent
It now follows that:

\begin{theorem}~\label{thm-convolutiuon}
Assume the \ThreeSUM{} conjecture is true.
Then for any constant $0<\alpha <1$, there is no algorithm solving the convolution witnesses problem with $O(n^{2-\alpha})$ expected preprocessing time and $O(n^{\alpha/2 - \Omega(1)})$ expected amortized query time {\bf per witness}.
\end{theorem}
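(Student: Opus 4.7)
The plan is to argue by contradiction: assume that for some constant $0<\alpha<1$ there is an algorithm $\mathcal{A}$ for the convolution witnesses problem with expected preprocessing time $O(n^{2-\alpha})$ and expected amortized query time $O(n^{\alpha/2-\Omega(1)})$ per witness. I will compose $\mathcal{A}$ with the reduction of Lemma~\ref{lem-convolution} to solve \ConvolutionThreeSUM{}, and hence \ThreeSUM{}, in truly subquadratic expected time, contradicting the \ThreeSUM{} conjecture.

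Given a \ConvolutionThreeSUM{} instance of size $n$, first I would apply Lemma~\ref{lem-convolution} with parameter $R=n^{\delta}$ for a $\delta\in(0,1/2)$ to be chosen momentarily. This produces $O(R^{2})$ length-$n$ instances of the convolution witnesses problem whose combined witness set has expected size $O(n^{2}/R)$. Running $\mathcal{A}$'s preprocessing on each instance costs $O(R^{2}\cdot n^{2-\alpha})$ in expectation. Then, for each pair of buckets $(a,b)$, I would iterate over the positions $i+j$ at which the third bucket's characteristic vector $v_{a+b\bmod R}$ has a one (the only positions that can support an honest \ConvolutionThreeSUM{} triple), query $\mathcal{A}$ for the witnesses of $(v_{a}*v_{b})[i+j]$, and verify each returned pair against $x_{i}+x_{j}=x_{i+j}$ in $O(1)$. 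Positions where the convolution is zero can be filtered in $O(1)$ each using the convolution output that $\mathcal{A}$ already maintains, so the total query cost is governed by the amortized witness bound $O((n^{2}/R)\cdot n^{\alpha/2-\Omega(1)})$.

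It remains to pick $\delta$ to balance the two cost terms. Preprocessing is truly subquadratic exactly when $2\delta+2-\alpha<2$, i.e.\ $\delta<\alpha/2$, and the witness-reporting cost is truly subquadratic exactly when $2-\delta+\alpha/2-\Omega(1)<2$, i.e.\ $\delta>\alpha/2-\Omega(1)$. Writing the hidden constant in the query bound as $c>0$ (which may be assumed smaller than $\alpha/2$; otherwise $\mathcal{A}$'s amortized query time is already $O(1)$ and the argument is easier), any $\delta$ in the nonempty interval $(\alpha/2-c,\alpha/2)$ --- for concreteness $\delta=\alpha/2-c/2$ --- makes the whole pipeline run in $O(n^{2-\Omega(1)})$ expected time. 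The main obstacle I foresee is a careful amortization of the steps that are not billed to honest witnesses: the $O(nR^{2})$ overhead of constructing the $R$ buckets and the $R^{2}$ sparse vectors inside Lemma~\ref{lem-convolution}, the $O(nR)$ zero-witness lookups, the per-index overflow handling for dense buckets, and the resampling triggered when the false-positive count exceeds twice its expectation all must be confirmed to stay truly subquadratic under the chosen $\delta$. Once that accounting is in place, the composition refutes the \ThreeSUM{} conjecture.
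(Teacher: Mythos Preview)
Your proposal is correct and follows essentially the same approach as the paper: invoke Lemma~\ref{lem-convolution}, bound the total cost by $O(R^{2}\cdot P(n) + (n^{2}/R)\cdot Q(n))$, and pick $R=n^{\delta}$ with $\delta$ just below $\alpha/2$ so that both terms are $O(n^{2-\Omega(1)})$. Your treatment is in fact a bit more explicit than the paper's, which simply writes $R=n^{\alpha/2-\Omega(1)}$ without spelling out the interval $(\alpha/2-c,\alpha/2)$ or the auxiliary $O(nR^{2})$, $O(nR)$ overheads; those side costs are indeed truly subquadratic for $\delta<1/2$, so your anticipated ``obstacle'' dissolves once you observe this.
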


\begin{proof}
We make use of Lemma~\ref{lem-convolution} and its parameter $R$. In particular, the total cost of solving \ConvolutionThreeSUM{} is at most $O(R^2 \cdot P(n,R) + n^2/R \cdot Q(n,R))$ expected time, where $P(n,R)$ is the time needed to preprocess an $(n,R)$-sparse instance of a convolution witness and $Q(n,R)$ is the time per witness query for an $(n,R)$-sparse instance of a convolution witness.

If we choose $R=n^{\alpha /2-\Omega(1)}$ we have that for $P(n)=O(n^{2-\alpha})$ and $Q(n)=O(n^{\alpha/2 - \Omega(1)})$ we
solve \ConvolutionThreeSUM{} in $O(n^{2-\Omega(1)})$ time which is truly subquadratic.
\end{proof}

\section{From Reporting to Decision I: Hardness of Partial Convolutions}~\label{reportingViaDecision}
We further consider the problem of reporting witnesses for convolutions. However, now we use the third step of our framework. We will construct a search data structure over decision problems which will allow us to efficiently search for witnesses. This will be our method for proving \CLB{s} for the decision problems of partial convolutions~\cite{Muthukrishnan95}. Specifically, we intend to generate a data structure that uses convolutions on small sub-vectors of the input vectors in order to solve the problem.  However, the data structure cannot be fully constructed as it will be too large. Hence, the construction is partial and we defer some of the work to the query phase.

We start with Lemma~\ref{lem-convolution}, and focus on an $(n,R)$-sparse instance of the convolution witnesses problem$(u,v,w)$. We generate a specialized search tree for efficiently finding witnesses, which is created in an innovative way exploiting the sparsity of the input.

\subsection{Search Tree Construction}~\label{search-tree-construction}
Assume, without loss of generality, that $n$ is a power of $2$. We construct a binary tree in the following way. First, we generate the root of the tree with the convolution of $v$ and $u$. Then we split $u$ into $2$ sub-vectors, say $u_1$ and $u_2$, each containing exactly $n/(2R)$ $1$s. For each sub-vector we generate nodes that are children of the root, where the first node contains the convolution of $v$ and $u_1$ and the second node contains the convolution of $v$ and $u_2$. We continue this construction recursively so that at the $i$th recursive level we partition $u$ into $2^i$ sub-vectors each containing $n/(2^iR)$ $1$s. A vertex at level $i$ represents the convolution of $v$ and a sub-vector $u_A$ containing $n/2^iR$ $1$s. The vertex has two children, one represents the convolution of $v$ and the sub-vector of $u_A$ with the first $n/2^{i+1}R$ $1$s of $u_A$ (denoted by $u_{A,1}$). The other represents the convolution of $v$ and the rest of $u_A$ with the other $n/2^{i+1}R$ $1$s (denoted by $u_{A,2}$).
We stop the construction at the leaf level in which $u$ is split to sub-vectors that each one of them contains $X/R$ $1$s from $u$, for some $X<n$ to be determined later. Calculating the convolution in each vertex is done bottom-up. First, we calculate the convolution for each vertex in the leaf level. Then, we use these results to calculate the convolution of the next level upwards. Specifically, if we have vertex that represent the convolution $v$ and some sub-vector $u_A$ and it has two children   one which represents the convolution of $v$ and $u_{A,1}$ and the other which represents the convolution of $v$ and $u_{A,2}$, then $(v * u_A)[k] = (v * u_{A,1})[k] + (v * u_{A,2})[k-l_1]$ for every $k \in [0,n+l_1+l_2-1]$, where $l_1$ and $l_2$ are the lengths of $u_{A,1}$ and $u_{A,2}$ respectively, and we consider the value of out of range entries as zero. This way we continue to calculate all the convolutions in the tree until reaching its root.

\smallskip
\noindent
{\bf Construction Time.\ } It is straightforward to verify that the total cost of the construction procedure is dominated by the time of constructing the lowest level of the binary tree. In this level, we have $n/X$ sub-vectors of $u$ as each of them has $X/R$ 1's and the total number of $1$s in $u$ is $n/R$. We calculate the convolution of $v$ with each of these sub-vectors, which can be done in $\tilde{O}(n)$ time. Thus, the total time needed to build the tree is $\tilde{O}(n^2/X)$.
Therefore, the total time for calculating the binary trees for all $O(R^2)$ $(n,R)$-sparse instances of the convolution witnesses problem is $\tilde{O}(R^2n^2/X)$.

\smallskip
\noindent
{\bf Witness Search.\ }
To search for a witness we begin from the root of the binary tree and traverse down to a leaf containing a non-zero value in the result of the convolution at the query index (adjusting the index as needed while moving down the structure). The search for a leaf costs logarithmic time per query (as the tree has logarithmic height and in each level we just need to find a child with a non-zero value in the convolution it represents in the specific index of interest). Within the leaf, representing the convolution of $v$ and some sub-vector $u_A$ of $u$ we can simply find a witness in $\tilde{O}(X/R)$ time as $u_A$ contains just $X/R$ $1$s. Thus, as we have $O(n^2/R)$ false-positives over all $O(R^2)$ instances, the total time for finding all them is $\tilde{O}(n^2X/R^2)$.

\smallskip
Consequently, using the binary tree for solving \ConvolutionThreeSUM{} will cost $\tilde{O}(R^2n^2/X + n^2X/R^2)$ time, which for $X=R^2$ is $\tilde{\Theta}(n^2)$ time.
Since the tradeoff between the preprocessing time and query time meets at $n^2$, any improvement to the running time of either of them will imply a subquadratic solution for the \ConvolutionThreeSUM{} problem.

\subsection{Conditional Lower Bounds for Partial Convolution}

As a consequence of our discussion above we obtain the following results regarding partial convolution and its indexing variant:

\begin{theorem}\label{thm:partial-convolution}
Assume the \ThreeSUM{} conjecture is true.
Then there is no algorithm for the partial convolution problem with $O(N^{1-\Omega(1)})$ time, even if $|S|$ and the number of ones in both input vectors are less than $N^{1-\Omega(1)}$.
\end{theorem}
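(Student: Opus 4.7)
The plan is to reduce \ConvolutionThreeSUM{} to the partial convolution problem, combining Lemma~\ref{lem-convolution} with a flat variant of the search-tree strategy of Section~\ref{search-tree-construction} and invoking a hypothetical fast partial convolution algorithm in place of the leaf-level FFTs. Suppose for contradiction that there is an algorithm $\mathcal{A}$ solving partial convolution in $O(N^{1-\epsilon})$ time whenever $|S|$ and the numbers of ones in both input vectors are each $O(N^{1-\delta})$, for constants $\epsilon,\delta>0$.

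First I would apply Lemma~\ref{lem-convolution} with $R=n^{\eta}$ for a constant $\eta\in[\delta,1/2)$ chosen at the end; this reduces a \ConvolutionThreeSUM{} instance on $n$ elements to $O(R^2)$ $(n,R)$-sparse convolution witnesses instances that jointly contain $O(n^2/R)$ witnesses. For each instance $(u,v,w)$, I would partition the $n/R$ ones of $u$ into $n/X$ contiguous groups of $X/R$ ones to form sub-vectors $u^{(1)},\dots,u^{(n/X)}$, and partition the output range $[0,2n-2]$ into $K=n^{\delta}$ intervals $I_1,\dots,I_K$ of size $n^{1-\delta}$ each, so as to satisfy the $|S|$-sparsity hypothesis of $\mathcal{A}$. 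For every block-interval pair $(j,t)$ I would invoke $\mathcal{A}\bigl(u^{(j)},v,I_t\bigr)$, which returns $(u^{(j)}*v)[k]$ for every $k\in I_t$ in time $O(n^{1-\epsilon})$; and for each $(j,k)$ with $(u^{(j)}*v)[k]>0$, a brute-force scan over the $X/R$ ones of $u^{(j)}$ enumerates every witness pair $(a,b)$ with $u^{(j)}[a]=v[k-a]=1$, each of which is finally checked against the original \ConvolutionThreeSUM{} equation for honesty.

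The total running time decomposes as $o(n^2)$ for the Lemma~\ref{lem-convolution} reduction, $\tilde O\bigl(R^2\cdot(n/X)\cdot K\cdot n^{1-\epsilon}\bigr)=\tilde O(n^{2\eta+\delta+2-\epsilon}/X)$ for the $\mathcal{A}$ calls, and $\tilde O\bigl((X/R)\cdot n^2/R\bigr)=\tilde O(n^{2-2\eta}X)$ for the brute-force extraction. Balancing the last two terms in $X$ (i.e., $X=n^{2\eta+(\delta-\epsilon)/2}$) yields a total of $\tilde O(n^{2+(\delta-\epsilon)/2})$, truly subquadratic whenever $\epsilon>\delta$ --- contradicting the \ConvolutionThreeSUM{} conjecture, and hence, by the equivalence recorded in the Preliminaries, the \ThreeSUM{} conjecture.

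The main obstacle is verifying the sparsity hypotheses of $\mathcal{A}$ at every single call --- with $N=n$ one needs $|I_t|=n^{1-\delta}$, the $n/R=n^{1-\eta}$ ones of $v$, and the $X/R$ ones of $u^{(j)}$ to all be $O(N^{1-\Omega(1)})$ simultaneously --- while keeping $\eta<1/2$ (the regime in which Lemma~\ref{lem-convolution} applies). Working through the arithmetic, these reduce to inequalities of the form $\delta\le\eta\le 1-\tfrac{3\delta}{2}+\tfrac{\epsilon}{2}$, which admit a valid $\eta$ whenever $\delta$ is small enough relative to $\epsilon$. A subtler algebraic point is that the balanced cost $n^{2+(\delta-\epsilon)/2}$ is independent of $\eta$, so the reduction only bites when the hypothetical $\mathcal{A}$ is strictly faster than its own sparsity threshold --- which is precisely the natural regime of interest for the theorem.
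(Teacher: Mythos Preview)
Your reduction has a genuine gap that renders it vacuous. You partition the output range into $K=n^{\delta}$ intervals $I_t$ of size $n^{1-\delta}$ and call $\mathcal{A}(u^{(j)},v,I_t)$ on each. But every such call must \emph{write} $|I_t|=n^{1-\delta}$ output values, so any algorithm $\mathcal{A}$ running in $O(n^{1-\epsilon})$ time must have $\epsilon\le\delta$. Your balance then gives total cost $\tilde O(n^{2+(\delta-\epsilon)/2})\ge n^2$, and no contradiction results. Your last paragraph concedes that the argument ``only bites when $\epsilon>\delta$'' and calls this ``the natural regime of interest'', but in fact that regime is impossible for the trivial output-size reason just stated. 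Relatedly, your brute-force estimate $(X/R)\cdot n^2/R$ is only valid if you restrict the enumeration to indices $k$ lying in the third bucket $B_{a+b\bmod R}$; if you truly scan ``each $(j,k)$ with $(u^{(j)}*v)[k]>0$'' over all $k$, the number of positive pairs across all $R^2$ instances can be $\Theta(n^2)$, giving cost $(X/R)\cdot n^2$.

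The paper's proof differs precisely at this point: it never partitions the output range. Lemma~\ref{lem-convolution} guarantees that witnesses are needed only at the $O(n/R)$ indices belonging to the bucket $B_{a+b\bmod R}$, so for each leaf convolution the paper takes $S$ to be exactly that set of $O(n/R)$ indices. This simultaneously (i) makes the output already $N^{1-\Omega(1)}$-sparse with no $K$-fold blow-up, (ii) lets one choose $R$ large enough that the sparsity hypothesis of $\mathcal{A}$ is met, and (iii) keeps the brute-force bounded by $(X/R)\cdot O(n^2/R)$ since only bucket indices are ever examined. Balancing $R^2(n/X)\cdot n^{1-\epsilon}$ against $n^2X/R^2$ then yields a truly subquadratic total for \emph{any} $\epsilon>0$, without needing $\epsilon>\delta$. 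In short, the missing idea is that the relevant output set $S$ is already sparse by construction---use it, rather than slicing $[0,2n-2]$ into contiguous blocks.
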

\begin{proof}
We make use of Lemma~\ref{lem-convolution}. In order to construct the binary tree as described in Section~\ref{search-tree-construction}, we need to be able compute the convolution of $v$ with some sub-vector of $u$ for each leaf in the tree (all other convolution can be calculated efficiently from the convolutions in the leaves as described in the previous section). Recall that both input vectors have length $N=n$, $n/R$ $1$s (which is $O(N^{1-\Omega(1)})$ for $R=n^a$, where $a$ is a positive constant), and we are interested in finding their convolution result only at the $O(n/R)$ indices (that is, $|S| = O(N^{1-\Omega(1)})$). If we preprocess the input for partial convolution in truly sublinear time (for example, proportional to $n/R$) then the total time for constructing all the search trees will be $O(R^2n^{2-\Omega(1)}/X)$ while the total query time will remain $O(n^2X/R^2)$. Choosing $X=n^c$ for small enough constant $c$ and setting $R=X$, we obtain a subquadratic solution to \ConvolutionThreeSUM{}.
\end{proof}

\begin{theorem}\label{thm:partial-convolution-indexing}
Assume the \ThreeSUM{} conjecture is true.
Then there is no algorithm for the partial convolution indexing problem with $O(N^{2-\Omega(1)})$ preprocessing time and $O(N^{1-\Omega(1)})$ query time, even if both $|S|$ and the number of ones of the input vectors are $O(N^{1-\Omega(1)})$.
\end{theorem}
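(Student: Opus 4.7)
My plan parallels that of Theorem~\ref{thm:partial-convolution}, additionally exploiting the indexing variant's ability to amortize preprocessing across multiple queries. I would first apply Lemma~\ref{lem-convolution} to reduce \ConvolutionThreeSUM{} on an $n$-element input to $O(R^2)$ $(n,R)$-sparse convolution witness instances, where $R = n^\delta$ with $\delta < 1/2$ is to be chosen later. For each resulting instance $(v_a, v_b)$, I would construct the binary search tree of Section~\ref{search-tree-construction}, whose nodes correspond to sub-vectors of $v_a$ and whose $n/X$ leaves each contain $X/R$ ones of $v_a$, for a second parameter $X$ to be chosen later.

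Rather than computing convolutions at each tree node explicitly, I would apply the hypothesized partial convolution indexing algorithm at each node: preprocess the node's sub-vector $u_A$ together with the set $S$ of candidate indices for the instance (those $k$ with $v_{a+b \bmod R}[k]=1$, restricted to the range relevant to $u_A$, of cardinality $O(n/R)$) so that a later query with $v_b$ returns the values $(u_A * v_b)[k]$ for $k \in S$. Because both the number of ones in each input and $|S|$ are $O(n/R)=O(N^{1-\Omega(1)})$ provided $R$ is a sufficiently large polynomial in $n$, the assumed bounds of $O(N^{2-\Omega(1)})$ preprocessing and $O(N^{1-\Omega(1)})$ query apply at every node. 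For each of the $O(n^2/R)$ false positives, I would descend the tree using $O(\log n)$ indexing queries (one per level to determine which child contains the nonzero convolution at the target index) and then enumerate the leaf's $X/R$ ones to identify an honest witness pair $(i,j)$ to verify against the original instance.

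Summing across all $O(R^2)$ instances, the total cost splits into $O(R^2 \cdot (n/X) \cdot n^{2-\Omega(1)})$ for node preprocessing, $\tilde{O}((n^2/R) \cdot n^{1-\Omega(1)})$ for navigation queries, and $\tilde{O}(n^2 X/R^2)$ for leaf enumeration. Choosing $R$ and $X$ as appropriate polynomials in $n$ — small enough to respect the $R = O(n^\delta)$ constraint of Lemma~\ref{lem-convolution} yet large enough that the sparsity assumptions of the indexing algorithm are met — renders every term $O(n^{2-\Omega(1)})$, yielding a truly subquadratic algorithm for \ConvolutionThreeSUM{} and contradicting the \ThreeSUM{} conjecture. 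The main obstacle will be verifying that a simultaneous valid choice of $(R, X)$ exists for each positive-constant savings $\Omega(1)$ in the indexing bounds; a related subtlety is that the indexing data structure must be maintained at internal tree nodes as well, not just at leaves, so that tree navigation in $O(\log n)$ queries per witness is actually possible.
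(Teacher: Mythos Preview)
Your plan has a genuine gap in how the indexing structure is deployed. You preprocess an index at \emph{every} tree node (on the sub-vector $u_A$) and then issue indexing queries with $v_b$ during witness navigation. This inverts the intended amortization: the point of the indexing variant is that one expensive preprocessing serves many cheap queries, yet your scheme performs $\Theta(n/X)$ preprocessings per instance and $\Theta(\log n)$ fresh queries per witness. Concretely, your preprocessing term $R^2\cdot(n/X)\cdot n^{2-\epsilon}$ cannot be made truly subquadratic simultaneously with the leaf-enumeration term $n^2X/R^2$: writing the desired savings as $n^{2-c}$, the first forces $R^2/X \le n^{\epsilon-1-c}<1$ (for any fixed $\epsilon<1$) while the second forces $R^2/X \ge n^{c}>1$, a contradiction. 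Your navigation term $(n^2/R)\cdot n^{1-\epsilon}=n^{3-\epsilon}/R$ is likewise too large, since Lemma~\ref{lem-convolution} caps $R$ at $n^{\delta}$ with $\delta<1/2$. (A secondary issue: by preprocessing the shorter $u_A$ and querying with the full-length $v_b$ you also violate the $M<N$ convention of the problem definition.)

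The paper's argument uses the index the other way around. For each of the $O(R^2)$ instances it preprocesses the common full-length vector $v$ together with the index set $S$ \emph{once}, and then issues one query per leaf sub-vector $u_A$ --- so $n/X$ queries per instance --- to obtain the $S$-restricted leaf convolutions. These values are propagated bottom-up to populate every internal node, after which witness navigation is by ordinary value lookups rather than indexing queries. The construction cost becomes $O\bigl(R^2[\,n^{2-\epsilon} + (n/X)\,n^{1-\epsilon}\,]\bigr)$ and the witness-search cost stays $O(n^2X/R^2)$; taking $R=X=n^c$ for sufficiently small $c$ makes every term $O(n^{2-\Omega(1)})$.
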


\begin{proof}
Use Lemma~\ref{lem-convolution} and the previous discussion. If the preprocessing time for the partial convolution
indexing problem is truly subquadratic and queries are answered in truly sublinear time then the total time for constructing all the structures for all $O(R^2)$ instances is $O(R^2[n^{2-\Omega(1)} + n^{1-\Omega(1)} \cdot n/X])$ while the total time for all of the queries remains $O(n^2X/R^2)$ (note that $N=n$).
Choosing $X=n^c$ for small enough constant $c$ and setting $R=X$, we obtain a subquadratic algorithm for \ConvolutionThreeSUM{}.
\end{proof}

\subsection{Search Tree Construction}~\label{search-tree-construction}

Notice that the convolution of $u$ and $v$ immediately provides the number of witnesses for each index in linear time (since the vectors are binary).

Assume, without loss of generality, that $n$ is a power of $2$. Let $X<n$ be a power of 2 to be determined later. For every $0\leq i \leq \log n-\log X$ we split the two input vectors to $2^i$ sub-vectors of size $n/2^i$, and for each pair of sub-vectors, one from $u$ and one from $v$, we compute their convolution.
We construct a quad-tree such that each vertex contains the convolution between two sub-vectors. Each vertex $x$ (besides the leaves) has exactly 4 children, defined in the following way. If $x$ corresponds to the convolution between two sub-vectors $u_A$ and $v_B$ of length $n/2^i$, then its children correspond to the convolutions between the sub-vectors of length $n/2^{i+1}$ that are the halves of the sub-vectors $u_A$ and $v_B$.

Formally, the convolution vector is defined to be $w[k]= \sum_{j}{u[j]v[k-j]}$. This is defined for all $k\in[0,2n-2]$. Consider a vertex that represents sub-vectors $u_A$ and $v_B$ of length $n/2^i$, where $A = [a,a+n/2^i-1]$ and $B = [b,b+n/2^i-1]$, i.e. $u_A=u[a]u[a+1]\cdots u[a+n/2^i-1]$. Then, $u[r]=u_A[r-a]$ and $v[r]=v_B[r-b]$ (for $a \leq r < a+n/2^i$). Hence, $w[k]$ restricted to $A$ and $B$ is\\ $\sum_{j}{u_A[j-a]v_B[k-j-b]} = \sum_{j}{u_A[j-a]v_B[k-b-a-(j-a)]}=w_{A,B}[k-a-b]$. For most $A$ and $B$ the value $w_{A,B}[k-a-b]$ is out of scope, that is $k-a-b \notin [0,2(n/2^i)-2]$. It is easy to verify that summing over the sub-vectors of size $n/2^i$ that are in scope for $k$, $w[k]=\sum_{A,B}w_{A,B}[k-a-b]$. Hence, if there is a witness for $w[k]$ then we find it by finding a witness for $w_{A,B}$ where $w_{A,B}[k-a-b]$ is non-zero.

\bigskip
\noindent
{\bf Construction Time.\ } Every leaf in the quad-tree represents a pair $(u_A,v_B)$ where $|A|=|B|=X$. Since the constructed tree is a complete quad-tree, it is straightforward to verify that the total cost of the construction procedure is dominated by the time of constructing the lowest level of the quad-tree. In this level, the length of the convoluted sub-vectors is $X$. Thus, the total time needed to build the tree is $\tilde{O}(n^2/X)$ since we compute the convolution between every pair of sub-vectors (one from $u$ and one from $v$) of length $X$ (we have $(n/X)^2)$ pairs), and each such convolution costs $X\log X$ time.
Thus, the total time for calculating the quad trees for all $O(R^2)$ instances of the $(n,R)$-sparse convolution witnesses problem is $O(R^2n^2/X)$.

\bigskip
\noindent
{\bf Witness Search.\ }
To search for a witness we begin from the root of the quad-tree and traverse down to a leaf containing a non-zero value in the result of the convolution at the query index (adjusting the index as needed while moving down the structure). The search for a leaf costs logarithmic time per query (as the tree has logarithmic height and in each level we just need to find a child with non-zero value in the convolution it represents in the specific index of interest). Within the leaf, representing a pair of sub-vectors $A$ and $B$, we compute all witnesses of $w_{A,B}[k-a-b]$ in $\tilde{O}(X)$ time. Thus, the total time for finding all the false-positives is $\tilde{O}(n^2X/R)$.

Consequently, using the quad-tree for solving \ConvolutionThreeSUM{} will cost $\tilde{O}(R^2n^2/X + n^2X/R)$ time. If we choose $X=R^{1.5}$ we balance the two and obtain $n^2R^{0.5}$ which is $\omega(n^2)$ time, which is too high for obtaining a lower bound from the \ThreeSUM{} conjecture. However, we now refine this idea.

\bigskip
\noindent
{\bf Improved Construction.\ }
In order to obtain a better total running time we use a different construction as follows. For every $0\leq i \leq \log n-\log X$ we only split $u$ into $2^i$ sub-vectors of size $n/2^i$, and compute the convolution between $v$ and each sub-vector of $u$. Hence, instead of a quad-tree we obtain a binary tree. In this tree, each vertex corresponds to the convolution of $v$ and a sub-vector $u_A$ of $u$ of length $n/2^i$ and has as its children the 2 vertices corresponding to convolutions between $v$ and the sub-vectors of length $n/2^{i+1}$ that are the halves of $u_A$.
Specifically, as before we define $u_A=u[a]u[a+1]\cdots u[a+n/2^i-1]$. Moreover, $w[k]= \sum_{i}{u[i]v[k-i]}$ restricted to $A$ (and all of $v$) is $\sum_{i}{u_A[i-a]v[k-i]} = \sum_{i}{u_A[i-a]v[k-a-(i-a)]}=w_{A}[k-a]$.

The total cost of building this tree is $O(n^2/X)$ time.
The search for a leaf costs logarithmic time per query. Searching within the leaf will still cost $O(X)$ time in order to find all witnesses for a specific index of the convolution it represents by using the naive algorithm. This still requires $\omega(n^2)$ running time as before, even for the best choice of $R$. However, this is a step toward the next refinement that gives us the desired improvement.

\bigskip
\noindent
{\bf Even Further Improvement.\ }
We further reduce the search time by splitting the vector into (unequal sized) sub-vectors of $u$ during the construction phase according to the location of the 1s in $u$ as opposed to sub-vector lengths. Recall that there are $O(n/R)$ $1$'s in $u$. For the sake of clarity, assume that there are exactly $n/R$ $1$'s in $u$. First we generate the root of the binary tree with the convolution of $u$ and $v$. Then we split $u$ into $2$ sub-vectors, say $u_1$ and $u_2$, each containing exactly $n/2R$ $1$'s. For each sub-vector we generate nodes that are children of the root, where the first node contains the convolution of $u_1$ and $v$ and the second node contains the convolution of $u_2$ and $v$. We continue this construction recursively so that at the $i$'th recursive level we partition $u$ into $2^i$ sub-vectors each containing $n/2^iR$ $1$'s. A vertex at level $i$ represents the convolution of $v$ and a sub-vector $u_A$ containing $n/2^iR$ $1$'s. The vertex has two children, one which represents the convolution of $v$ and the sub-vector of $u_A$ with the first $n/2^{i+1}R$ $1$'s of $u_A$ and the other which represents the convolution of $v$ and the rest of $u_A$ with the other $n/2^{i+1}R$ $1$'s.

The leaf level contains sub-vectors $A$ with $X/R$ $1$'s from $u$. Hence, the computation of all witnesses of $w_{A}[k-a]=\sum_{i}{u_A[i-a]v[k-a-(i-a)]}$ only requires access to at most $X/R$ $i$'s for which $u_A[i-a]=1$, thereby taking $O(n/X)$ time. Thus, the total time cost of this refined solution is $O(R^2n^2/X + n^2X/R^2)$ which for $X=R^2$ is $\Theta(n^2)$ time.
Since the tradeoff between the preprocessing time and query time meets at $n^2$, any improvement to the running time of either of them will imply a subquadratic solution for the \ConvolutionThreeSUM{} problem.

\section{From Reporting to Decision II: Hardness of Partial Matrix Multiplication}\label{sec:partial_matrix_mult}

Using ideas of the same flavor as those from Section~\ref{reportingViaDecision} we prove \CLB{s} on the partial matrix multiplication problem, which is defined in the introduction.

We start with Lemma~\ref{lem-convolution} which shows how to reduce an instance of \ConvolutionThreeSUM{} to $O(R^2)$ instances of convolution witnesses problem. We will prove the \CLB{} by showing that partial matrix multiplication can be used to solve a major component of the computation necessary. We focus, again, on two binary vectors $u$ and $v$ of size $n$ (out of the $O(R^2)$ pairs of vectors), each containing $O(n/R)$ $1$s, and their convolution result $w$. Similar to Section~\ref{reportingViaDecision}, we construct a search tree to seek for witnesses. However, this time we partition each of the vectors $u$ and $v$ into $\Theta(n/X)$ sub-vectors, and the partitioning method here slightly differs from the one in Section~\ref{reportingViaDecision}. The main obstacle is to guarantee the spareness of the input and the required output of the matrices we will use in our construction.

The tree we construct for efficiently searching for witnesses is computed as follows.

\bigskip
\noindent
{\bf Quad Tree Construction with Special Leaves\ }
First partition each vector to $n/X$ sub-vectors with each sub-vector containing $O(X/R)$ $1$s. Then, partition each sub-vector whose length is more than $X$ into smaller sub-vectors with length exactly $X$ except the last one that might be shorter. Pad each sub-vector with $0$s, if necessary, so that all lengths are exactly $X$. Denote the sub-vectors $u_1,u_2,...,u_q$. It is straightforward to observe that:

\begin{observation}
$\Sigma_{i=1}^q |u_i| = O(n)$. That is, $q= cn/X$.
\end{observation}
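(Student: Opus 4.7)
The plan is to track how the total length evolves through the two-phase partitioning described just above the observation. First I would observe that the initial partition cuts $u$ into $n/X$ consecutive pieces balanced by the number of $1$s they contain; since this is a partition of a length-$n$ vector, the pieces have some lengths $L_1,\dots,L_{n/X}$ with $\sum_j L_j = n$. Next I would handle the refinement step: each piece of length $L_j$ is either left alone (if $L_j \le X$) or further chopped into consecutive blocks of length exactly $X$ plus possibly one shorter trailing block, which is then $0$-padded up to length $X$. Hence piece $j$ contributes exactly $\lceil L_j / X \rceil$ final sub-vectors, each of length exactly $X$ after padding.

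Summing the contributions over all initial pieces I would bound
\[
q \;=\; \sum_{j=1}^{n/X} \lceil L_j / X \rceil \;\le\; \sum_{j=1}^{n/X} \bigl( L_j/X + 1 \bigr) \;=\; \frac{\sum_j L_j}{X} + \frac{n}{X} \;=\; \frac{2n}{X},
\]
which gives $q = O(n/X)$, i.e.\ $q = cn/X$ with $c\le 2$. Because every $u_i$ has length exactly $X$, the total padded length is $\sum_{i=1}^{q} |u_i| = qX \le 2n$, proving the $O(n)$ bound on the sum as well.

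There is no real obstacle here; the only subtlety worth flagging is the $+1$ slack introduced by the ceiling, which is precisely why the observation is stated with an unspecified constant $c$ rather than $c=1$. Although the accounting is lightweight, it is essential for the rest of the reduction: it caps the number of leaves of the quad-tree by $O((n/X)^2)$, which in turn is what will later allow us to control the sparsity of the matrices fed into partial matrix multiplication and to bound the number of output entries that must be extracted.
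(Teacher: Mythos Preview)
Your argument is correct and is exactly the straightforward counting the paper has in mind; the paper itself states the observation without proof, merely flagging it as ``straightforward to observe,'' and your ceiling bound $\sum_j \lceil L_j/X\rceil \le 2n/X$ is the natural way to fill in that gap. One tiny wording nit: you say a short piece is ``left alone,'' but of course it is still padded up to length $X$---your subsequent calculation already treats it that way, so the math is fine.
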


Hence, the partition of $u$ is to (padded) sub-vectors $u_1,u_2,...,u_{cn/X}$, each of length $X$ and with $O(X/R)$ $1$s each. The same process is done for vector $v$ and we get sub-vectors $v_1,v_2,...,v_{c'n/X}$ that satisfy the same properties. We assume without loss of generality that $c=c'$ and that $cn/X$ is a power of two.

We construct a quad tree instead of the binary tree from Section~\ref{search-tree-construction}.
For integers $1\leq i<j$ denote by $u_{i,j}$ the sub-vector of $u$ that is the concatenation of the sub-vectors $u_i,u_{i+1}, \ldots, u_j$. Similarly, for integers $1\leq i<j$ denote by $v_{i,j}$ the sub-vector of $v$ that is the concatenation of the sub-vectors $v_i,v_{i+1}, \ldots, v_j$. The root of the quad tree will contain the result of the convolution (which we compute later) of $u$ and $v$, which are $u_{1,\frac{cn}{X}}$ and $v_{1,\frac{cn}{X}}$. The root has four children which correspond to all the convolutions between the 4 pairs of sub-vectors in $\{u_{1,\frac{cn}{2X}}, u_{\frac{cn}{2X}+1,\frac{cn}{X}}\} \times \{v_{1,\frac{cn}{2X}}, v_{\frac{cn}{2X}+1,\frac{cn}{X}}\}$.
Recursively, a vertex representing the convolution of a pair $u_{i,j}$ and $v_{i',j'}$ has 4 children representing the convolutions of
the 4 sub-vector pairs in $\{u_{i,i+(j-i)/2}, u_{(i+(j-i)/2)+1,j}\} \times \{v_{i',i'+(j'-i')/2}, v_{(i'+(j'-i)/2)+1,j'}\}$.

Our goal is to compute for each vertex in this tree the convolution of the two sub-vectors that it covers. In Section~\ref{reportingViaDecision} we directly computed the convolution for every node. Here, we will use matrix multiplication in order to achieve our reduction. We do the computation on the leaves of the tree (we shortly explain how) and then use a bottom-up traversal of the tree where we compute the convolution for an inner vertex from the convolutions of its children (without computing it directly).

\bigskip
\noindent
{\bf Seeking a Witness\ }
This is literally done in the same way as described in Section~\ref{search-tree-construction}. When we are looking for a witness for a query number $x_i$ we traverse down the tree in logarithmic time, until we reach a leaf with at most $O(X/R)$ 1s. A naive $O(X/R)$ algorithm at the sub-vectors of a leaf will complete the process. Thus the total query time is $O(n^2X/R^2)$.

\bigskip
\noindent
{\bf Partial Convolution via Partial Matrix Multiplications\ }
We make use of \emph{matrix multiplication} in order to efficiently compute the convolutions at the leaves as follows.

We construct a set of matrices $U=\{U_1,U'_1,U_2,U'_2,...,U_{cn/X},U'_{cn/X}\}$, two for each $u_i$ as follows.
The first row in matrix $U_i$ is the sub-vector $u_i$. The $k$th row of $U_i$ is the sub-vector $u_i$ shifted right by $k$, i.e. discard the $k-1$ least significant bits and add $k-1$ $0$s to be the new most significant bits.
Each such matrix is called a \emph{shift right matrix}. In a similar manner, the first row in matrix $U'_i$ is the sub-vector $u_i$. The $k$th row of $U_i$ is the sub-vector $u_i$ shifted left by $k$, i.e. discard the $k-1$ most significant bits and add $k-1$ $0$s to be the new least significant bits. Each such matrix is called a \emph{shift left matrix}

Now, construct a matrix $V$ from the $O(n/X)$ sub-vectors of $v$ such that the $i$th column of $V$ is $v_i$. This matrix has $X$ rows and $O(n/X)$ columns. We partition  $V$ into $O(n/X^2)$ squared matrices $V_1,V_2,...,V_{cn/X^2}$ each having exactly $X$ rows and columns.

The matrix multiplication of $U_i$ and $V_j$ together with the matrix multiplication of $U'_i$ and $V_j$ computes all the shifts of $u_i$ with the $X$ sub-vectors represented by $V_j$. In other words, we are computing the convolution of $X$ leaves with two matrix multiplication of two matrices of size $X\times X$. Hence, the overall time to compute the convolutions of the leaves will be  $O(n/X \cdot n/X^2 \cdot T(X))$ where $T(X)$ is the time needed for multiplying two squared matrices of size $X \times X$. The computation of the convolution of inner vertices can be computed from the convolution of its children. It is straightforward to see that the time of the computations of the convolutions is dominated by the computation of the convolutions in the leaves.

\subsection{Hardness of Partial Matrix Multiplication}

We are now ready to prove the following \CLB{s} on partial matrix multiplication and its indexing variant.

\begin{theorem}~\label{partialMatrix1}
Assume the \ThreeSUM{} conjecture is true.
Then there is no algorithm for the partial matrix multiplication problem running in $O(N^{2-\Omega(1)})$ expected time, even if $|S|$ and the number of ones in the input matrices is $O(N^{2-\Omega(1)})$.
\end{theorem}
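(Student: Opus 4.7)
The plan is to follow the reporting-via-decision framework exactly as in the proof of Theorem~\ref{thm:partial-convolution}, but using the hypothetical partial matrix multiplication algorithm in place of partial convolution. Starting from Lemma~\ref{lem-convolution}, an instance of \ConvolutionThreeSUM{} reduces to $O(R^2)$ $(n,R)$-sparse convolution witnesses instances, with a total of $O(n^2/R)$ witnesses to report. For each such instance $(u,v,w)$, I build the quad-tree of Section~\ref{sec:partial_matrix_mult} from partitions of $u$ and $v$ into $O(n/X)$ padded sub-vectors of length $X$ carrying at most $O(X/R)$ ones apiece, and handle the inner-vertex convolutions and witness searches exactly as in that section, at an overall witness-search cost of $O(n^2 X / R^2)$. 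What remains is the leaf computation, which I reduce to calls to partial matrix multiplication via the shift matrices $U_i$, $U'_i$ and the column-block matrices $V_j$.

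Each shift matrix $U_i$ (and similarly $U'_i$ and $V_j$) is $X \times X$ with every row being a shifted copy of a length-$X$ vector with $O(X/R)$ ones; hence each input matrix has $O(X^2/R)$ ones. With $N = X$ and $R = X^{\Omega(1)}$, this is $O(N^{2-\Omega(1)})$, fulfilling the sparsity hypothesis on the inputs. The second hypothesis, $|S| = O(N^{2-\Omega(1)})$, is handled by pre-identifying the candidate non-zero positions of $U_i V_j$ directly from the supports of the ones: a position $(r,c)$ can be non-zero only if there exist $p, q$ with $u_i[p] = 1$, $v_\ell[q] = 1$ (where $v_\ell$ is the $c$-th column of $V_j$), and $r = q - p$. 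Enumerating all such triples produces a candidate set of size at most $(X/R)\cdot(X^2/R) = O(X^3/R^2)$, computable in time proportional to its own size from the support lists, and this is $O(X^{2-\Omega(1)})$ whenever $R^2 \ge X^{1+\Omega(1)}$.

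Assembling the costs, each of the $O(n^2/X^3)$ partial matrix multiplications per convolution-witnesses instance runs in $O(X^{2-\Omega(1)})$ expected time under the hypothesis, giving a total leaf-computation cost of $O(R^2 n^2 / X^{1+\Omega(1)})$ across all instances. Combined with the search cost, the overall expected time for solving \ConvolutionThreeSUM{} becomes
\[
O\!\left(\frac{R^2 n^2}{X^{1+\Omega(1)}} + \frac{n^2 X}{R^2}\right).
\]
Setting $X = n^a$ and $R = n^b$ with $a < 2b$ (to make the search term truly subquadratic), $2b < a(1+\Omega(1))$ (for the leaf-computation term), $b \ge a(1+\eta)/2$ for a small $\eta$ strictly less than the exponent gap guaranteed by the hypothesis (for the sparsity of $S$), and $b < 1/2$ (respecting the Lemma's restriction on $R$), the expression above collapses to $O(n^{2-\Omega(1)})$, contradicting the \ThreeSUM{} conjecture. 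The main obstacle, as flagged at the beginning of Section~\ref{sec:partial_matrix_mult}, is to guarantee sparsity in both the inputs and the required output simultaneously; the identification of candidate non-zero positions via the support lists is precisely what makes both conditions hold together.
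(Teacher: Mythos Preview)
Your choice of $S$ creates a fatal accounting problem. You take $S$ to be the set of potentially non-zero entries of $U_iV_j$, enumerated by pairing each $1$ in $u_i$ with each $1$ in $V_j$. Per product this is $O(X^3/R^2)$ triples, but sum this over all products and all instances: across the $O(R^2)$ instances, the total number of such pairs is exactly
\[
\sum_{a,b\in[R]} |B_a|\cdot|B_b| \;=\;\Bigl(\sum_a |B_a|\Bigr)^2 \;=\; n^2,
\]
independently of $X$ and $R$. So merely writing down the sets $S$ you feed to the partial matrix multiplication oracle already costs $\Theta(n^2)$, and the reduction can never be truly subquadratic---no contradiction with the \ThreeSUM{} conjecture follows. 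Put differently, your $S$ is precisely the list of all witness candidates at the leaves; enumerating it \emph{is} the brute-force witness search that the quad tree was meant to avoid, so the argument is circular.

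The paper sidesteps this by choosing $S$ from the other side of the problem: the only convolution indices that matter are the $O(n/R)$ target indices coming from the third bucket $B_{a+b\bmod R}$, and these translate into at most $O(X^2/R)$ output positions per $X\times X$ product (enforced by splitting when necessary). Crucially, this $S$ is determined by the bucket elements alone and can be listed without touching the supports of $u$ and $v$, so it does not implicitly solve the witness problem. That is the ``sparsity of the required output'' that the section's opening paragraph warns is the main obstacle.
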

\begin{proof}
Consider the reduction from Lemma~\ref{lem-convolution}. Finding $O(n^2/R)$ witnesses from the $O(R^2)$ instances of witnesses convolution problem can be done by using the tree we have just constructed and seeking witnesses within it. Recall that for each instance $(u,v,w)$ of the witnesses convolution problem we only look for witnesses for $O(n/R)$ locations in $w$ (that corresponds to the indices in the bucket $B_{a+b \mod R}$ it represents, see details in Lemma~\ref{lem-convolution}).

As explained in the previous subsection, the total time to construct the quad tree is $O(n/X \cdot n/X^2 \cdot T(X))$ where $T(X)$ is the time needed for multiplying two squared matrices of size $X \times X$. The results of the matrix multiplications of some $V_i$ and all the matrices in $U$ correspond to the convolution of $X$ sub-vectors of $v$ with $u$. However, as noted before, we are interested in computing the result of each convolution only at $O(n/R)$ locations (that correspond to the indices specified by elements of bucket $B_{a+b \mod R}$).
Therefore, after partitioning $u$ and creating the matrices in $U$ which are of size $X \times X$ the number of locations in the result of the multiplication of some $V_i$ with some matrix in $U$ that we are interested in is $X^2/R$ in expectation.
But we will require the worst-case number of locations to be at most $X^2/R$, so for each result that has more than $X^2/R$ locations which we are interested in we will split the computation into several iterations, each time considering a set of different $X^2/R$ locations.
Since the total number of locations over all the results is no more than $n/R$, this only imposes a constant factor time overhead.

Now, in the construction time of the quad tree, instead of the time $T(x)$ for multiplying two squared matrices of size $X \times X$ we just require the time to compute their multiplication in some specific $X^2/R$ locations. Moreover, the number of $1$s in any $V_i$ is $O(X^2/R)$ which is also the number of $1$s in any matrix
in $U$. If the time for computing the multiplication of these sparse matrices at the $O(X^2/R)$ specified locations is $O(X^{2-\Omega(1)})$ then the total construction time will be $O(n^2/X^{1+\Omega(1)})$. Therefore, the total time for constructing all the trees for all $R^2$ instances will be $O(R^2n^2/X^{1+\Omega(1)})$, and the query time is $O(n^2X/R^2)$. Choosing $X=n^{\epsilon}$ for some small constant $0<\epsilon$ and $R=X^{1/2+1/4\epsilon}$ we obtain a truly subquadratic solution for \ThreeSUM{}.
\end{proof}

\begin{theorem}~\label{partialMatrix2}
Assume the \ThreeSUM{} conjecture is true.
Then there is no algorithm for the partial matrix multiplication indexing problem with $O(SIZE(S))$ preprocessing time and $O(N^{2-\Omega(1)})$ query time.
\end{theorem}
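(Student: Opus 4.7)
The plan is to reuse the reduction from Lemma~\ref{lem-convolution} together with the quad-tree construction with special leaves developed in Section~\ref{sec:partial_matrix_mult}, but now to interpret each leaf-computation as a query to an instance of the partial matrix multiplication indexing problem. First I would apply Lemma~\ref{lem-convolution} to produce $O(R^2)$ $(n,R)$-sparse convolution witness instances, and for each such instance partition $u$ and $v$ into sub-vectors of length $X$ each containing $O(X/R)$ ones, yielding shift matrices $U_i, U'_i$ and concatenation matrices $V_j$ exactly as in Section~\ref{sec:partial_matrix_mult}. The leaf-level convolutions then become exactly the partial products $U_i\cdot V_j$ and $U'_i\cdot V_j$ restricted to the $O(X^2/R)$ entries specified by the bucket $B_{a+b\modulo R}$ for the pair $(a,b)$.

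Next I would cast each shift matrix as its own instance of the partial matrix multiplication indexing problem: fix $A:=U_i$ (or $A:=U'_i$) of dimension $N:=X$, let $\mathcal{S}$ contain one set $S_j$ of $O(X^2/R)$ locations per sub-vector matrix $V_j$ (the entries of $U_i\cdot V_j$ needed by the corresponding bucket), and use the $V_j$'s as the queries. Per such use one has $SIZE(S)=O(n/X^2)\cdot O(X^2/R)=O(n/R)$ together with $O(n/X^2)$ queries, so the hypothesised indexing algorithm would handle this use in $O(n/R)+O(n/X^2)\cdot O(X^{2-\Omega(1)})=O(n/R)+O(n/X^{\Omega(1)})$ time. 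Summing over the $O(R^2\cdot n/X)$ shift matrices across all $O(R^2)$ convolution-witness instances, and adding the tree-traversal witness search cost of $O(n^2X/R^2)$ inherited from Section~\ref{sec:partial_matrix_mult}, the total running time for solving \ConvolutionThreeSUM{} becomes
$$O\!\left(\frac{Rn^2}{X}\right)+O\!\left(\frac{R^2n^2}{X^{1+\Omega(1)}}\right)+O\!\left(\frac{n^2X}{R^2}\right).$$

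Choosing $R=n^\alpha$ for a small constant $0<\alpha<1/2$ (so that Lemma~\ref{lem-convolution} applies) and $X=n^\beta$ with $\beta$ slightly smaller than $2\alpha$ makes each of the three terms $O(n^{2-\Omega(1)})$, yielding a truly subquadratic algorithm for \ConvolutionThreeSUM{} and hence contradicting the \ThreeSUM{} conjecture.

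The main step requiring care is the invocation pattern of the assumed indexing algorithm: the theorem statement concerns a single preprocessed $A$ followed by a single sequence of queries, whereas my reduction invokes this algorithm $O(R^2n/X)$ separate times (once per shift matrix). This is legitimate—each invocation simply pays its own $O(SIZE(S))$ preprocessing and $O(N^{2-\Omega(1)})$ per-query cost—but it does require that the location sets $S_j$ for each invocation be determined in advance, so that the $O(SIZE(S))$ preprocessing premise actually applies. This is immediate: once the almost-linear hash $h$ is chosen, the buckets, the partitioning of $u$ and $v$ into sub-vectors, and hence every set $S_j$ are all fixed, so the entire collection $\mathcal{S}$ is known to the reduction before any query is issued.
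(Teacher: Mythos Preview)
Your proposal is correct and follows essentially the same route as the paper: Lemma~\ref{lem-convolution}, the quad-tree with special leaves from Section~\ref{sec:partial_matrix_mult}, interpretation of the leaf computations as instances of the indexing problem, and the resulting three-term cost $O(Rn^2/X)+O(R^2n^2/X^{1+\Omega(1)})+O(n^2X/R^2)$. The one cosmetic difference is the assignment of roles: you preprocess each shift matrix $U_i$ and treat the $V_j$'s as queries, whereas the paper's text says it preprocesses each $V_j$ and queries with the matrices in $U$; either choice yields the same total preprocessing and query cost (and in fact your choice is the one consistent with the paper's displayed formula $O(n/X\cdot[P+n/X^2\cdot Q])$). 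Your parameter discussion is slightly looser than the paper's---``$\beta$ slightly smaller than $2\alpha$'' must also satisfy $\beta>\alpha$ for the first term to be subquadratic---but the feasible window $(\max(\alpha,2\alpha/(1+\epsilon)),2\alpha)$ is nonempty, so the argument goes through.
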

\begin{proof}
We make use of Lemma~\ref{lem-convolution} and our quad tree.
This time, we preprocess $V_i$ with a collection $S = \{S_1,S_2,...,S_{O(n/X)}\}$. Each set $S_i \in S$ corresponds to the $O(X^2/R)$ indices of interest in the result of the multiplication of $V_i$ and some matrix in $U$ (see the details in the proof of the previous theorem). After the preprocessing phase we answer queries to compute the partial multiplication of $V_i$ with matrices in $U$ using the indices from sets in $S$. The construction time of the quad tree is $O(n/X \cdot [P(X,n/R)+ n/X^2 \cdot Q(X,n/R)])$, where $P(X,n/R)$ is the preprocessing time of partial matrix multiplication indexing for sparse matrices of size $X \times X$ while $SIZE(S)=O(n/R)$, and $Q(X,n/R)$ is the corresponding query time. Therefore, if there is an algorithm for the partial matrix multiplication indexing problem with $O(SIZE(S)) = O(n/R)$ preprocessing time and $O(X^{2-\Omega(1)})$ query time, then the total construction time for all trees is $O(R^2 \cdot n/X \cdot [n/R + n/X^2 \cdot X^{2-\Omega(1)}]) = O(Rn^2/X+R^2n^2/X^{1+\Omega(1)})$, and the total time spent on all queries is $O(n^2X/R^2)$. Choosing $X=n^{\epsilon}$ for some small constant $0< \epsilon$ and $R=X^{1/2+1/4\epsilon}$ we obtain a truly subquadratic solution to \ThreeSUM{}.
\end{proof}

Notice that Theorems~\ref{partialMatrix1} and \ref{partialMatrix2} hold even when considering the simple case of \emph{boolean} matrix multiplication.

\section{Hardness of Data Structures for Histogram Indexing}

In order to prove a \CLB{} for both the histogram indexing problem and the histogram (indexing) reporting problem, we will first focus on reducing \ThreeSUM{} to the histogram reporting problem, and then turn our focus to reducing the the histogram reporting problem to the histogram indexing problem.

\subsection{Reducing \ConvolutionThreeSUM{} to Histogram Reporting}\label{ss:C3S_to_HR}

We are given an ordered set $A$ of integers $x_1,x_2,...,x_n$ for which we want to solve \DiffConvolutionThreeSUM{}.
Our methodology here is to encode the input integers into a compact string $S$ so that histogram indexing with carefully chosen query patterns implies a solution to \DiffConvolutionThreeSUM{}.
Since the size of the universe of the input integers can be as large as $n^3$, we hash down the universe size while (almost) maintaining the linearity property of the input.
To do this, we make use of an almost-linear almost-balanced hash function $h:U \rightarrow [R]$ as defined in Section~\ref{sec:preliminaries}, and apply $h$ to all of the input integers.

After utilizing $h$ to compress the input range, we are ready to encode the input and create the string $S$. To do this, we encode each $h(x_k)$ separately, and then concatenate the encodings in the same order as their corresponding original integers in $A$. We use the following encoding scheme, using an alphabet $\Sigma = \{\sigma_0,\sigma_1,,...,\sigma_{\ell-1}\}$. Some other encoding schemes, which surprisingly provide the same bounds, are discussed in Appendix~\ref{sec:app}.

\textbf{Encoding 1}. The encoding will consist of two separate partial encodings concatenated together. The first partial encoding is partitioned into $\ell$ parts which together will represent $h(x_k)$ in base $R^{1/\ell}$.
For $0\leq j \leq \ell -1$ the $j$th part of this first partial encoding is a unary representation of $ p_{j,h(x_k)} = \lfloor h(x_k)/R^{j/\ell} \rfloor \modulo R^{1/\ell}$ using $\sigma_j$, and is denoted by $enc(j,h(x_k)) = \sigma_j^{p(j,h(x_k))}$.
The first partial encoding of $h(x_k)$, which we also call a \emph{regular encoding} of $h(x_k)$, is

 $enc_\ell(h(x_k))  = enc(0,h(x_k)) enc(1,h(x_k)) \cdots enc(\ell-1,h(x_k))
\\  = \sigma_0^{p_{0,h(x_k)}}\sigma_1^{p_{1,h(x_k)}}\cdots \sigma_{\ell-1}^{p_{{\ell-1},h(x_k)}}$.

For the second partial encoding we encode the \emph{complement} of each $enc(j,h(x_k))$ which is the unary representation of $\bar p_{j,h(x_k)} = R^{1/\ell}-(\lfloor h(x_k)/R^{j/\ell} \rfloor \modulo R^{1/\ell})$ using $\sigma_j$, and is denoted by $\overline{enc}(j,h(x_k))$.
The second partial encoding of $h(x_k)$, which we also call a \emph{complement encoding} of $h(x_k)$, is
\\ $\overline{enc_\ell}(h(x_k)) =  \overline{enc}(0,h(x_k)) \overline{enc}(1,h(x_k)) \cdots \overline{enc}(\ell-1,h(x_k))
 = \sigma_0^{\bar p_{0,h(x_k)}}\sigma_1^{\bar p_{1,h(x_k)}}\cdots \sigma_{\ell-1}^{\bar p_{{\ell-1},h(x_k)}}$.
The full encoding of $h(x_k)$ is the concatenation of $\overline{enc_\ell}(h(x_k))$ and $enc_\ell(h(x_k))$ which we denote by $ENC_\ell(h(x_k))$.
Finally, the string $S$ is set to be \\ $ENC_\ell(h(x_1)) ENC_\ell(h(x_2)) \cdots ENC_\ell(h(x_n))$. The size of $S$ is clearly $N=O(\ell\cdot R^{\frac{1}{\ell}}n)$. We denote the substring of $S$ starting at the location of the beginning of $enc_\ell(h(x_i))$ and ending at the location of the end of $\overline{enc_\ell}(h(x_j))$ by $S_{i,j}$.

\smallskip

Consider a Parikh vector $v_k$ obtained from $x_k$ and $h$ where the $r$th element has a count of $\bar p_{r,h(x_k)}+ R^{1/\ell} \cdot (k-1)$. We say that $v_k$ \emph{represents} $x_k$. For a vector $w=(w_0,w_1,...,w_m)$ we define $w^{>>1}=(0,w_0,w_1,...,w_{m-1})$. We also define the \emph{carry set} of $v_k$ to be $V_k=\{v_k+R^{1/\ell}u-u^{>>1}\ |\  u=(u_0,u_1,...,u_{\ell-2},0),\ u_i \in \{0,1\}\ 0 \leq i < \ell-1 \}$. It is easy to see that $|V_k|=2^{\ell-1}$ and that $V_k$ can be obtained from $v_k$ in $O(\ell\cdot 2^{\ell-1})$ time. We call $v_k$ the \emph{base} of $V_k$. We have the following lemma regarding $V_k$:

\begin{mylemma}\label{correctness_lemma}
If there exists a pair $x_i, x_j$ such that $x_k=x_j-x_i$ and $k=j-i$, then the Parikh vector of $S_{i,j}$ must be in $V_k$.
\end{mylemma}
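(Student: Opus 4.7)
The plan is to compute the Parikh vector of $S_{i,j}$ symbol-by-symbol by decomposing $S_{i,j}$ into its constituent sub-encodings, and then identify the witness vector $u$ in the definition of $V_k$ with the carry sequence of the base-$R^{1/\ell}$ addition $h(x_i)+h(x_k)\equiv h(x_j)$ modulo $R$.

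First I would unpack $S_{i,j}$. Since $ENC_\ell(h(x_m)) = \overline{enc_\ell}(h(x_m))\,enc_\ell(h(x_m))$, the substring that starts at the beginning of $enc_\ell(h(x_i))$ and ends at the end of $\overline{enc_\ell}(h(x_j))$ decomposes as $enc_\ell(h(x_i))$, followed by $\overline{enc_\ell}(h(x_m))\,enc_\ell(h(x_m))$ for each $i+1 \leq m \leq j-1$, followed by $\overline{enc_\ell}(h(x_j))$. By construction $p_{r,h(x_m)} + \bar p_{r,h(x_m)} = R^{1/\ell}$, so each of the $j-i-1$ intermediate pairs contributes exactly $R^{1/\ell}$ copies of every symbol $\sigma_r$. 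The interior telescopes away, and the count of $\sigma_r$ in $S_{i,j}$ collapses to $p_{r,h(x_i)} + (j-i-1)R^{1/\ell} + \bar p_{r,h(x_j)} = k R^{1/\ell} + p_{r,h(x_i)} - p_{r,h(x_j)}$, using $k = j-i$ and $\bar p = R^{1/\ell} - p$.

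Next I would invoke almost-linearity (under the paper's $c_h = 0$ simplification): since $x_j = x_i + x_k$ we get $h(x_i) + h(x_k) \equiv h(x_j) \;(\modulo R)$. Performing this addition digit-by-digit in base $R^{1/\ell}$, let $u_r \in \{0,1\}$ be the carry propagated from digit position $r$ to position $r+1$, with the convention $u_{-1}=0$; then at every $r$ the digit identity $p_{r,h(x_i)} + p_{r,h(x_k)} + u_{r-1} = p_{r,h(x_j)} + R^{1/\ell}u_r$ holds. Substituting this identity into the count from the previous step and using $v_k[r] = kR^{1/\ell} - p_{r,h(x_k)}$, the count of $\sigma_r$ in $S_{i,j}$ becomes $v_k[r] + R^{1/\ell}u_r - u_{r-1}$, which is exactly the $r$th coordinate of $v_k + R^{1/\ell}u - u^{>>1}$. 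Assembling over all $r$ places the Parikh vector of $S_{i,j}$ into $V_k$.

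The main obstacle is the bookkeeping at the most significant digit: the definition of $V_k$ fixes $u_{\ell-1} = 0$, which corresponds precisely to the case where $h(x_i) + h(x_k)$ does not overflow $R$. When the congruence is witnessed by an overflow, the top carry would equal $1$ rather than $0$, and the Parikh vector of $S_{i,j}$ would instead land in a shifted copy of $V_k$. I would reconcile this by either extending $V_k$ to include the $u_{\ell-1} = 1$ branch (still only $2^{\ell-1}$ additional vectors, which does not affect the asymptotic query bounds that will consume $V_k$), or by arranging the reduction so that the $n$ input integers are mapped such that no such wrap-around occurs for the pair under examination. Either resolution is routine; the real content of the lemma is the telescoping plus digit-wise-carry identification above.
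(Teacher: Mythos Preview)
Your proposal is correct and follows essentially the same approach as the paper: decompose $S_{i,j}$ into $enc_\ell(h(x_i))$, the $k-1$ full blocks, and $\overline{enc_\ell}(h(x_j))$, then identify the Parikh vector with an element of $V_k$ via the base-$R^{1/\ell}$ addition and its carries. Your explicit digit-carry recursion and your flagging of the top-digit overflow case are in fact cleaner than the paper's treatment, which alludes to ``all possible $2^\ell$ carry scenarios'' while defining $V_k$ with only $2^{\ell-1}$ vectors; your proposed fix of enlarging $V_k$ by a factor of two is the right resolution and is harmless for the downstream bounds.
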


\begin{proof}
Since $h$ is linear we know that $h(x_k)=h(x_j)-h(x_i)$. This is equivalent to saying that $R+R^{\frac{\ell - 1}{\ell}}-h(x_k)=R+R^{\frac{\ell - 1}{\ell}}-[h(x_j)-h(x_i)] = (R+R^{\frac{\ell - 1}{\ell}}-h(x_j))+h(x_i)$. In $S_{i,j}$ we have the full encoding of all integers $x_{i+1},..., x_{j-1}$. There are exactly $k-1$ integers between $x_i$ and $x_j$. Therefore, each of them adds $R^{1/\ell}$ occurrences of each $\sigma_r$ ($0 \leq r \leq l-1$) to $S_{i,j}$. In addition to the full encodings of these integers we have two more partial encodings: $enc_\ell(h(x_i))$ and $\overline{enc_\ell}(h(x_j))$. Notice that $enc_\ell(h(x_i))$ and $\overline{enc_\ell}(h(x_j))$ represent $h(x_i)$ and $R+R^{\frac{\ell - 1}{\ell}}-h(x_j)$, respectively, in base $R^{1/\ell}$. If we look at the vector $v_k$ (the base of $V_k$) after subtracting $(k-1)R^{1/\ell}$ from the count of each character, we obtain the representation of $R+R^{\frac{\ell - 1}{\ell}}-h(x_k)$ in base $R^{1/\ell}$, which intuitively implies that $v_k$ is the Parikh vector that we are looking for. However, it is  possible to generate a carry at each of the $\ell$ digits of the base $R^{1/\ell}$ during the addition of $(R+R^{\frac{\ell - 1}{\ell}}-h(x_j))+h(x_i)$. To handle these carries we consider all possible $2^\ell$ carry scenarios and generate a vector for each of the $2^{\ell-1}$ scenarios. These carry scenarios are exactly represented by the vectors in $V_k$, as each vector $u$ in the definition of $V_k$ specifies the indices in which we have a carry. Hence, the Parikh vector of $S_{i,j}$ must be one of the vectors in $V_k$.
\end{proof}

Thus, we preprocess $S$ with an algorithm for histogram reporting, and then query the resulting data structure with all the vectors in $V_k$, whose base $v_k$ represents some $x_k$, in an attempt to decide if $x_k$ is part of a solution to \DiffConvolutionThreeSUM{}. The reported locations are classified into two types:

\noindent\textbf{Candidates}: Locations where the histogram match begins and ends exactly between the complement and regular encodings of two input integers. All these locations correspond to $x_i$ and $x_j$ such that for the particular $h(x_k)$ for which the query was constructed, we have $h(x_k)=h(x_j)-h(x_i)$ and also $k = j - i$.

\noindent\textbf{Encoding Errors}: All matches that are not candidates.

While encoding errors clearly do not provide a solution for \DiffConvolutionThreeSUM{} on $A$, candidates may also not be suitable for a solution since the function $h$ introduces false-positives.
The following lemma bounds the total expected number of false-positives (both from false-positive candidates and encoding errors) that can be reported by a single query vector (and the vectors in the carry set that it serves as it base). Its proof appears in the appendix.

\begin{mylemma}\label{lem:histogram_fp}
The expected number of false positives that are reported when considering all vectors in $V_k$ (whose base represents $x_k$) as queries is $O(2^{\ell-1} N/R^{1-\frac{1}{\ell}})$.
\end{mylemma}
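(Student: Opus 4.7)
The plan is to directly upper-bound the expected total number of substrings of $S$ whose Parikh vector equals some $v \in V_k$, since false positives of either kind (spurious candidates and encoding errors alike) are dominated by this match count. A matching substring is determined by its starting position $p$, which I parameterize by an offset $o \in [0, \ell R^{1/\ell})$ inside an encoding block together with a block index $i \in \{1, \ldots, n\}$, yielding at most $N$ candidate start positions; once $p$ is fixed, the length $L = \sum_r v_r$ forces the ending position to be a specific $(o', j)$.

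The crucial structural observation is that every complete encoding block contributes exactly $R^{1/\ell}$ occurrences of each character $\sigma_r$, since $p_{r, h(x)} + \bar p_{r, h(x)} = R^{1/\ell}$. Hence, for a candidate substring spanning from block $i$ to block $j$ with $j - i = k$, the interior blocks contribute identically to every coordinate and cancel cleanly with the $R^{1/\ell}(k-1)$ term of the query. The match condition then reduces to $\ell$ equations of the form $A_r(o, h(x_i)) + B_r(o', h(x_j)) = \bar p_{r, h(x_k)}$ (up to carry adjustments specified by the particular $v \in V_k$), where $A_r$ and $B_r$ are the counts of $\sigma_r$ in the partial suffix of block $i$ and the partial prefix of block $j$, respectively. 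The total-length equation $\sum_r (A_r + B_r) = \sum_r \bar p_{r, h(x_k)}$ is automatically enforced by $L$, leaving only $\ell - 1$ independent constraints, each restricting one base-$R^{1/\ell}$ digit of $h(x_k)$. Since $h(x_i), h(x_j), h(x_k)$ are essentially independent under a random choice from the hash family, and the base-$R^{1/\ell}$ digits of $h(x_k)$ are approximately uniform on $[0, R^{1/\ell})$ by the almost-balanced property, the probability that all $\ell - 1$ constraints hold for a fixed $(o, i)$ is $O(R^{-(\ell-1)/\ell}) = O(R^{-(1-1/\ell)})$. Multiplying by the $N$ candidate starting positions and summing over the $|V_k| = 2^{\ell-1}$ query vectors gives the claimed bound $O(2^{\ell-1} N / R^{1-1/\ell})$.

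The main obstacle is justifying the $O(R^{-(1-1/\ell)})$ probability estimate rigorously. The counts $A_r$ and $B_r$ are piecewise-linear functions of $h(x_i)$ and $h(x_j)$, and the digits of $h(x_k)$ are only approximately uniform and approximately independent under the almost-balanced family. Care is needed to absorb the overflowed-hash mass (bounded by $O(R)$ elements by the almost-balanced property), to handle the degenerate cases $i = k$ or $j = k$ (where the three hash values are no longer independent), and to treat the boundary spans where the length $L$ forces $j - i \in \{k-1, k+1\}$ rather than $k$. Each of these exceptional contributions accounts for only $O(n)$ starting positions or is already bookkept by the almost-balanced overhead, and so folds into the stated bound without altering its asymptotic order.
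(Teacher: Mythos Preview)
Your proposal is correct and follows essentially the same route as the paper: fix a query vector $v\in V_k$ (hence a length $m$), range over the at most $N$ starting positions, observe that the $k-1$ interior blocks each contribute exactly $R^{1/\ell}$ of every character so that the match condition collapses to $\ell$ digit equations on $\bar p_{r,h(x_k)}$, one of which is forced by the length constraint, and conclude that each position matches with probability $O(R^{-(1-1/\ell)})$ by (near-)uniformity of the digits of $h(x_k)$. Your write-up is in fact more scrupulous than the paper's about the edge cases (boundary spans with $j-i\neq k$, the degenerate indices $i=k$ or $j=k$); one small terminological slip is that the digit uniformity comes from the universality of the hash family rather than from the almost-balanced property, which only controls bucket overflow.
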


\begin{proof}
We focus on $v \in V_k$ that is queried when considering $x_k$. This vector $v$ implies the value of $m$ which is the length of substrings of $S$ that can have $v$ as their Parikh vector.
Clearly, there are at most $N$ such substrings. We focus on the substring from location $\alpha$ to location $\alpha+m-1$ in $S$.
Due to our encoding scheme, this substring contains a (possibly empty) suffix of $ENC_\ell(h(x_i))$, for some $x_i$, followed by $k-1$ full encodings of some integers from $A$, and then a (possibly empty) prefix of $ENC_\ell(h(x_j))$ , for some integers $x_i$ and $x_j$. The only way in which we may falsely report location $\alpha$ as a match is if for each $\sigma \in \Sigma$ the number of $\sigma$ characters in the substring of $S$, denoted by $f(\sigma, \alpha, m)$, is equal to the count of $\sigma$ in $v$, denoted by $v_\sigma$. For a given $\sigma$, since the substring contains $k-1$ complete encodings, we can consider $v_\sigma - (k-1) R^{1/\ell} $ which is a function of $\bar p_{r,h(x_k)}$, compared to $f(\sigma, \alpha, m) - (k-1) R^{1/\ell}$. Now, since $\bar p_{r,h(x_k)}$ is uniformly random (due to $h$) in the range $[R^{1/\ell}]$, the probability that they are equal is $R^{-1/\ell}$. This is true for every character $\sigma$ on its own, but when considering all of the $\ell$ characters, once we set the count for the first $\ell-1$ characters the count for the last character completely depends on the other counts. Therefore, the probability that the comparison passes for all of the characters only depends on the first $\ell-1$ characters, and is $1/R^{1-1/\ell}$. By linearity of expectation over all possible locations in $S$ and all $2^{\ell-1}$ vectors in $V_k$, the expected number of false positives is $O(2^{\ell-1} N/R^{1-\frac{1}{\ell}})$.
\end{proof}

\subsection{Hardness of Histogram Reporting}

Utilizing the reduction we have described in the previous section, that transforms an ordered set $A$ to a string $S$, we can prove the following \CLB{}.

\begin{theorem}\label{thm:histogram_indexing_clb}
Assume the \ThreeSUM{} conjecture holds.
The histogram reporting problem for an $N$-length string and constant alphabet size $\ell \geq 2$ cannot be solved using $O(N^{2-\frac{2\gamma}{\ell+\gamma}-\Omega(1)})$ preprocessing time, $O(N^{1-\frac{\gamma}{\ell+\gamma}-\Omega(1)})$ query time and $O(N^{\frac{\gamma \ell}{\ell+\gamma} - \frac{2\gamma}{\ell+\gamma}-\Omega(1)})$ reporting time per item, for any $0 < \gamma < \ell$.
\end{theorem}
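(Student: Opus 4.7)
The plan is to compose the reduction of Section~\ref{ss:C3S_to_HR} with a hypothetical histogram reporting data structure that attains all three time bounds in the statement, and show that this yields a truly subquadratic expected-time algorithm for \DiffConvolutionThreeSUM{}, thereby refuting the \ThreeSUM{} conjecture (using the equivalence \ThreeSUM{}~$\equiv$~\ConvolutionThreeSUM{}~$\equiv$~\DiffConvolutionThreeSUM{} from the preliminaries). Given $A = \langle x_1,\dots,x_n \rangle$, the construction of Section~\ref{ss:C3S_to_HR} applies an almost-linear almost-balanced hash $h:U\to[R]$ and builds the encoded string $S$ of length $N = \Theta(R^{1/\ell}n)$; the free parameter $R$ will be fixed to $R = n^{\gamma}$ at the end of the argument.

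After preprocessing $S$ in time $P(N)$, I will construct for each $x_k$ the carry set $V_k$ of $|V_k| = 2^{\ell-1} = O(1)$ Parikh vectors and issue a reporting query for each, paying $O(n\cdot Q(N))$ across all $k$. By Lemma~\ref{correctness_lemma}, any genuine triple $(x_i, x_j, x_k)$ with $x_k = x_j - x_i$ and $k = j-i$ forces the Parikh vector of $S_{i,j}$ to lie in $V_k$, so every such triple is reported and can be verified in constant time from the match location. By Lemma~\ref{lem:histogram_fp}, each $V_k$ generates $O(N/R^{1-1/\ell})$ false positives in expectation, so the expected total number of reports is $O(n^2 R^{2/\ell-1})$, and the whole expected running time of the reduction becomes
\[
P(N)\ +\ n\cdot Q(N)\ +\ n^2\, R^{2/\ell-1}\, T(N).
\]

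The main step is then to set $R = n^{\gamma}$, which gives $N = \Theta(n^{(\ell+\gamma)/\ell})$, and to verify that each of the three terms simultaneously collapses to $n^{2-\Omega(1)}$. Direct computation yields preprocessing exponent $\frac{\ell+\gamma}{\ell}\bigl(2-\frac{2\gamma}{\ell+\gamma}\bigr) = 2$ and query exponent $1 + \frac{\ell+\gamma}{\ell}\bigl(1-\frac{\gamma}{\ell+\gamma}\bigr) = 2$, both before the $\Omega(1)$ savings kick in. The delicate part, and the algebraic reason the third exponent in the statement takes the peculiar form $\frac{\gamma\ell}{\ell+\gamma} - \frac{2\gamma}{\ell+\gamma}$, is the identity
\[
\gamma\!\left(\tfrac{2}{\ell}-1\right) \,+\, \frac{\ell+\gamma}{\ell}\cdot\frac{\gamma(\ell-2)}{\ell+\gamma} \,=\, 0,
\]
which forces the reporting term $n^2\cdot R^{2/\ell-1}\cdot T(N)$ to collapse to exponent $2$ as well. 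Once this cancellation is in hand each of the three terms becomes $n^{2-\epsilon(\ell+\gamma)/\ell}$ for the $\epsilon>0$ hidden inside the $\Omega(1)$, giving the desired truly subquadratic expected-time algorithm and hence the contradiction. The hard part will be spotting and verifying this cancellation identity, which is the sole piece of non-mechanical algebra; the remaining details (absorbing the $2^{\ell-1}$ and alphabet-size factors using that $\ell$ is constant, carrying out the $O(1)$ verification per reported candidate, and noting that \emph{expected} truly subquadratic time already refutes the conjecture) are routine.
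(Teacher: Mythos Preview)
Your proposal is correct and follows essentially the same approach as the paper: compose the reduction of Section~\ref{ss:C3S_to_HR} with a hypothetical histogram reporting data structure, use Lemmas~\ref{correctness_lemma} and~\ref{lem:histogram_fp} to bound the running time by $P(N) + n\cdot Q(N) + E_{fp}\cdot T(N)$, set $R = n^{\gamma}$, and check that each term equals $n^{2-\Omega(1)}$. Your algebraic verification (especially the cancellation identity for the reporting term) is exactly what the paper leaves implicit under ``by straightforward calculations following our choice of $R$.''
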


\begin{proof}
We follow the reduction in Section~\ref{ss:C3S_to_HR}.
For an instance of the histogram reporting problem on a string of length $N$ denote the preprocessing time by $O(N^\alpha)$, the query time by $O(N^\beta)$ and the reporting time per item by $O(N^\delta)$. The total expected running time used by our reduction to solve \DiffConvolutionThreeSUM{} is $O(N^\alpha)+n \cdot O(N^\beta) + E_{fp}\cdot O(N^\delta)$, where $E_{fp}$ is the expected total number of false positives. This running time must be $\Omega(n^{2-\Omega(1)})$, unless \ThreeSUM{} conjecture is false.

Since $N=O(\ell \cdot R^{\frac{1}{\ell}}n)$ and $E_{fp}=O(n2^\ell N/R^{1-\frac{1}{\ell}})$, then either $(\ell \cdot R^{\frac{1}{\ell}}n)^{\alpha} = \Omega(n^{2-o(1)})$, $(\ell \cdot R^{\frac{1}{\ell}}n)^{\beta} = \Omega(n^{1-o(1)})$, or $n2^\ell (\ell \cdot R^{\frac{1}{\ell}}n)/R^{1-\frac{1}{\ell}} \cdot (\ell \cdot R^{\frac{1}{\ell}}n)^{\delta} = \Omega(n^{2-o(1)})$. Set $R$ to be $n^\gamma$. By straightforward calculations following our choice of $R$ we get that $\alpha = 2 - \frac{2\gamma}{\ell+\gamma} - \Omega(1)$, $ \beta = 1 - \frac{\gamma}{\ell+\gamma} - \Omega(1)$, and $ \delta= \frac{\gamma \ell}{\ell+\gamma} - \frac{2\gamma}{\ell+\gamma}-\Omega(1)$.
\end{proof}

\subsection{From Reporting to Decision: Hardness of Histogram Indexing}

We make use of Theorem~\ref{thm:histogram_indexing_clb} to obtain a \CLB{} on the decision variant of the problem. Amir et al. \cite{ACLL14} proved similar lower bounds based on a stronger \ThreeSUM{} conjecture. Our proof here shows that this stronger assumption is not needed and that the common \ThreeSUM{} conjecture suffices. The idea of the proof is to make the expected number of false-positives small by a suitable choice of $R$.

 \begin{mylemma}\label{lem:histogram_indexing}
Assume the \ThreeSUM{} conjecture holds.
The histogram indexing problem for a string of length $N$ and constant alphabet size $\ell\geq 3$ cannot be solved with $O(N^{2-\frac{2}{\ell-1}-\Omega(1)})$ preprocessing time and $O(N^{1-\frac{1}{\ell-1}-\Omega(1)})$ query time.
\end{mylemma}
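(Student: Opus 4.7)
The plan is to combine the reduction from Section~\ref{ss:C3S_to_HR} with an aggressive choice of the hash range $R$, so that the expected number of false-positive Parikh-vector matches across all queries is sublinear. At that threshold, a plain brute-force verification becomes affordable and the decision data structure is strong enough on its own to solve \DiffConvolutionThreeSUM{} in truly subquadratic time. Concretely, I will assume for contradiction a data structure with preprocessing time $O(N^{2-\frac{2}{\ell-1}-\delta})$ and query time $O(N^{1-\frac{1}{\ell-1}-\delta})$ for some constant $\delta>0$, and exhibit an algorithm solving \DiffConvolutionThreeSUM{} in $O(n^{2-\Omega(1)})$ expected time.

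Given an instance $A=\langle x_1,\dots,x_n\rangle$ of \DiffConvolutionThreeSUM{} with integers drawn from a sufficiently large polynomial universe (which is permitted by the \ThreeSUM{} conjecture), I pick $R=n^{\ell/(\ell-2)+\epsilon}$ for a small constant $\epsilon>0$ to be fixed later in terms of $\delta$ and $\ell$. Applying the reduction from Section~\ref{ss:C3S_to_HR} produces a string $S$ of length $N=\Theta(\ell\,n\,R^{1/\ell})=\Theta(n^{(\ell-1)/(\ell-2)+\epsilon/\ell})$. I preprocess $S$ with the assumed data structure, and then for every $k=1,\dots,n-1$ I issue the $2^{\ell-1}=O(1)$ decision queries corresponding to the carry set $V_k$. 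Whenever any of those queries for a given $k$ returns YES, I verify in $O(n)$ time by scanning over $i$ and checking whether $A[i+k]-A[i]=A[k]$.

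Correctness follows from Lemma~\ref{correctness_lemma}: a genuine solution forces some vector in $V_k$ to report YES, so no true solution is missed, while verification ensures the algorithm only outputs YES on a true witness. For the running-time analysis, Lemma~\ref{lem:histogram_fp} gives an expected total, over the $n$ values of $k$, of $O\!\bigl(n\cdot 2^{\ell-1}\cdot N/R^{1-1/\ell}\bigr)=O(n^{1-\epsilon(\ell-2)/\ell})$ false-positive matches, and hence an expected number of distinct triggered $k$ of $o(n)$, so that the expected verification cost is $o(n^2)$. Preprocessing takes $O(N^{2-\frac{2}{\ell-1}-\delta})$ time and the $O(n)$ decision queries contribute $O(n\cdot N^{1-\frac{1}{\ell-1}-\delta})$ time; substituting $N=n^{(\ell-1)/(\ell-2)+\epsilon/\ell}$, both expressions evaluate to $n^{2-\delta(\ell-1)/(\ell-2)+O(\epsilon)}$, which is $n^{2-\Omega(1)}$ as long as $\epsilon$ is chosen small enough relative to $\delta$ (a constant ratio suffices).

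The main obstacle is that the choice $R=n^{\ell/(\ell-2)}$ is exactly the boundary where the brute-force verification cost equals $n^2$, while any larger $R$ inflates $N$ and blows the preprocessing and query budgets past $n^2$. The entire argument hinges on picking $\epsilon$ strictly positive but sufficiently small so that the $\delta$-savings in the decision exponents dominate the $\epsilon$-inflation of $N$; verifying that this balance can be struck uniformly for every constant $\ell\geq 3$, while keeping $R\le U$ inside the polynomial-universe regime of the \ThreeSUM{} conjecture, is the only delicate quantitative step.
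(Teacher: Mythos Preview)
Your argument is correct and reaches the same conclusion, but it takes a genuinely different route from the paper's proof. The paper sets $R=c_1 n^{\ell/(\ell-2)}$ so that the expected number of false-positive matches \emph{per query} is below $1/2$, and then amplifies by repeating the entire construction $O(\log n)$ times with independent hash functions $h$; only a $k$ that survives all $O(\log n)$ rounds is brute-force verified, so the expected verification cost becomes polynomially small. You instead use a single hash function with $R=n^{\ell/(\ell-2)+\epsilon}$, accept that the expected \emph{total} number of false positives over all $k$ is $n^{1-\Theta(\epsilon)}$, and pay $o(n^2)$ for brute-force verification directly; the resulting $\epsilon$-inflation of $N$ is then absorbed into the hypothesised $\delta$-slack of the data structure.

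Your approach is more elementary in that it avoids the amplification machinery entirely and needs only a single instance of the encoded string; the price is the extra balancing step between $\epsilon$ and $\delta$, which the paper's choice $R=c_1 n^{\ell/(\ell-2)}$ sidesteps because at that value the exponents $N^{2-2/(\ell-1)}$ and $n\cdot N^{1-1/(\ell-1)}$ both equal $n^{2}$ exactly. Your closing worry about keeping $R\le U$ is not a real obstacle: the almost-linear almost-balanced hash families work for any target range $[R]$ of polynomial size, and the \ThreeSUM{}/\DiffConvolutionThreeSUM{} hardness holds for universes of arbitrary fixed polynomial size, so $R=n^{\ell/(\ell-2)+\epsilon}$ (a constant exponent for each constant $\ell\ge 3$) poses no difficulty.
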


\begin{proof}
We follow the reduction in Section~\ref{ss:C3S_to_HR}.
In order to use histogram indexing we will reduce the probability of a false positive for any query to be less than $1/2$.
From Lemma~\ref{lem:histogram_fp} we know that the expected number of false positives due to query is at most $O(\frac{2^{\ell-1}(\ell R^{\frac{1}{\ell}}n)}{R^{1-\frac{1}{\ell}}})$.
By setting $R$ to be $c_1n^{\frac{\ell}{\ell-2}}$ for sufficiently large constant $c_1$ the number of false positives is strictly smaller than $1/2$, which implies immediately that the probability of a false positive is strictly smaller than $1/2$. Therefore, if we were to solve histogram indexing instead of histogram reporting on the same input as in Theorem~\ref{thm:histogram_indexing_clb}, the probability of a false positive is less than $1/2$.
We can make this probability smaller by repeating the process $O(\log n)$ times, each time using a different hash function $h$. This way, the probability that all of the queries that are due to a specific $x_k$ return false positives is less than $1/poly(n)$. If a given $x_k$ passes all of the query processes (that is, a positive answer is received by each one of them), then we can verify that there is indeed a match with this $x_k$ in $O(n)$ time, which will add a negligible cost to the expected running time in the case it is indeed a false positive. Thus, the total expected running time of this procedure is $O(\log{n}(P(N,\ell)+nQ(N,\ell)))$, where $P(N,\ell)$ is the preprocessing time (for input string of length $N$ and alphabet size $\ell$) and $Q(N,\ell)$ is the query time (for the same parameters).
Therefore, unless the \ThreeSUM{} conjecture is false, there is no solution for histogram indexing such that $P(N,\ell)=O(n^{2-\Omega(1)})$ and $Q(N,\ell)=O(n^{1-\Omega(1)})$.
If we plug-in the value of $R$ we have chosen and follow the calculations in the proof of Theorem~\ref{thm:histogram_indexing_clb} (with $\gamma = \frac{\ell}{\ell-2}$), then we obtain that there is no solution for the histogram indexing problem with $P(N,\ell)=O(N^{2-\frac{2}{\ell-1}-\Omega(1)})$ and $Q(N,\ell)=O(N^{1-\frac{1}{\ell-1}-\Omega(1)})$.
\end{proof}

We generalize this \CLB{} by presenting a full-tradeoff between preprocessing and query time. The idea of the proof is to artificially split the encoded string $S$ to smaller parts, so we can have many false positives in $S$, but the probability for a false positive in each part will be small.

\begin{theorem}\label{thm:generalized_histogram_indexing}
Assume the \ThreeSUM{} conjecture holds.
The histogram indexing problem for a string of length $N$ and constant alphabet size $\ell \geq 3$ cannot be solved with $O(N^{2-\frac{2(1-\alpha)}{\ell-1-\alpha}-\Omega(1)})$ preprocessing time and $O(N^{1-\frac{1+\alpha(\ell-3)}{\ell-1-\alpha}-\Omega(1)})$ query time, for any $0 \leq \alpha \leq 1$.
\end{theorem}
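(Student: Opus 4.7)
The plan is to begin from the reduction in Section~\ref{ss:C3S_to_HR}, which maps an instance of \DiffConvolutionThreeSUM{} on $n$ integers to histogram queries on an encoded string $S$ of length $N = \Theta(\ell R^{1/\ell} n)$. In Lemma~\ref{lem:histogram_indexing} one picks $R$ so large that the expected number of false positives across all of $S$ is at most $1/2$, forcing $R = \Theta(n^{\ell/(\ell-2)})$ and a correspondingly rigid relation between $n$ and $N$. To recover a genuine preprocessing-versus-query tradeoff, I will instead permit many false positives in $S$ overall but keep the expected false-positive count confined to any short window small, by splitting $S$ into small pieces and running the histogram indexing data structure on each piece separately.

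Concretely, I will partition $S$ into $\Theta(N/L)$ (slightly overlapping) parts of length $L$, where $L$ depends on the tradeoff parameter $\alpha$, and build an independent histogram indexing data structure on each part. For every $x_k$ I issue all $2^{\ell-1}$ carry-set queries from Lemma~\ref{correctness_lemma} against every part. Choosing $R$ just large enough that the per-part expected false positives per query is below $1/2^{\ell}$ (so that a union bound across the carry set still gives per-part failure probability $< 1/2$) only requires $R = \Theta(L^{\ell/(\ell-1)})$ up to $\ell$-dependent constants, and hence $N = \Theta(L^{1/(\ell-1)} n)$. I will then amplify with $O(\log n)$ independent hash functions: a given $x_k$ is declared a candidate only if in every hashing round some (part, carry vector) returns YES, after which I verify in $O(n)$ time against the original list. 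A standard union-bound argument bounds the expected number of false candidates surviving all rounds by $\tilde{O}(N/L)$, keeping the total verification cost at $\tilde{O}(nN/L)$.

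Writing $P(L) = O(L^{p})$ and $Q(L) = O(L^{q})$ for the hypothetical algorithm, the total running time of the reduction is
\[
\tilde{O}\!\left(\frac{N}{L}\,P(L) \;+\; n\cdot\frac{N}{L}\,Q(L) \;+\; \frac{nN}{L}\right).
\]
Setting $L = N^{1-\alpha}$ (so $\alpha = 0$ recovers Lemma~\ref{lem:histogram_indexing} and $\alpha \to 1$ corresponds to shredding $S$ into very small parts) and using $n = \Theta(N^{(\ell-2+\alpha)/(\ell-1)})$, each of the three terms becomes a monomial in $N$ whose exponent is linear in $p$, $q$, and $\alpha$. Forcing all three terms below $n^{2-\Omega(1)}$ (which would contradict the \ThreeSUM{} conjecture) yields linear inequalities in $p$ and $q$ whose negations match precisely the preprocessing bound $p \ge 2 - \frac{2(1-\alpha)}{\ell-1-\alpha}$ and the query bound $q \ge 1 - \frac{1+\alpha(\ell-3)}{\ell-1-\alpha}$ stated in the theorem, up to the $\Omega(1)$ slack.

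The main technical obstacle will be that the substring length queried, $|S_{i,j}| = \Theta(\ell R^{1/\ell} k)$, varies with $k$ and can be as large as $\Theta(N)$, so a single part length $L$ cannot simultaneously cover all $x_k$. To remedy this I plan to maintain $O(\log n)$ partitions of $S$ at geometrically increasing part lengths and route each carry-set query to the smallest scale that can host its length-$m_k$ substring; this multiplies both preprocessing and query cost by a factor of $O(\log n)$ and therefore does not affect the polynomial exponents of the tradeoff, so the exponents derived above yield the claimed conditional lower bound.
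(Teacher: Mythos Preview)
Your splitting-into-parts plan has a genuine gap that the multi-scale fix at the end does not repair. The problem is that once $R$ is fixed (you choose $R=\Theta(L^{\ell/(\ell-1)})$ so that a window of length $L$ has expected false positives below $1/2$), the substring length $m_k=\Theta(R^{1/\ell}k)$ exceeds $L$ as soon as $k\gtrsim L^{(\ell-2)/(\ell-1)}$. For those $k$ you must route the query to a scale $L'\ge m_k$, but then the expected number of false positives in a single part of that scale is $\Theta(L'/R^{(\ell-1)/\ell})\ge\Theta(m_k/L)\gg 1$, so amplification with $O(\log n)$ independent hashes no longer drives the per-part false-positive probability below $1/2$, and the decision oracle is useless there. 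Requiring the bound to hold even at the largest scale forces $R\ge n^{\ell/(\ell-2)}$, which collapses your construction back to Lemma~\ref{lem:histogram_indexing} and kills the tradeoff. A second symptom of the same issue: your three-term cost formula, taken at face value, gives a query constraint $q<(\ell-2)/(\ell-1)$ that is \emph{independent} of $\alpha$, while the preprocessing constraint only weakens as $\alpha$ grows; so the curve you derive is dominated by its $\alpha=0$ endpoint and does not match the exponents in the theorem. (Also, the multi-scale preprocessing is dominated by the largest scale, i.e.\ by $P(N)$, not by $(N/L)P(L)$, so the ``only an $O(\log n)$ factor'' claim is incorrect when $P$ is superlinear.)

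The idea you are missing, and which the paper uses, is to restrict the search window \emph{through the Parikh vector itself} rather than by physically cutting $S$. One plants $O(n^{\alpha})$ copies of an extra character $\sigma^{*}$ at evenly spaced positions in $S$ (and in a multi-level structure to handle all $m_k$). A query vector augmented with a prescribed $\sigma^{*}$-count $t$ can only match substrings that straddle exactly $t$ of these markers, which confines the possible starting positions to an interval of length $\Theta(N/n^{\alpha})$ \emph{regardless of how long the queried substring is}. This decoupling is the crux: it lets you choose $R$ so that each of the $O(n^{\alpha})$ augmented queries has expected false positives below $1/2$, while still preprocessing a single string of length $\Theta(N)$ and making $O(n^{1+\alpha})$ queries in total. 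Balancing $P(N)\le n^{2}$ against $n^{1+\alpha}Q(N)\le n^{2}$ under the relation $n=\Theta(N^{(\ell-2)/(\ell-1-\alpha)})$ then yields exactly the stated exponents. The paper further shows how to reuse an existing alphabet symbol as $\sigma^{*}$ so that the bound already holds for $\ell\ge 3$.
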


\begin{proof}
The idea to get a full-tradeoff between preprocessing time and query time is to artificially split the encoded string $S$ to smaller parts, so we can have many false positives in $S$ but the probability for a false positive in each part will be small. We achieve this artificial split by using special character $\sigma^{*}$ as a separator and construct a search data structure for queries in the following way.

\textbf{Search Structure Construction.} Say we have encoded $A$ using encoding 1 with $\ell-1$ characters and formed a string $S$. Let $v=(v_{{\sigma}_1},v_{{\sigma}_2},...,v_{{\sigma}_{\ell-1}})$ be a query Parikh vector such that the sum of all counts in it is $N/2$. That is, $v$ can be the Parikh vector of a substring of $S$ that has length $N/2$. We add $n^{\alpha}-1$ special characters $\sigma^{*}$ to $S$ for some $\alpha \in [0,1]$. The special characters are added to $S$ at positions $N/2+aN/(2n^{\alpha})$  for $1 \leq a < n^{\alpha}$. We denote the resulting string by $S'$.

Denote by $u_{\sigma}$ the count of character $\sigma$ in the Parikh vector $u$. We generate $n^\alpha$ vectors from our query vector $v$. These vectors are given by the set \\ $V=\{v_i=(v_{{\sigma}_1},v_{{\sigma}_2},...,v_{{\sigma}_{\ell-1}},v_{\sigma^{*}}) |  v_{\sigma^{*}} = i \ 0 \leq i < n^{\alpha} \}$. Querying $S'$ with a vector $v' \in V$ such that $v'_{\sigma^{*}} = i$ can find a substring of $S'$ with matching Parikh vector only if this substring in $S$ (without the special characters $\sigma^{*}$) begins at location $\beta \in [(i-1)N/(2n^{\alpha}),iN/(2n^{\alpha}-1)]$. This is because of the limitations posed by the special character $\sigma^{*}$.

To handle query vectors whose sum of counts is in the interval $[N/4+1,N/2]$, we will add more special characters to $S$. Specifically, we add 1 special character at positions $N/4+iN/(4n^{\alpha})$  for $1 \leq i \leq n^{\alpha}$, 2 special characters at positions $N/2+iN/(4n^{\alpha})$  for $1 \leq i \leq n^{\alpha}$, and 3 special characters at positions $3N/4+iN/(4n^{\alpha})$  for $1 \leq i < n^{\alpha}$. Again, we denote the resulting string $S'$. With the special characters in $S'$, we can handle a query Parikh vector $v$ that the sum of its counts is $len \in [N/4+1,N/2]$ in the following way. We first extend $v$ to contain a count for $\sigma^{*}$ and initialize it to the number of occurrences of $\sigma^{*}$ in the substring $S'[0,len-1]$ (the substring of $S'$ that starts at location $0$ and ends at location $len-1$). We call this vector $v^0$. We create sequence of $O(n^{\alpha})$ query vectors $v^i$ such that $v^i$ is created from $v^0$ by adding $i$ to $v^{0}_{\sigma^{*}}$. It is straightforward to observe that each query vector $v^i$ can match at most $N/(4n^{\alpha})$ locations in $S'$. That is, by adding just $O(n^\alpha)$ special symbols to $S$ we artificially split the string to $O(n^\alpha)$ parts.

\emph{Multi-level Construction.} We explain how to handle a query Parikh vector \\ $v=(v_{{\sigma}_1},v_{{\sigma}_2},...,v_{{\sigma}_{\ell-1}})$ that the sum of all counts in it is $len \leq |S|/4$. To do this, we create a structure with $O(\log n)$ levels, such that each one of them is created by cutting $S$ and adding $O(n^\alpha)$ special characters. Level 0 will be the string we constructed previously. For level $i$, we cut the string $S$ to $2^i$ parts of (almost) equal length by cutting at positions $j|S|/2^i$ of $S$ for $1 \leq j \leq 2^i-1$. We adjust the cutting so that each part will end at the end of a full encoding of some number $x_j$. We denote the resulting parts by $S_1^i=\{S_{1,1}^i,S_{1,2}^i,...,S_{1,2^i}^i\}$. Moreover, we also cut $S$ at positions $|S|/2^{i+1} + j|S|/2^i$ for $0 \leq j \leq 2^i-1$ . Again, we adjust the cutting so that each part will end at the end of a full encoding of some number in $A$. We denote the resulting parts except the first and last one by $S_2^i=\{S_{2,1}^i,S_{2,2}^i,...,S_{2,2^i-1}^i\}$. Observe that $|S_1^i \bigcup S_2^i| = O(2^i)$, the length of each part in $S_1^i \bigcup S_2^i$ is $O(N/2^i)$, and the total length of all parts is $O(N)$. We stop the process at level $\alpha \log n$ in which we have $O(n^\alpha)$ parts in $S_1 \bigcup S_2$. We add to each part $S_{j,k}^i$ $O(n^{\alpha}/2^i)$ special characters $\sigma^{*}$. We place them similarly to how we have done it in level 0. Specifically, we add 1 special character at positions $N/2^{i+2}+aN/(4n^{\alpha})$  for $1 \leq a \leq n^{\alpha}/2^i$, 2 special characters at positions $N/2^{i+1}+aN/(4n^{\alpha})$  for $1 \leq a \leq n^{\alpha}/2^i$, and 3 special characters at positions $3N/2^{i+2}+aN/(4n^{\alpha})$ for $1 \leq a < n^{\alpha}/2^i$. After creating all these levels and parts we preprocess each of these parts as an instance of histogram indexing. Then, given a a query Parikh vector $v=(v_{{\sigma}_1},v_{{\sigma}_2},...,v_{{\sigma}_{\ell-1}})$ that the sum of its counts is $len$, we first find the right level to handle it. This is the level $i$ for which we have $N/2^{i+2} \leq len < N/2^{i+1}$. if $len \geq N/2$ we use level 0 and if $len < N/n^{\alpha}$ we use the last level. Let the right level be level $i$. We first extend $v$ to contain a count for $\sigma^{*}$ and initialize it to the number of occurrences of $\sigma^{*}$ in the substring $S_{j,k}^i[0,len-1]$ (for some $j$ and $k$). We call this vector $v^0$. We create a sequence of $O(n^{\alpha}/2^i)$ query vectors $v^i$ such that $v^i$ is created from $v^0$ by adding $i$ to $v^{0}_{\sigma^{*}}$. We now query each part in $S_1^i \bigcup S_2^i$ with each vector from the set of vectors we have created. It is straightforward to observe that we have $O(n^{\alpha})$ queries in total and each query can match at most $N/(4n^{\alpha})$ locations of some part in $S_1^i \bigcup S_2^i$.

\medskip
\textbf{Time analysis.} Now, we analyse the preprocessing time and query time of our construction. Say we use encoding 1 to create $S$. The expected number of false-positives for a query without the special character is $O(N/R^{1-\frac{1}{\ell-1}})$ (see Lemma~\ref{lem:histogram_fp}). In Lemma~\ref{lem:histogram_indexing} we required that this number will be strictly less than 1/2 in order to guarantee that with high probability we have no false positives for a given query (after amplification). Following our construction, we have $O(n^\alpha)$ queries (which is smaller than $cn^\alpha$ for some constant $c>0$) for each query vector $v \in V_k$, so that each of these queries covers only some of the substrings of $S$. Therefore, we can allow the number of false positives per vector $v$ to be strictly less than $1/2cn^\alpha$, which will be strictly less than 1/2 for each of the $O(n^\alpha)$ queries corresponding to $v$. We can reduce the probability for a false positive per query by creating $O(\log n)$ copies of our structure, each time using a different hash function $h$. Then, for every query vector $v$ we query all $O(\log n)$ structures. This way, the probability that all of the queries return a false positive is less than $1/poly(n)$. This is due to the fact that using different $h$ does not change the order of the encoded numbers in the string $S$. It only changes their encoding. Therefore, a query vector can match the exact same locations in all $O(\log n)$ structures.

That being said, we only need to ensure that $(\ell R^{1/(\ell-1)}n)/R^{1-\frac{1}{\ell-1}}<1/2cn^\alpha$. This is satisfied if we choose $R=c'n^{\frac{(\ell-1)(1-\alpha)}{\ell-3}}$ for some sufficiently large constant $c'$. By this choice of $R$ we get that $N=O(n^{\frac{\ell-2-\alpha}{\ell-3}})$. Our structure has $O(\log n)$ levels. The total length of the string parts in each level is $O(N)$. Moreover, in each level we allow $O(n^{\alpha})$ false positives in total. Therefore, the construction time is dominated by the preprocessing time of level 0, as all levels have the same length and the preprocessing time is $O(N^\beta)$ for some $\beta \geq 1$. We construct $O(\log n)$ copies of our structure, so the overall preprocessing time is $\tilde{O}(P(N,\ell))$ which is the preprocessing time for histogram indexing on input string of length $N$ and alphabet size $\ell$. We have $n$ numbers in $A$ and for each number $x_k$ in $A$ we have $2^{\ell-2}$ queries (vectors in $V_k$). These are the $O(2^{\ell}n)$ queries we have in the analysis in proof of Lemma~\ref{lem:histogram_indexing}.  Using our special character we have created $O(n^\alpha)$ query vectors for each of these $O(2^{\ell}n)$ queries. Therefore, the total number of query vectors is ,therefore, $O(2^{\ell}n^{1+\alpha})$. Let $Q(N,\ell)$ be the query time for histogram indexing on input string of length $N$ and alphabet size $\ell$. The total expected running time of our method for solving \ThreeSUM{} is $\tilde{O}((P(N,\ell)+n^{1+\alpha}Q(N,\ell)))$ for constant alphabet size $\ell$. Therefore, unless the \ThreeSUM{} conjecture is false, there is no solution for histogram indexing such that $P(N,\ell)=O(n^{2-\Omega(1)})$ and $Q(N,\ell)=O(n^{1-\alpha-\Omega(1)})$.
If we plug-in the value of $R$ we have chosen, then we obtain that there is no solution for the histogram indexing problem with $P(N,\ell)=O(N^{\frac{2(\ell-3)}{\ell-2-\alpha}-\Omega(1)})$ and $Q(N,\ell)=O(N^{\frac{(\ell-3)(1-\alpha)}{\ell-2-\alpha}-\Omega(1)})$.

\emph{Improving the tradeoff.} We get full tradeoff between preprocessing time and query time for each alphabet size $\ell$. We wish that this tradeoff will contain the points given by Lemma~\ref{lem:histogram_indexing} when $\alpha$ gets close to zero. However, we have "wasted" one character for $\sigma^{*}$. We can bypass this problem by encoding $S$ using encoding 3 (see Appendix~\ref{sec:app}). In this encoding we have a special character $\sigma_{\ell-1}$ that separates between partial encodings. We can use this character also as $\sigma^{*}$. In order to do so, we need to ensure that there will be no confusion between the two uses of $\sigma_{\ell-1}$. We call the characters $\sigma_{\ell-1}$ used for separating between partial encodings \emph{separating} characters and the characters $\sigma_{\ell-1}$ used as $\sigma^{*}$ \emph{splitting} characters. For a query vector $v$ we have $v_{\sigma_{\ell-1}}=v_{\sigma_{\ell-1}}^{sep}+v_{\sigma_{\ell-1}}^{spl}$, where $v_{\sigma_{\ell-1}}^{sep}$ is the number of separating characters counted by $v_{\sigma_{\ell-1}}$ and $v_{\sigma_{\ell-1}}^{spl}$ is the number of splitting characters counted by $v_{\sigma_{\ell-1}}$. Recall that in our construction each part of $S$ in each level has regions with 1, 2 or 3 special characters $\sigma^{*}$. Instead of this amount of special characters, we place $c^{*}$, $2c^{*}$ or $3c^{*}$ special characters for some constant $c^{*}>2$. Now, for each query vector $v$ we create the query vectors $v_i \in V$ by adding each time $c^{*}$ to the count of $\sigma_{\ell-1}$ (that represents $\sigma^{*}$). For a query vector $v$ the count of all the characters except $\sigma_{\ell-1}$ implies the number of full encoding we have. Let $v$ be a query vector for $x_k$ then the number of full encodings will be $k-1$. Therefore, if $v$ matches a substring of some part of $S$, the separating characters in this substring must be at least $v_{\sigma_{\ell-1}}^{sep}-2$. Out of $v_{\sigma_{\ell-1}}^{sep}$ the only two separating characters that can be mistakenly considered as splitting characters are those between the partial encodings of some $x_i$ and $x_j$ such that $k=j-i$. However, as splitting characters come in groups of $c^{*}$ characters for $c^{*}>2$, we cannot mistakenly exchange two separating characters for a complete group of splitting characters. Therefore, the only possibility for a confusion between the two types is by $v$ matching a substring that contains a partial group of splitting characters. This can happen only at the suffix or prefix of the substring. Nevertheless, we can prevent this possibility by adjusting the placement of the splitting characters to the end of the full encoding in which they occur. That is, if a group of splitting characters occurs inside the full encoding of some $x_i$ we move them to the end of its encoding in $S$.

Consequently, by the use of $\sigma_{\ell-1}$ as both separating and splitting character we can get the same splitting effect without the need for an additional character. If we redo the running time calculations, but this time with $S$ encoded using $\ell$ characters, we get that unless the \ThreeSUM{} conjecture is false, there is no solution for histogram indexing such that $P(N,\ell)=O(N^{\frac{2(\ell-2)}{\ell-1-\alpha}-\Omega(1)})$ and $Q(N,\ell)=O(N^{\frac{(\ell-2)(1-\alpha)}{\ell-1-\alpha}-\Omega(1)})$.

For encoding 3 we need at least 4 characters in the alphabet (at least two characters for encoding a number, one for padding and another to separate between partial encodings) to extend the result to $\ell=3$ we use encoding 2 (see Appendix~\ref{sec:app}). In this encoding we have  $\sigma_{\ell-1}$ as a padding character. Specifically, for each number $x_i$ we add $\sigma_{\ell-1}$ $(\ell +1)R^{\frac{1}{\ell-1}}$ times as padding. We can use $\sigma_{\ell-1}$ not just for padding, but also as our special character $\sigma^{*}$. Consequently, by choosing $c^{*}>2(\ell+1)R^{\frac{1}{\ell-1}}$ and following the same analysis as before we prove that the same \CLB{} holds also for alphabet of size $3$ (note that the number of $\sigma_{\ell-1}$ characters we have added only increase the size of the string by some constant factor).
\end{proof}

\bibliographystyle{plain} \bibliography{ms}

\newpage
\appendix
\section*{Appendix}

\section {Appendix: Summary of Applications}\label{sec:summary_of_applications}

\begin{center}
\begin{table}[!htbp]
    \begin{tabular}{ l l l l l l }
    \hline

    Problem & Type & \parbox[t]{1.8cm}{Preprocessing\\ Time}  & \parbox[t]{1.4cm}{Query\\ Time} & \parbox[t]{1.6cm}{Reporting\\ Time} & Remarks \\
    \noalign{\vskip 2mm}
    \hline \hline
     \parbox[t]{1.8cm}{Convolution\\ Witnesses} & \parbox[t]{1.1cm}{Reporting} & $\Omega(N^{2-\alpha})$ & $\Omega(N^{1-\alpha/2})$ & $\Omega(N^{\alpha/2-o(1)})$ & \parbox[t]{2.2cm}{$[$Theorem~\ref{thm-convolutiuon}$]$ \\ $0 < \alpha < 1$ } \\
     \noalign{\vskip 2mm}
    \toprule
    \parbox[t]{1.8cm}{Partial\\ Convolution} & \parbox[t]{1.1cm}{Decision} & $\Omega(N^{1-o(1)})$ & --- &     --- & \parbox[t]{2.2cm}{$[$Theorem~\ref{thm:partial-convolution}$]$\\ Sparse input: \\ $\#1 < N^{1-\Omega(1)}$; \\Sparse required\\ output: \\ $|S|<N^{1-\Omega(1)}$}\\
    \hline
    \parbox[t]{1.8cm}{Partial\\ Convolution\\ Indexing} & \parbox[t]{1.1cm}{Decision} & $\Omega(N^{2-o(1)})$ & $\Omega(N^{1-o(1)}) $ & --- &  \parbox[t]{2.2cm}{$[$Theorem~\ref{thm:partial-convolution-indexing}$]$\\ Sparse input: \\ $\#1 < N^{1-\Omega(1)}$; \\Sparse required\\ output: \\ $|S|<N^{1-\Omega(1)}$}\\
    \hline
    \parbox[t]{1.8cm}{Partial\\ Matrix\\ Multiplication} & \parbox[t]{1.1cm}{Decision} & $\Omega(N^{2-o(1)})$ & ---  & --- & \parbox[t]{2.2cm}{$[$Theorem~\ref{partialMatrix1}$]$\\ Sparse input: \\ $\#1 < N^{2-\Omega(1)}$; \\ Sparse required\\ output:\\ $|S|<N^{2-\Omega(1)}$}\\
    \hline
    \parbox[t]{1.8cm}{Partial\\ Matrix\\ Multiplication\\ Indexing}& \parbox[t]{1.1cm}{Decision} & $\Omega(SIZE(S))$ & $\Omega(N^{2-o(1)})$  & --- & \parbox[t]{2.2cm}{$[$Theorem~\ref{partialMatrix2}$]$\\ Sparse input: \\ $\#1 < N^{2-\Omega(1)}$; \\Sparse required\\ output: \\ $|S_i|<N^{2-\Omega(1)}$; \\ $SIZE(S)=$\\ $\sum_{i=1}^{k} |S_i|$}\\
    \noalign{\vskip 2mm}
    \hline \hline
    \noalign{\vskip 1mm}
    \parbox[t]{1.8cm}{Histogram\\ Reporting}& \parbox[t]{1.1cm}{Reporting} & \parbox[t]{2.6cm}{$\Omega(N^{2-\frac{2\gamma}{\ell+\gamma}-o(1)})$} & $\Omega(N^{1-\frac{\gamma}{\ell+\gamma}-o(1)})$  & \parbox[t]{2.35cm}{$\Omega(N^{\frac{\gamma \ell}{\ell+\gamma} - \frac{2\gamma}{\ell+\gamma}-o(1)})$} & \parbox[t]{2.2cm}{$[$Theorem~\ref{thm:histogram_indexing_clb}$]$\\ alphabet size:\\ $\ell \geq 2$;\\ $0 < \gamma < \ell$}\\
    \toprule
    \noalign{\vskip 1mm}
    \parbox[t]{1.8cm}{Histogram\\ Indexing}& \parbox[t]{1.1cm}{Decision} & \parbox[t]{2.6cm}{$\Omega(N^{2-\frac{2(1-\alpha)}{\ell-1-\alpha}-o(1)})$} & \parbox[t]{2.7cm}{$\Omega(N^{1-\frac{1+\alpha(\ell-3)}{\ell-1-\alpha}-o(1)})$} & --- & \parbox[t]{2.2cm}{$[$Theorem~\ref{thm:generalized_histogram_indexing}$]$\\ alphabet size:\\ $\ell > 2$;\\ $0 \leq \alpha \leq 1$}\\
    \hline
    \end{tabular}
    \caption{Summary of \CLB{s} proved in this paper. In this table $N$ is the size of vectors, strings and the dimension of matrices. $\#1$ refers to the number of ones in the input. The rows in this table are interpreted to mean that there is no data structure that beats these preprocessing, query, and reporting (if exists) complexities at the same time. For partial convolution and matrix multiplication the \CLB{} on the preprocessing time should be interpreted as a \CLB{} on the total running time as these are offline problems.}
    \label{table:summary_of_applications}
    \end{table}
\end{center}

\newpage

\begin{center}
\begin{table}[h]
    \begin{tabular}{ l l l l }
    \hline

     & Amir et al.~\cite{ACLL14} & This Paper & Remarks\\
    \hline \hline
    \parbox[t]{5cm}{\CLB{} for preprocessing time} & $\Omega(N^{2-\frac{2}{\ell-1}-o(1)})$ & $\Omega(N^{2-\frac{2(1-\alpha)}{\ell-1-\alpha}-o(1)})$ & \parbox[t]{2.2cm}{alphabet size:\\ $\ell>2$;\\ $0 \leq \alpha \leq 1$}\\
    \hline
    \parbox[t]{5cm}{\CLB{} for query time} & $\Omega(N^{1-\frac{1}{\ell-1}-o(1)})$ & $\Omega(N^{1-\frac{1+\alpha(\ell-3)}{\ell-1-\alpha}-o(1)})$ & \parbox[t]{2.2cm}{alphabet size:\\ $\ell>2$;\\ $0 \leq \alpha \leq 1$}\\
    \hline
    \parbox[t]{5cm}{Hardness assumption} & \parbox[t]{2.5cm}{Strong 3SUM\\ conjecture} & \parbox[t]{2.5cm}{Standard 3SUM\\ conjecture} & \\
    \hline
    \parbox[t]{5cm}{\CLB{s} with full tradeoff between preprocessing and query time for every alphabet size.} & No & Yes & \\
    \hline
    \parbox[t]{5cm}{Separation between binary alphabet and alphabet of size>2 for subquadratic preprocessing time and $\tilde{O}(1)$ query time.} & No & Yes & \\
    \hline
    \parbox[t]{5cm}{\CLB{s} for the reporting version of histogram indexing.} & No & Yes & \\
    \hline
    \hline
    \end{tabular}
    \caption{Comparison of the results in~\cite{ACLL14} and this paper regarding histogram indexing. In this table $N$ is the size of the input string. In each column the first two rows in this table are interpreted to mean that there is no data structure that beats these preprocessing and query complexities at the same time.}
    \label{table:histogram_indexing_comparison}
    \end{table}
\end{center} 

\newpage

\section {Appendix: Other Encodings for Histogram Indexing}\label{sec:app}

We describe two additional encodings of an instance of \DiffConvolutionThreeSUM{} to a string, which then can be used to solve the instance using histogram reporting and histogram indexing. Surprisingly, although these encodings have somewhat different flavour they all give the same \CLB{s} as encoding 1 given in Section~\ref{ss:C3S_to_HR}.

\textbf{Encoding 2}. The idea behind this encoding is to use encoding 1 on alphabet of size $\ell -1$ (instead of $\ell$), and then use the additional character as a padding tool so that the lengths of the partial encodings of each integer will be exactly $\ell R^{1/(\ell-1)}$.
For each $h(x_i)$ we first create both $enc_{\ell-1}(h(x_i))$ and $\overline{enc_{\ell-1}}(h(x_i))$ as defined in encoding 1. We denote by $|enc|$ the length of some encoding $enc$. We first encode $h(x_i)$ by $enc2_{\ell}(h(x_i))=(\sigma_{\ell-1})^{\ell R^{1/(\ell-1)}-|enc_{\ell-1}(h(x_i))|} \cdot enc_{\ell-1}(h(x_i))$. Then, we encode it by $\overline{enc2_{\ell}}(h(x_i)) = \sigma_{\ell-1}^{\ell R^{1/(l-1)}-|\overline{enc_{\ell-1}}(h(x_i))|} \cdot \overline{enc_{\ell-1}}(h(x_i))$. Finally, the full encoding, denoted by $ENC2(h(x_i))$, is the concatenation of the two encodings.

Using this encoding we create a string $S$ by concatenating $ENC2(h(x_k))$ for all $1 \leq k \leq n$. We also construct query pattern for each $h(x_k)$, which is a Parikh vector $v_k$ such that the $r$th element of this vector is $\bar p_{r,h(x_k)} + R^{1/{(\ell-1)}} \cdot (k-1)$ for $0 \leq r \leq \ell-2$. For $r=\ell-1$ the value of the $r$th element will be $\ell R^{1/(\ell-1)}-|\overline{enc_{\ell-1}}(h(x_k))|+\ell R^{1/(\ell-1)}-|enc_{\ell-1}(h(x_k))| + (\ell+1) R^{1/{(\ell-1)}} \cdot (k-1)$. We can also define a carry set $V_k$ in an analog way to the carry set defined in Section~\ref{ss:C3S_to_HR}. This set contains all $2^{\ell-2}$ vectors obtained from $v_k$ in $O(\ell 2^{\ell})$ time in order to handle carry issues. It can be shown in a similar manner to Lemma~\ref{correctness_lemma} that if there exist a pair $x_i, x_j$ such that $x_k=x_j-x_i$ and $k=j-i$, then $S_{i,j}$ had some $v \in V_k$ as its Parikh vector.

Regarding the length of the encoded string $S$ and the expected number of false-positives induced by the reduction we have the following lemma:

\begin{mylemma}
The string $S$ created by the reduction from \DiffConvolutionThreeSUM{} to histogram reporting using encoding 2 is of size $N=O(\ell R^{\frac{1}{\ell-1}}n)$. The number of expected false positives when considering a query vector $v \in V_k$ is $O(N/R)$.
\end{mylemma}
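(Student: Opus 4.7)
The plan is to handle the two claims of the lemma in order, since they have quite different flavours.

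For the length bound, the argument is immediate from the encoding rules. Each partial encoding $enc2_\ell(h(x_i))$ and $\overline{enc2_\ell}(h(x_i))$ is padded with $\sigma_{\ell-1}$ so that its length equals $\ell R^{1/(\ell-1)}$, hence every full encoding $ENC2(h(x_i))$ has length exactly $2\ell R^{1/(\ell-1)}$, and concatenating $n$ of them yields $N = O(\ell R^{1/(\ell-1)} n)$.

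For the false-positive bound, I would fix $v \in V_k$, let $m$ denote the sum of its coordinates, and bound the expected number of length-$m$ substrings of $S$ whose Parikh vector equals $v$. The key structural observation, which is new relative to encoding 1, is that in a full encoding $ENC2(h(x))$ we have $p_{r,h(x)} + \bar p_{r,h(x)} = R^{1/(\ell-1)}$, so $ENC2(h(x))$ contains exactly $R^{1/(\ell-1)}$ copies of each $\sigma_r$ with $r \leq \ell-2$ regardless of $h(x)$. Consequently the count of each non-padding $\sigma_r$ in any candidate substring decomposes as a deterministic contribution from the fully covered $ENC2$ blocks, plus a contribution from the at most two boundary partial encodings at its endpoints; the latter depends only on specific digits, in base $R^{1/(\ell-1)}$, of at most two hashed values $h(x_a)$ and $h(x_b)$ that are determined by $\alpha$ and $m$.

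Requiring all $\ell-1$ non-padding counts to match the coordinates of $v$ therefore produces $\ell-1$ equations, each pinning one digit of $h(x_a)$ or $h(x_b)$ to a specific value; the count of $\sigma_{\ell-1}$ is then automatically correct via the sum-of-counts constraint. Since $h$ is essentially uniform over $[R]$, each digit vector is uniform over $[R^{1/(\ell-1)}]^{\ell-1}$, so all $\ell-1$ digit equalities are simultaneously satisfied with probability $O(R^{-(\ell-1)/(\ell-1)}) = O(1/R)$. Summing this per-position bound over the at most $N$ choices of the starting position $\alpha$ gives the claimed $O(N/R)$ bound. The point is that the dedicated padding character $\sigma_{\ell-1}$ absorbs the total-length constraint, so none of the $\ell-1$ digit checks is consumed by it, which is precisely the source of the $R^{1/\ell}$ gain over encoding 1.

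The main obstacle will be the alignment case analysis: depending on whether $\alpha$ lies in the padding region or in the data region of an $\overline{enc2}$ or $enc2$ half, the two boundary pieces contribute differently to each non-padding count and involve different subsets of the digits of $h(x_a)$ and $h(x_b)$. Verifying that in every alignment case the $\ell-1$ matching equations isolate $\ell-1$ independent digit equalities, rather than collapsing to fewer (which would weaken the bound), is the one calculation that requires care, but the rigidity enforced by the fixed partial-encoding length $\ell R^{1/(\ell-1)}$ is exactly what ensures that no such degeneracy occurs.
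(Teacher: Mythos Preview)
Your length bound is correct and matches the paper's argument. For the false-positive bound you take a genuinely different route, and yours is the harder one. You fix the query vector $v$ and treat the substring counts as random via the boundary values $h(x_a),h(x_b)$; this is what forces the alignment case analysis you correctly flag as the main obstacle. Note that your assertion that each of the $\ell-1$ equations ``pins one digit'' is an oversimplification in the misaligned cases: since the padding length inside $enc2_\ell(h(x_a))$ itself depends on $h(x_a)$, even which region the fixed offset $\alpha$ lands in is random, and the boundary contribution to $\sigma_r$ is not a single digit but a more complicated function. The per-position $O(1/R)$ bound is still recoverable, but not as cleanly as you suggest.

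The paper sidesteps all of this by reversing the roles: it conditions on everything except $h(x_k)$, so that for any fixed position $\alpha$ and length $m$ the non-padding counts $f(\sigma_j,\alpha,m,k)$ of the substring are fixed, and one only asks for $\bar p_{j,h(x_k)} = f(\sigma_j,\alpha,m,k)$ for each $j\le \ell-2$. Since $h(x_k)$ is (approximately) uniform on $[R]$, its $\ell-1$ base-$R^{1/(\ell-1)}$ digits are independent and uniform, giving probability $\approx 1/R$ per position with no case analysis. Your key observation, that the padding character $\sigma_{\ell-1}$ absorbs the total-length constraint so that all $\ell-1$ digit checks survive, is exactly right and is also what drives the paper's improvement from $1/R^{1-1/\ell}$ to $1/R$; the paper just reaches the per-position bound more directly by randomizing the query side rather than the text side.
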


\begin{proof}
Each $x_k$ is embedded by the hash function $h$ to a number in $[R]$. Then, it is divided to $\ell-1$ parts (that represent $h(x_k)$ in base $R^{\frac{1}{\ell-1}}$) each of them encoded by unary encoding. Therefore, each of the encoded parts is of size at most $R^{\frac{1}{\ell-1}}$, while the last letter of the alphabet used for padding completes it exactly to $R^{\frac{1}{\ell-1}}$. The same calculation is true to both $enc2_{\ell}(h(x_i))$ and $\overline{enc2_{\ell}}(h(x_i))$. That being said, the total length of the string using encoding 2 is $N=O(\ell R^{\frac{1}{\ell-1}}n)$.

Now, we consider the expected number of false positives for some query vector $v \in V_k$. In a similar manner to what we have shown for encoding 1 (see Lemma~\ref{lem:histogram_fp}), we are interested in the probability $\Pr[\bigwedge_{j=0}^{l-2} (\bar p_{j,h(x_k)}=f(\sigma_j,i,m,k))]$, where $f(\sigma_j,i,m,k)=f(\sigma_j, i, m) - (k-1) R^{1/(\ell-1)}$ ($f(\sigma_j,i,m)$ represents the number of occurrences of $\sigma_j$ in a substring of $S$ of length $m$ starting at location $i$. $f(\sigma_j,i,m,k)$ represents the same number without taking into account the occurrences of $\sigma_j$ in the $k-1$ fully encoded integers in that substring).  The properties of the (almost) linear hash function guarantee that the value of $h(x_k)$ is expected to be uniformly distributed in its range. Therefore, we have $\Pr[\bar p_{j,h(x_k)}=f(\sigma_j,i,m,k)] = 1/R^{\frac{1}{\ell-1}}$. We conclude that $\Pr[\bigwedge_{j=0}^{l-2} (\bar p_{j,h(x_k)}=f(\sigma_j,i,m,k))]\approx 1/R$ as the number of occurrences of all letters (except the one used for padding) is (almost) independently and uniformly distributed. By this, we have that the expected number of false-positives for a query vector $v \in V_k$ is $O(N/R)$ .
\end{proof}

We can plug $N$ and the expected number of false-positives, which we have calculated in the previous lemma, in the proof of Theorem~\ref{thm:histogram_indexing_clb}. Choosing $R=n^{\gamma '}$ and finally substituting $\gamma ' = \gamma (\ell-1)/\ell$ we obtain the same \CLB{s} that appear in Theorem~\ref{thm:histogram_indexing_clb}. Moreover, the same \CLB{s} as in Lemma~\ref{lem:histogram_indexing} can be achieved using this encoding, if we choose $R$ to be $c_2n^{\frac{l}{\ell-1}}$ for sufficiently large constant $c_2$ so the expected number of false positives we have is strictly less than $1/2$.

\textbf{Encoding 3}. In this encoding we first create the same string $S$ as in encoding 2 using $\ell-1$ characters (instead of $\ell$, and then add special characters to the string. We use $\sigma_{\ell-1}$ as the special character (note that this time  $\sigma_{\ell-1}$ was not used by encoding 2, as it make use only of the first $\ell-1$ characters in the alphabet). We add $\sigma_{\ell-1}$ at every position $j \ell R^{1/\ell-1}$ in the string $S$ for $0 \leq j \leq 2n$. That is, we use $\sigma_{\ell-1}$ to mark the border of the partial encodings of each number $x_i$ in the string created by encoding 2.

We construct a query for each $h(x_k)$, which is a Parikh vector $v_k$ such that the $r$th ($0 \leq r \leq \ell-2$) element of this vector is $\bar p_{r,h(x_k)} + R^{1/{(\ell - 2)}} \cdot (k-1)$. For $r=\ell-2$ the value of the $r$th element will be $(\ell-1) R^{1/(\ell-2)}-|\overline{enc_{l-2}}(h(x_k))|+(\ell-1) R^{1/(\ell-2)}-|enc_{\ell-2}(h(x_k))|+ \ell R^{1/{(\ell-2)}} \cdot (k-1)$. Finally, for $r=\ell-1$ the value of the $r$th element will be $2k+1$. We can also define a carry set $V_k$ in an analog way to the carry set defined in Section~\ref{ss:C3S_to_HR}. This set contains all $2^{\ell-3}$ vectors obtained from $v_k$ in $O(\ell 2^{\ell})$ time for handling carry issues. As before, It can be shown that if there exist a pair $x_i, x_j$ such that $x_k=x_j-x_i$ and $k=j-i$, then $S_{i,j}$ had some $v \in V_k$ as its Parikh vector.

Regarding the length of the encoded string $S$ and the expected number of false-positives induced by the reduction we have the following lemma:

\begin{mylemma}
The string $S$ created by the reduction from \DiffConvolutionThreeSUM{} to histogram reporting using encoding 3 is of size $N=O(\ell R^{\frac{1}{\ell-2}}n)$. The number of expected false positives when considering a query vector $v \in V_k$ is $O(n/R)$.
\end{mylemma}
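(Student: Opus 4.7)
The length bound follows directly from the construction. Encoding~2 applied with an alphabet of size $\ell-1$ assigns each $h(x_k)$ a full encoding $ENC2(h(x_k))$ of length $2(\ell-1)R^{1/(\ell-2)}$, so the non-separator part of $S$ has length $2n(\ell-1)R^{1/(\ell-2)}$. Inserting a separator $\sigma_{\ell-1}$ at each of the $O(n)$ partial-encoding boundaries adds only $O(n)$ further characters. Hence $N = O(\ell R^{1/(\ell-2)} n)$.

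For the false-positive bound, my plan is to exploit the rigid alignment imposed by the separator character. Let $L = (\ell-1)R^{1/(\ell-2)}$ denote the length of a single partial encoding in encoding~2 with alphabet size $\ell-1$, and let $D = L + 1$ be the distance between two consecutive separator positions in $S$. A fixed query $v \in V_k$ determines a unique substring length $m$, and by counting partial encodings and separators inside an honest $S_{i,j}$ one verifies that $m = 2kD + 1$. A short direct calculation then shows that a substring of length $m = 2kD+1$ starting at position $s$ contains exactly $2k+1$ separators if and only if $s \equiv 0 \pmod{D}$, and contains only $2k$ separators otherwise. Thus out of the $O(N) = O(nD)$ possible starting positions, only $O(N/D) = O(n)$ are even candidates for matching the query's separator count.

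At each of these $O(n)$ separator-aligned starting positions, the substring decomposes as a concatenation of $2k$ complete partial encodings separated by separator characters. The probability (over the random choice of $h$) that this substring's counts for the $\ell-2$ digit characters all match the values prescribed by $v$ is $O(1/R)$: this is exactly the analysis used for encoding~2, since the base-$R^{1/(\ell-2)}$ digits of $h(\cdot)$ are (almost) independent and uniform in $[R^{1/(\ell-2)}]$, and the padding-character count is automatically pinned down by the (already forced) substring length. Summing by linearity of expectation over the $O(n)$ aligned starting positions yields $O(n/R)$ expected false positives per query vector $v \in V_k$.

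The main obstacle is the arithmetic that produces the relation $m \equiv 1 \pmod{D}$: one has to track carefully how many partial encodings, inner separators, and padding characters appear in $S_{i,j}$ and verify consistency with the prescribed separator count $v_{\sigma_{\ell-1}} = 2k+1$. Once this alignment identity is established, the improvement from $O(N/R) = O(nD/R)$ (which is what encoding~2 alone would give) to $O(n/R)$ is a clean $1/D$-savings coming from the forced separator alignment, and the remaining digit-match probability analysis is inherited essentially verbatim from the proof of the analogous lemma for encoding~2.
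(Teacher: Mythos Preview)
Your argument is correct and follows essentially the same route as the paper's proof. Both proofs hinge on the observation that the separator character $\sigma_{\ell-1}$ forces any matching substring to be aligned to partial-encoding boundaries, cutting the number of candidate start positions from $O(N)$ down to $O(n)$; the remaining $O(1/R)$ per-candidate false-positive probability is then inherited from the encoding~2 analysis. The only difference is one of presentation: you explicitly work out the congruence $s\equiv 0 \pmod D$ from the separator count, whereas the paper simply asserts (invoking both the padding character $\sigma_{\ell-2}$ and the separator $\sigma_{\ell-1}$) that every reported match must begin and end at the correct partial-encoding boundaries, so that all remaining false positives are hash collisions, of which there are $O(n/R)$.
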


\begin{proof}
Each $x_i$ is embedded by the hash function $h$ to a number in $[R]$. Then, it is divided to $\ell-2$ parts (that represent $h(x_k)$ in base $R^{\frac{1}{\ell-2}}$) each of them encoded by unary encoding. Therefore, each of the encoded parts is of size at most $R^{\frac{1}{\ell -2}}$, while $\sigma_{\ell- 2}$, that is used for padding, completes each of them exactly to $R^{\frac{1}{\ell-2}}$. The character $\sigma_{\ell-1}$ occurs only $O(n)$ times. That being said, the total length of the string using encoding 3 is of size $N=O(\ell R^{\frac{1}{\ell -2}}n)$.

Now, we consider the expected number of false positives for some query vector $v \in V_k$. The use of the padding character $\sigma_{\ell- 2}$ and the special character $\sigma_{\ell-1}$ guarantees that every reported match starts exactly between $\overline{enc2_{\ell}}(h(x_i))$ and $enc2_{\ell}(h(x_i))$ and ends exactly between $\overline{enc2_{\ell}}(h(x_j))$ and $enc2_{\ell}(h(x_j))$ for some $i$ and $j$ such that $k=j-i$. Therefore, no false-positives are created by encodings errors. That being said, the expected number of false-positives for a query vector $v \in V_k$ is exactly the number of false-positives introduced by the use of the hash function which is $O(n/R)$.
\end{proof}

We can plug $N$ and the expected number of false-positives, which we have calculated in the previous lemma, in the proof of Theorem~\ref{thm:histogram_indexing_clb}. Choosing $R=n^{\gamma ''}$ and finally substituting $\gamma '' = \gamma (\ell-2)/\ell$ we obtain the same \CLB{s} that appear in Theorem~\ref{thm:histogram_indexing_clb}. Moreover, the same \CLB{s} as in Lemma~\ref{lem:histogram_indexing} can be achieved using this encoding, if we choose $R$ to be $c_3n$ for sufficiently large constant $c_3$ so the expected number of false positives we have is strictly less than $1/2$.

We note that the encoding 2 requires one extra padding character and encoding 3 requires another special character, so for them Theorem~\ref{thm:histogram_indexing_clb} and Lemma~\ref{lem:histogram_indexing} hold from slightly larger $\ell$.

\end{document}